\theoremstyle{plain}
\theoremstyle{plain}\newtheorem{coin}[thm]{Coinduction Principle}
\renewcommand{\delta}{}
\newcommand{\mg}{\mbox{ $\!\rhd$}}
\newcommand{\mn}{\mbox{ $\!\lhd$}} 
\newcommand{\mge}{\mbox{ $\!\unrhd$}} 
\newcommand{\nmge}{\mbox{ \begin{picture}(4,4)
\put(0,-2){\makebox(4,2){$-$}}
\put(0,3){\makebox(4,2){$  \blacktriangleright$}}
\end{picture}
}} 
\newcommand{\bow}{\mbox{ $\!\bowtie\,$}}
\newcommand{\la}{\mbox{ $\lambda\hspace*{-1.3ex}\lambda$}}
 \newcommand{\reguno}[2]
  {
  $  \ \ \frac{\textstyle #1}{\textstyle #2} $
  }
\title[Conway Games, algebraically and coalgebraically]{Conway Games, algebraically and coalgebraically}
\thanks{Work supported by
 the FIRB Project RBIN04M8S8  (funded by MIUR), and
by the ESF Research Networking Programme GAMES}
\author[F.~Honsell]{Furio Honsell}	
\author[M.~Lenisa]{Marina Lenisa}	
\address{Dipartimento di Matematica e Informatica,
  Universit\`a di Udine\\
  via delle Scienze 206, Udine, Italy.}	
\email{furio.honsell@comune.udine.it, marina.lenisa@uniud.it}  
\keywords{Conway games, coalgebraic  games, non-losing strategies, equivalences on games, canonical games, generalized Grundy function.}
\subjclass{F.3.2, F.4.1}
\def\doi{7 (3:08) 2011}
\begin{document}

\begin{abstract}
Using \emph{coalgebraic methods}, we extend Conway's  theory of games to  possibly  \emph{non-terminating},  \emph{i.e.} \emph{non-wellfounded games}  (\emph{hypergames}).
We take the view that a play which goes on forever is a \emph{draw}, and hence rather than focussing on 
winning strategies, we focus on \emph{non-losing strategies}.
Hypergames are a fruitful metaphor for non-terminating processes, \emph{Conway's sum}  being similar to \emph{shuffling}.
We develop a theory of hypergames, which extends in a non-trivial way Conway's theory; in particular, we generalize Conway's results  on game \emph{determinacy} and \emph{characterization} of strategies.
Hypergames have a  rather interesting theory, already in the case of
\emph{impartial hypergames}, for which we give a \emph{compositional semantics}, in terms of  a \emph{generalized Grundy-Sprague function} and a system of generalized \emph{Nim games}. 
\emph{Equivalences} and \emph{congruences} on games and hypergames are discussed.
We indicate a number of intriguing directions for future work.
We briefly compare hypergames with other notions of games used in computer science.
\end{abstract} 

\maketitle

\section{Introduction}
This paper arises from our attempt of understanding \emph{games} using  very  foundationally  unbiased tools, namely \emph{algebraic} and \emph{coalgebraic methods}. Of course, games arising in real life are extremely varied. They exhibit indeed a perfect example of a  \emph{family resemblance}
in the sense of Wittgenstein. Furthermore, in the past decades people have gone into the habit of describing, more or less conveniently,  an extremely wide gamut of interactions and other dynamic phenomena using game-based  \emph{metaphors}. 
To make matters even more complex when we speak about games many related  \emph{concepts} and  \emph{notions} come about, \emph{e.g.}  \emph{move}, \emph{position},  \emph{play},  \emph{turn}, \emph{winning condition},  \emph{payoff function}, \emph{tactics}, \emph{strategy}. None of these has a universal unique meaning, and according to the various presentations, these concepts are often blurred, sometimes taken as primitive sometimes explained and reduced to one another. And there are many more  \emph{properties}, which need to be specified more or less informally before  actually having pinned down the kind of game one is interested in; \emph{e.g.}  \emph{perfect knowledge}, \emph{zero-sum}, \emph{chance}, \emph{number} of players, \emph{finiteness}, \emph{determinacy}.

We think that Conway's approach to games provides a very elementary and sufficiently abstract  notion of game, which nonetheless is
significantly structured, because of the special r\^ole that \emph{sums} of games have in Conway's theory. 
And algebraic-coalgebraic methods provide a convenient conceptual setting for addressing this key concept. 

For these reasons, in this paper,
 we focus on Conway games \cite{Con76}, that is  \emph{combinatorial games}, namely no chance
2-player games,  the two players being conventionally called \emph{Left} (L)
and \emph{Right} (R). Such games have \emph{positions}, and in any position there
are rules which restrict L  to move to any of certain positions, called the 
\emph{Left positions},  while R may similarly move only to certain positions, called the  
\emph{Right  positions}. L and R move in turn, and 
the game is of \emph{perfect knowledge}, \emph{i.e.} all positions are public to both players. The game ends when one of the players has no move, the other player being the winner,  the \emph{payoff} function  yielding only 0 or 1.
 Many games played on boards
are combinatorial games, \emph{e.g.}  \emph{Nim},  \emph{Domineering},  \emph{Go}, 
\emph{Chess}. Games, like Nim,
where for every position both players have the same set of moves, are called \emph{impartial}. More general games, 
like
Domineering,  Go, 
Chess, where L and R 
may have different sets of moves are called \emph{partizan}.

Many other notions of games such as those which arise either in Set Theory, or in Automata Theory, or in Semantics of Programming Languages can be conveniently encoded in coalgebraic format, see \cite{HLR11a}. 

Here are some of the most frequently asked questions and corresponding arguments.

Which concept should be taken as primitive: moves or positions? We think that one of the strong points of Conway's format is that of focusing on positions. 
Actually, in Conway's theory, games, positions and moves essentially coincide. 
This approach is more general than others. 
Of course, the situation is much  nicer  when positions can be inductively  defined as finite sequences of more elementary tokens, namely moves, and hence positions can
be viewed as special kinds of plays, or vice versa.

Why do we need both a notion of  Player L  and  Player R, which in turn can be either  Player I, the first player, or  Player II, the respondent? Conway's choice arises from the fact that, as we pointed out above,  a key ingredient in his theory is that of \emph{sum} of games. 
In Conway's approach there is a rigid \emph{alternation} of moves between the two players in a sum game, but if we focus on a specific \emph{component} of a sum game this might no longer be so: the notions of first player and respondent in a subgame can change many times, and even a strict alternation of moves breaks down. 
This is in general why one can only refer meaningfully to player  L or R, notwithstanding the fact that any of the two can make an opening move in the game.

Of course many games where alternation of players is not rigid escape a direct encoding in Conway's games. However, in some cases these can be easily encoded in a coalgebraic format, as in some presentations of automata games, where  alternation of players is not assumed, see~\cite{HLR11a}. Sometimes,  encodings are
more roundabout, as in some card games, where the alternation between L and R is determined by a precise criterion, or as in \emph{normal form} games, arising in economic theory,  and in \emph{morra}, where the two players play \emph{simultaneously}.

Conway's approach clearly does not address the issue of a payoff function. 
Furthermore, since the winning condition is the absence of possible moves, \emph{i.e.} no next positions to reach for the player whose turn is on, some encoding is necessary to account for many games where winning or losing depends on the sequence of positions, or moves. There are many games of this kind. Some are rather silly, such as "My father is richer than yours`` where the two players in turn call a number and who call the largest is the winner. Some are less silly such as the one where the two players call two numbers in turn and the first player wins if the sum is equal to, say, 1 mod 4. Some are extremely important, such as those which arise in Set Theory, in connection with the Axiom of Determinacy, or in Automata Theory, where we have to deal, however, with infinite plays.

This is the last item we address: finite or infinite plays. One of the main contributions of this paper is the study of non-terminating (non-wellfounded) games, 
\emph{i.e.} games  on which plays are potentially infinite.  Especially in view of applications, potentially infinite interactions are even more important than finite ones.
The importance of games for Computer Science comes from the fact that they capture in a natural way the notion of \emph{interaction}. Non-wellfounded games model in a faithful way \emph{reactive processes} (operating systems, controllers, communication protocols, 
etc.), that are characterised by their 
\emph{non-terminating} behaviour and perpetual \emph{interaction} with their environment.

We take the ``natural'' view that all infinite plays are \emph{draws}: on infinite plays, apparently, there are no losers, because each player can
respond indefinitely. This naturally extends the winning condition on Conway's games.
In this paper, 
in fact, we shall address only games were all infinite plays are draws,  \emph{i.e.} 
 ``free'' games in Conway's terminology; 
we will not deal with  games where certain infinite plays are set to be draws and others to be winning for one of the two players, namely ``mixed'' games,  or with
``fixed'' games, where
infinite plays are all winning for one of the two players, see \emph{e.g.} \cite{BCG82} for more details. 
Since we take non terminating plays to be draws,  the notion of \emph{winning strategy} on hypergames has to be replaced by that of 
\emph{non-losing strategy}.

Combinatorial Game Theory started at the beginning of 1900 with the study of the famous impartial game Nim, which became also a movie star in the 60s, in the film 
``L'ann\'ee derni\`ere \`a Marienbad" by Alain Resnais and 
Alain Robbe-Grillet.
 In the 1930s,
Sprague and Grundy  generalized the results on Nim to all impartial  terminating  (\emph{i.e.} well-founded)  games, \cite{Gru39,Spra35}. 
In the 1960s, Berlekamp, Conway, Guy introduced the theory of partizan games, which  first appeared  in the
 book  ``On Numbers and Games''  \cite{Con76}. In \cite{Con76}, the theory of games is connected to the theory of
\emph{surreal numbers}.

In \cite{Con76}, the author focussed essentially  on   terminating games, \emph{i.e.} games on which all plays are finite.
Non-terminating games
were intentionally neglected  as ill-formed or trivial  games,  not interesting for ``busy  men'', and their discussion was confined to a single chapter, inspired
by~\cite{Smi66}.
Non-wellfounded games have been later considered in \cite{BCG82}, Chapters 11-12, were free, fixed and mixed games have been discussed. 
In these chapters,
the authors also consider an interesting generalization 
of the Grundy-Sprague theory, originally due to Smith  \cite{Smi66},
which provides natural notions of \emph{canonical forms}. However, not much attention has yet been paid to generalize the results in \cite{Con76} to non-terminating free games.

Possibly non-terminating games, which we call \emph{hypergames}, can be naturally defined as a \emph{final coalgebra} of \emph{non-wellfounded sets} (\emph{hypersets}), which are 
the sets of  a universe of Zermelo-Fraenkel satisfying a suitable
Antifoundation Axiom, see \cite{FH83,Acz88}. 
This definition generalizes directly the original one in~\cite{Con76}, where games are taken to be \emph{well-founded sets}. 
Once hypergames are defined as a final coalgebra,  
 operations on Conway's games, such as \emph{disjunctive sum},  can be  naturally extended to hypergames, by
defining them as \emph{final morphisms} into the coalgebra of hypergames. 

Our approach is different 
from other approaches in the literature, where games are defined as \emph{graphs} or \emph{pointed graphs}. Viewing games as sets (or points of 
a final coalgebra) allows us to abstract away from superficial features of positions and to reason directly up-to graph bisimilarity. 
Our approach is justified 
by the fact that all important properties of games in Conway's setting, \emph{e.g.} existence of winning/non-losing  strategies, are invariant under bisimilarity.

Our theory of hypergames generalizes the original  theory on Conway's games of 
\cite{Con76} rather smoothly, but significantly.
Two important results in our paper are  \emph{Determinacy} and a  \emph{Characterization Theorem} of non-losing strategies on hypergames. 
The latter requires (a non-trivial) generalization  of Conway's partial order relation on games to
hypergames.   

On top of bisimilarity, various notions of \emph{equivalences} and \emph{congruences} on games and hypergames, arising by looking at strategies, are studied in this paper. In particular, 
we investigate various characterizations of the
\emph{greatest congruence} w.r.t. sum, refining the \emph{equideterminacy} relation.
One interesting result of our investigation is that this congruence coincides on Conway's games with the equivalence induced by Conway's partial order, and with
the (extended) \emph{Grundy semantics} on impartial (hyper)games. 

For the  class of impartial hypergames,
we revisit and extend  in a coalgebraic setting the theory of Grundy-Sprague and Smith  based on the canonical \emph{Nim games}, by introducing
suitable \emph{canonical $\infty$-hypergames}. We show that such canonical hypergames 
can be construed in our setting as a truly 
 \emph{compositional semantics} of impartial hypergames, \emph{fully abstract} w.r.t. the greatest behavioral congruence. Such semantics is given via a suitable \emph{generalized Grundy function}, which we define on the
whole class of hypergames. Our approach extends other approaches in the literature, where the generalized Grundy function is defined only on certain classes of finite 
cyclic graphs, see \emph{e.g.} \cite{FR01}. 
\medskip

This paper is a revised and extended version of \cite{HL09}. The parts on game and hypergame equivalences, and semantics are new. The section on impartial hypergames has been  substantially revised and extended.
 
\subsubsection*{Summary.}  In Section~\ref{tcg}, we provide a  presentation of Conway's games as an initial algebra, and 
we introduce  hypergames as a final coalgebra for the same functor. We discuss determinacy of games, and we prove 
characterization
theorems for winning and non-losing strategies. In Section~\ref{gcon},  we give coalgebraic definitions of sum and negation on hypergames, which extend
original Conway's definitions. In  Section~\ref{ges}, we study equivalences on games, in particular we introduce and study the notion of contextual equivalence. 
In Section~\ref{img}, we investigate  impartial hypergames. In particular, we extend the Grundy-Sprague theory, and we define a generalized Grundy function, which 
 gives a compositional semantics for hypergames, fully abstract w.r.t. contextual equivalence.  Comparison with related
games and directions for future work appear in Section~\ref{final}.

\subsubsection*{Acknowledgements.} We would like to thank the anonymous referees for many useful comments, which helped in improving the paper.  

\section{From Conway's Games to Hypergames}\label{tcg}
In this section, first we present Conway's games as an \emph{initial algebra} for a suitable functor, then we introduce
hypergames as a \emph{final coalgebra} for the same functor. We take infinite plays to be \emph{draws}, and hence
Conway's notion of \emph{winning strategy} has to be generalized by that of \emph{non-losing strategy}. In this 
section, we present fundamental results on hypergames, generalizing corresponding results on Conway's games.
In particular, we discuss \emph{determinacy} of games, \emph{i.e.} existence of winning/non-losing strategies, and we prove \emph{characterization
theorems} for winning and non-losing strategies, which extend in a non-trivial way corresponding results on Conway's games.
\medskip

We recall that Conway games are 2-player games, the two players are called \emph{Left} (L)
and \emph{Right} (R). Such games have \emph{positions}, and in any position $x$ there
are rules which restrict Left to move to any of certain positions, called the 
\emph{Left positions} of $x$, while Right may similarly move only to certain positions,  called the 
\emph{Right positions} of $x$.  Since we are interested only in the abstract structure of games,
we can regard any position $x$ as being completely determined by its Left and Right options, and
we shall use the notation $x= (  X^L, X^R )$,  
where $X^L, X^R$ denote sets of positions.
Games are identified with their initial positions, and they can be represented as the tree of all positions generating from the initial one. 
Left and Right  move in turn, and the game 
ends when one of the two players does not have any option.  All games are terminating, \emph{i.e.} infinite sequences of moves cannot arise. However, there can be  possibly infinite moves at any position.
\medskip

\begin{rem}
Contrary to other notions of games, where only player I (the player who starts the game) and  player II  are considered, Conway distinguishes also
between L an R.
Both the case where L starts the game, \emph{i.e.} he acts as player I, while R acts as player II, and the case where L acts as player II and R as player I are considered. This extra complexity allows for a  definition of the \emph{sum operation on games}, which is central to the theory of Conway games, and  which is such that the alternance of player I and player II  can break in any single component. Thus the need of considering, at each step, all the possible moves of both  L and R on the games where L and R have different sets of moves.
\end{rem}

Conway's games can be viewed as an \emph{initial algebra} of
a suitable functor.   Although such games are well-founded,  in view of extensions to non-wellfounded games, 
we work in the category  $\mbox{Class}^*$ of classes of  possibly non-wellfounded sets (hypersets) and functional 
classes.\footnote{Non-wellfounded sets are
the sets of  a universe of Zermelo-Fraenkel satisfying the
Antifoundation Axiom, see \cite{FH83,Acz88}. Alternatively to classes of sets, we could consider an \emph{inaccessible cardinal} $\kappa$, and the category whose objects are the sets with \emph{hereditary cardinal} less than ${\kappa}$, and whose morphisms are the functions with hereditarily cardinal less than ${\kappa}$.  We recall that the \emph{hereditary cardinal}  of a set is the cardinality of its transitive closure, namely  the cardinality of the downward membership tree which has the given set at the root.}

\begin{defi}[Conway Games]  The set of \emph{Conway's Games} ${\mathcal {G}}$ is inductively defined by
\begin{iteMize}{$\bullet$}
\item the empty game $(\emptyset , \emptyset)\in {\mathcal G}$;
\item if $X,  X' \subseteq {\mathcal G}$, then $( X,  X')\in {\mathcal {G}}$.
\end{iteMize}
Equivalently, ${\mathcal {G}}$
 is the carrier of the  \emph{initial algebra} $({\mathcal {G}}, \mathit{id})$  
 of the functor $F:  \mbox{Class}^* \rightarrow \mbox{Class}^*$,
defined by $F(A) = {\mathcal P}(A) \times {\mathcal  P}(A)$, where ${\mathcal  P}(A)$ is the \emph{powerset} functor (with usual definition on morphisms).
\end{defi}

\noindent \emph{Notation}. 
Games will be denoted by small letters, \emph{e.g.} $x$, with $x=(X^L,X^R)$, and $x^L, x^R$ will denote generic elements of $X^L,X^R$. 
We denote by  $\mbox{Pos}_x$ the set of \emph{positions} hereditarily reachable from $x$.
\medskip

\noindent \emph{Some simple games.} The simplest  game is the empty one, \emph{i.e.} $(\emptyset , \emptyset)$, which will be denoted  by $0$.  Then
 we define the games $1= (\{ 0\} , \emptyset)$, $-1 =( \emptyset  , \{0\})$, $*1= (\{ 0\} , \{ 0\})$. Intuitively, in the game $0$, the  player who starts will lose (independently whether he plays L or R), since there are
 no moves. Thus player II  has a winning strategy. In the game $1$ there is a winning strategy for L, since, if L plays first,
 then  L has a move to $0$, and R has no further move; otherwise, if R plays first, then he loses, since he has no moves.
 Symmetrically, $-1$ has a winning strategy for $R$. Finally, the game $*1$ has a winning strategy for player I, since
 he has a move to $0$, which is losing for the next player. Intuitively, a strategy for a given player is a function which,
 for any position where the player is next to move, gives, if any, a move for  this player.
 A strategy is winning if it provides answers against  any strategy for the opponent player. The notions of strategy and winning strategy are 
  formalized in Section~\ref{strat} below.
 \medskip 

Hypergames  can be naturally defined as a \emph{final coalgebra} on possibly non-wellfounded sets:

\begin{defi}[Hypergames] 
 The set of \emph{hypergames} ${\mathcal {H}}$ is the carrier of the \emph{final  coalgebra} $({\mathcal {H}} , \mathit{id})$ of the functor 
 $F:  \mbox{Class}^* \rightarrow \mbox{Class}^*$.
\end{defi}
Hence hypergames subsume  Conway's games. In the sequel, we will often refer to hypergames simply as games.

Defining hypergames as a final coalgebra, we immediately get a \emph{Coinduction Principle} for reasoning on possibly non-wellfounded games:

\begin{lem}\label{hbis}
A \emph{$F$-bisimulation on the coalgebra $({\mathcal {H}} , \mathit{id})$} is a symmetric relation
${\mathcal R}$ on hypergames such that, for any $x=(X^L, X^R),\ y=(Y^L, Y^R)$, 
\[ x {\mathcal R} y \ \Longrightarrow \ (\forall x^L \in X^L.\exists y^L \in Y^L. x^L  {\mathcal R}y^L) \ \wedge\ 
(\forall x^R \in X^R.\exists y^R \in Y^R. x^R {\mathcal R}y^R) \ .\]
\end{lem}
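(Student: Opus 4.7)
The plan is to derive the stated concrete characterization by unfolding the general categorical definition of an $F$-bisimulation for the specific functor $F(A) = \mathcal{P}(A) \times \mathcal{P}(A)$ and the identity coalgebra on $\mathcal{H}$. Recall that a categorical $F$-bisimulation on $(\mathcal{H}, \mathit{id})$ is a relation $\mathcal{R} \subseteq \mathcal{H} \times \mathcal{H}$ equipped with a coalgebra structure $\gamma : \mathcal{R} \to F(\mathcal{R})$ making the two projections $\pi_1, \pi_2 : \mathcal{R} \to \mathcal{H}$ into coalgebra morphisms. Since the target coalgebra structure is the identity, this amounts to $\pi_i = F(\pi_i) \circ \gamma$ for $i = 1, 2$. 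Writing $\gamma(x, y) = (Z^L_{x,y}, Z^R_{x,y})$ with $Z^L_{x,y}, Z^R_{x,y} \subseteq \mathcal{R}$, and using that $F(f)(U, V) = (f[U], f[V])$ on morphisms, these equations unfold to $X^L = \pi_1[Z^L_{x,y}]$, $Y^L = \pi_2[Z^L_{x,y}]$, and the analogous identities for $X^R, Y^R$ with $Z^R_{x,y}$.

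Next, I would show that the existence of such witness sets $Z^L_{x,y}, Z^R_{x,y}$ for each $(x,y) \in \mathcal{R}$ is equivalent to the forall-exists clause of the lemma together with its converse in each variable. The forward direction is immediate: if $\pi_1[Z^L_{x,y}] = X^L$ and $\pi_2[Z^L_{x,y}] = Y^L$, then every $x^L \in X^L$ is paired with some $y^L \in Y^L$ inside $\mathcal{R}$, and vice versa, and similarly for the R-components. For the converse, I would invoke a global choice principle: for each $x^L \in X^L$ pick a witness $y^L \in Y^L$ with $x^L \mathcal{R} y^L$, for each $y^L \in Y^L$ pick a witness $x^L \in X^L$ with $x^L \mathcal{R} y^L$, and take $Z^L_{x,y}$ to be the set of the pairs so chosen; construct $Z^R_{x,y}$ analogously.

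Finally, since $\mathcal{R}$ is required to be symmetric, the $\forall y^L \exists x^L$ clauses follow automatically from the $\forall x^L \exists y^L$ ones (and likewise for R), so the characterization reduces to exactly the two conditions stated in the lemma. Imposing symmetry as part of the definition does not restrict the induced notion of bisimilarity, because the symmetric closure of any categorical bisimulation is itself a bisimulation. The only mildly non-routine step is the appeal to choice to produce the coalgebra structure $\gamma$ from the forall-exists conditions; in the ambient category $\mbox{Class}^*$, or in the equivalent set-theoretic setup based on an inaccessible cardinal mentioned in the footnote, this is handled by the global axiom of choice. Everything else is a direct unwinding of the functor action.
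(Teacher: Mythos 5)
Your proof is correct: the paper states this lemma without proof, treating it as a routine unfolding of the Aczel--Mendler definition of $F$-bisimulation for the functor $F(A)=\mathcal{P}(A)\times\mathcal{P}(A)$ on the identity coalgebra, and your argument is exactly that standard unfolding. One small simplification: the appeal to global choice is avoidable, since one can simply take $Z^L_{x,y}=(X^L\times Y^L)\cap\mathcal{R}$ and $Z^R_{x,y}=(X^R\times Y^R)\cap\mathcal{R}$, whose projections are onto $X^L,Y^L$ (resp.\ $X^R,Y^R$) precisely by the forall-exists clauses and their symmetric counterparts.
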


\begin{coin}\label{cbas}
Let us call a  $F$-bisimulation on  $({\mathcal {H}} , \mathit{id})$ a \emph{hyperbisimulation}. The following principle holds:
\[   \frac{{\mathcal R} \mbox{ hyperbisimulation} \ \ \ x{\mathcal R}  y}{x=y} \]
\end{coin} 

Most notions and constructions on games taken as graphs are  invariant w.r.t. hyperbisimilarity. In particular, hyperbisimilar games 
will  be \emph{equidetermined}, \emph{i.e.} they will have winning/non-losing strategies for the same players.
Hypergames correspond to graphs taken up-to bisimilarity;  the coalgebraic representation  naturally induces a minimal representative for each 
bisimilarity equivalence class. For instance, all game graphs with no-leaves are represented by the hypergame $x=\{ x\}$.
\medskip

\noindent \emph{Some simple hypergames.} Let us consider the following pair of simple hypergames:
$a= (\{ b\}, \emptyset )$ and $b=(\emptyset , \{ a\} )$. If L plays as II on $a$, then he immediately wins since R has no move.
If L plays as I, then he moves to $b$, then R moves to $a$ and so on, an infinite play is generated. This is a draw.
Hence L has a non-losing strategy on $a$. Simmetrically, $b$ has a non-losing strategy for R. 
Now let us consider the hypergame $c=(\{c\}, \{ c\})$. On this game,  any player (L,R,I,II) has a non-losing 
strategy; namely there is only the non-terminating play consisting of infinite $c$'s.

\subsection{Strategies}\label{strat}
Before giving the formal definition of strategy, we introduce the notion of \emph{play} over a game as an alternating sequence of positions on the game, starting from 
the initial position, and we define winning and non-losing plays:
 
 \begin{defi}[Plays]\hfill 
 \\
 \phantom{ii}(i) A \emph{play} on a game $x$ is a (possibly empty) finite or infinite sequence of positions $ \pi=  x_1^{K_1} x_2^{K_2} \ldots$  such that 
 \begin{iteMize}{$\bullet$}
\item $\forall i.\ K_i \in \{ L, R\}$;
\item  $x=x_0$  and $\forall i\geq 0\ (x^{K_i}_i =(X_i^L, X_i^R)\ \wedge\ x^{K_{i+1}}_{i+1}\in X_i^{\overline{K}_i})$,\ \ \ 
 where $\overline{K}_i=\begin{cases} L & \mbox{ if } K_i=R
 \\ R & \mbox{ if } K_i=L
 \end{cases}$
 \end{iteMize}
 \noindent   We denote  by $\mathit{Play}_x$ the set plays on $x$,  by $\mathit{FPlay}_x$ the set of finite plays on $x$, and by $\epsilon$ the empty play.
 \\ \phantom i(ii) A play $\pi$ is \emph{winning} for player L (R)  iff it is finite and it ends with a position $y = (Y^L, Y^R)$ where R (L) is next to move but $Y^R=\emptyset$ ($Y^L=\emptyset$).
 We denote by $\mathit{WPlay}^L_x$ ($\mathit{WPlay}^R_x$) the set of plays on $x$ winning for L (R).
 \\ (iii) A play  $\pi$ is a \emph{draw} iff it is infinite. We denote by $\mathit{DPlay}_x$ the set of draw plays.
 \\ (iv) A play $\pi$ is \emph{non-losing} for player L (R) iff it is winning for L (R) or it is  a draw, \emph{i.e.} we define $ \mathit{NPlay}^L_x = \mathit{WPlay}^L_x
 \cup \mathit{DPlay}_x$ ($ \mathit{NPlay}^R_x = \mathit{WPlay}^R_x
 \cup \mathit{DPlay}_x$).
 \end{defi}
 
Strategies for a given player can be formalized as functions on plays ending with a move of the opponent player, telling, if any,  which is the
next move of the given player.
In what follows, we denote by 
\begin{iteMize}{$\bullet$}
\item $\mathit{FPlay}^{LI}_x$  the set of finite plays on which L acts as  player I, and  ending  with a position where L  is next to move, \emph{i.e.} $\mathit{FPlay}^{LI}_x= \{ \epsilon \} \cup \{ x_1^{K_1} \ldots x_n^{K_n} \in \mbox{FPlay}_x\  |\   K_1=L\ \wedge\ K_n =R,\  n> 1 \}$
\item $\mathit{FPlay}^{LII}_x$  the set of finite plays on which L acts as player II, and  ending  with a position where L  is next to move, \emph{i.e.}
$\mathit{FPlay}^{LII}_x = \{ x_1^{K_1} \ldots x_n^{K_n}  \in \mbox{FPlay}_x\  |\   K_1=R\ \wedge\ K_n =R,\ n\geq 1 \}$. 
\item Similarly we define $\mathit{FPlay}^{RI}_x$, $\mathit{FPlay}^{RII}_x$.
\end{iteMize}

We define:

\begin{defi}[Strategies] \label{stp}
Let $x$ be a game. \\
(i) A strategy $f$ for LI (\emph{i.e.} L acting as player I) is a partial function $f : \mathit{FPlay}^{LI}_x \rightarrow \mathit{Pos}_x$ such that, for any
$\pi\in \mathit{FPlay}^{LI}_x$,
\begin{iteMize}{$\bullet$}
\item $f (\pi)=  x'  \ \Longrightarrow \ \pi  x' \in \mathit{FPlay}_x$
\item $\exists x' .\ \pi x'\in \mathit{FPlay}_x  \ \Longrightarrow \ \pi\in \mathit{dom}(f)$.
\end{iteMize}
\noindent Similarly,  one can define strategies for players LII, RI, RII.
\\ (ii) Moreover,  we define:
\begin{iteMize}{$\bullet$}
\item a strategy for player L is a pair of strategies for LI and LII, $f_{LI} \uplus f_{LII}$;
\item a strategy for player  R is a pair of strategies for RI and RII, $f_{RI} \uplus f_{RII}$;
\item a strategy for player  I is a pair of strategies for LI and RI, $f_{LI} \uplus f_{RI}$;
\item a strategy for player  II is a pair of strategies for LII and RII, $f_{LII} \uplus f_{RII}$.
\end{iteMize} 
\end{defi}

Strategies, as defined above, provide answers (if any) of the given player on \emph{all} plays ending with a position where the player is next to move.
Actually, we are interested only in the behavior of a strategy on those plays which arise when it interacts with (counter)strategies for the opponent player.
Formally, we define:

\begin{defi}[Product of Strategies] Let $x$ be a game, and  P a player  in $\{$LI,LII,RI,RII$\}$.
\hfill
\\ \phantom i(i) Let $\pi$ be a play on $x$, and $f$ a strategy on $x$ for P. We say that $\pi$ is \emph{coherent with} $f$ if, for any proper prefix $\pi'$ of $\pi$ ending with a position where player P is next to move,
\[ f (\pi') = x' \ \Longrightarrow \ \pi' x' \mbox{ is a prefix of } \pi\ .\]
\noindent (ii)  Given a strategy $f$ for P  on $x$, and a counterstrategy $f'$, \emph{i.e.} a strategy for the opponent player, 
we define the \emph{product} of $f $ and $f'$, $f *f'$, as the unique play coherent with both $f$  and $f'$.
\end{defi}

Now we are ready to define \emph{non-losing}/\emph{winning strategies}.
Intuitively, a strategy is 
 non-losing/winning for a player, if it generates non-losing/winning plays against any  possible counterstrategy. 
 
 \begin{defi}[Non-losing/winning Strategies]\label{wns}
Let $x$ be a game, and P a player in $\{$LI,LII, RI,RII$\}$.
  \\ (i) A strategy $f$ on $x$ is \emph{non-losing} for P if,   for any strategy $f'$ on $x$ for the opponent player, 
   $f* f'\in \mathit{NPlay}^{P}_x$.
   \\ (ii) A strategy $f$ on $x$ is \emph{winning} for P if,   for any strategy $f'$ on $x$ for the opponent player, 
   $f* f'\in \mathit{WPlay}^{P}_x$.
   \\ (iii) A strategy $f_{LI} \uplus f_{LII}$ for player L is \emph{non-losing/winning} if $f_{LI}$ and $ f_{LII}$ are non-losing/winning strategies for LI and LII, respectively.
   Similarly for players R,I,II.
    \end{defi}
 
  Notice that on Conway's games, where infinite plays do not arise, the notion of non-losing strategy coincides with that of winning strategy.\medskip
   
  Intuitively, the winning condition on  finite  plays,``no more moves for the next player in the current position", does not depend on the ``history'', \emph{i.e.} on the whole sequence of positions, but only  on the the last position. Hence, having taken all infinite plays to be draws, one can prove that,
 for any non-losing/winning strategy on a game $x $, there exists a \emph{positional} (\emph{history-free}) non-losing/winning strategy on $x$. Formally, we define:
   
   \begin{defi}[Positional Strategies] Let P be a player in $\{$LI,LII,RI,RII$\}$.
 \\ (i)  A \emph{positional strategy}  $f$ for P on $x$ is a  strategy for P such that, for all $\pi x', \pi'x'\in  \mathit{FPlay}^{P}_x$,
  either $f$ is not defined on both $\pi x'$ and $ \pi' x'$,  or   $f(\pi x')=f(\pi' x')$.
 \\ (ii) A \emph{positional non-losing/winning strategy} for P is a positional strategy which is non-losing/winning for P.
    \end{defi}
 
 \begin{prop}\label{posi}  Let P be a player in $\{$LI,LII,RI,RII$\}$.
If there exists a non-losing/winning strategy $f$ for P on $x$, then there exists a positional non-losing/winning strategy for P on $x$.
 \end{prop}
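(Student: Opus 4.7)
\noindent\emph{Proof plan.}
Fix P and a non-losing strategy $f$ for P on $x$. The plan is to extract from $f$ a positional $g$ by making, at each reachable position where P is next to move, a single canonical choice that preserves the property ``P still has a non-losing strategy from here''. Because the only two ways a play can fail to lose for P are (i) terminating with the opponent unable to move and (ii) running forever as a draw, the relevant data at a decision point is just the current position together with whose turn it is, never the preceding history.

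For each $y\in\mathit{Pos}_x$ and each $T\in\{L,R\}$ I would call the pair $(y,T)$ \emph{P-safe} if there exists a non-losing strategy for P's side on the subgame rooted at $y$ starting with $T$ to move; this is manifestly a property of the pair $(y,T)$ alone, independent of any history leading to $y$. Existence of $f$ gives that $(x,T_0)$ is P-safe, where $T_0$ is the starter dictated by the identity of P. Two one-step closure facts then follow by unfolding the definition of non-losing strategy one move:
(a) if $(y,T)$ is P-safe and $T$ coincides with P's side, then $Y^T$ is non-empty (otherwise P is immediately stuck and $(y,T)$ could not be P-safe), and there exists $y'\in Y^T$ with $(y',\overline T)$ still P-safe;
(b) if $(y,T)$ is P-safe and $T$ is the opponent's side, then for every $y'\in Y^T$ the pair $(y',\overline T)$ is P-safe, since any non-losing strategy for P must respond to every opponent choice.

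Using (a) and the Axiom of Choice, which is unproblematic because $\mathit{Pos}_x$ is a hyperset (equivalently, of bounded hereditary cardinal), I would pick once and for all, for each P-safe pair $(y,T_P)$ with $T_P$ equal to P's side, a witness $\gamma(y)\in Y^{T_P}$ such that $(\gamma(y),\overline{T_P})$ is still P-safe. Define the positional strategy by $g(\pi y)=\gamma(y)$ on those $\pi y\in\mathit{FPlay}^{P}_x$ whose last position is P-safe, and leave $g$ undefined otherwise; the value depends only on $y$, so $g$ is positional by construction. To verify that $g$ is non-losing I would argue by invariant: for any counterstrategy $f'$, every prefix of the play $g*f'$ ends in a P-safe pair, by induction using (a) at P's moves and (b) at the opponent's. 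Hence $g*f'$ never terminates with P stuck, so it is either infinite (a draw, hence in $\mathit{NPlay}^{P}_x$) or finite and ending with the opponent stuck (winning for P, hence also non-losing). The winning variant is proved identically, taking ``P-safe'' to mean ``P has a \emph{winning} strategy from $(y,T)$'' and observing that infinite plays are then ruled out by hypothesis on $f$. The only real obstacle is notational bookkeeping across the four sub-cases LI, LII, RI, RII and the parity of moves along a play; conceptually the argument is the standard positional-determinacy reduction for safety games, made cleaner here by the convention that all infinite plays are draws.
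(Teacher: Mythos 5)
Your treatment of the \emph{non-losing} half is sound and is essentially the paper's own argument made explicit: the paper likewise builds the positional strategy by a single choice per position (it picks one of $f$'s answers along some coherent history, via the Axiom of Choice), and the justification it leaves behind the word ``clearly'' is exactly your safety invariant --- every position reached under the positional strategy is one from which P still has a non-losing strategy, so P is never stuck and every resulting play is a win or a draw. Your explicit ``P-safe'' pairs and the closure facts (a), (b) are a cleaner packaging of the same idea.

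The \emph{winning} half, however, has a genuine gap. Defining ``P-safe'' as ``P has a winning strategy from $(y,T)$'' and choosing \emph{any} safety-preserving move does not exclude infinite plays: the invariant only guarantees that P \emph{could still} win from every position reached, not that P ever cashes in. Your appeal to ``infinite plays are ruled out by hypothesis on $f$'' does not apply, because the plays of $g$ are not plays of $f$; $g$ is assembled from winnability witnesses, not from $f$'s moves. Concretely, let $y_0=(\{a,0\},\emptyset)$ with $a=(\emptyset,\{y_0\})$ and P $=$ LI. Then LI wins $y_0$ by moving to $0$; but the pair $(a,R)$ is also winnable for L (R is forced back to $y_0$, whence L may move to $0$), so the move $y_0\mapsto a$ preserves winnability, and a choice function selecting it yields the positional strategy whose unique coherent play is the infinite draw $y_0\, a\, y_0\, a\cdots$ --- non-losing but not winning. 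The standard repair is to stratify winnability by an ordinal rank (the least stage of the forcing/attractor hierarchy at which P wins from $(y,T)$) and to choose at each P-move a successor of strictly smaller rank; well-foundedness of the ordinals then forces termination, hence an actual win. Be aware that the paper's one-line claim ``$\overline{f}$ is non-losing/winning iff $f$ is'' is equally silent on this point, and its construction can exhibit the same cycling (two positions each reachable by a coherent history along which $f$ heads toward the other), so this is a gap you must fill rather than one you can discharge by citing the paper's argument.
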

  \begin{proof}
  Let $f:\mathit{FPlay}^{P}_x   \rightarrow  \mbox{Pos}_x$ be a strategy for P on $x$. Then we can define a positional strategy $\overline{f}$
  for P  as follows.
  \\ (i) $ \overline{f} (\epsilon)= f(\epsilon)$.
\\ (ii) For any $x_i$  such that there exists $\pi.\  \pi x_i \in \mathit{dom}(f)$, by the Axiom of Choice, we can  choose an element $\overline{x}_i$ in 
$\Pi_{x_i} = \{ f(\pi x_i) \mid 
\pi\in \mathit{FPlay}^{P}_x \ \wedge\ 
\pi x_i \in \mathit{dom}(f)\ \wedge\ \pi x_i \mbox{ coherent with } f \}$, if $\Pi_{x_i} \neq \emptyset$, otherwise we choose $\overline{x}_i$ in 
$\{ f(\pi x_i) \mid 
\pi\in \mathit{FPlay}^{P}_x \ \wedge\ 
\pi x_i \in \mathit{dom}(f) \}$.
 \\  Clearly 
  $\overline{f}$
  is positional. Moreover, $\overline{f}$ is non-losing/winning iff $f$ is.
  \end{proof} 
   
   
   As a consequence of the above proposition, we can restrict ourselves to considering only positional strategies. 

   \subsection{Determinacy Results}
   On Conway's games a strong determinacy result holds, \emph{i.e.} any game has a winning strategy for exactly one player in $\{$L,R,I,II$\}$. This
   does not hold on hypergames, where 
we can have  non-losing strategies for various players at the same time, as in the case of the game $c$ above. However, on any hypergame there exists a 
non-losing strategy for \emph{at least} one player. Moreover, if there is a winning strategy for a given player, then there are no non-losing strategies for the
other players. This subsumes Conway's determinacy result. In what follows, we formalize the above results.

The following lemma is instrumental: 

\begin{lem} \label{non}
Let $x$ be a game. 
 \\ \emph{\phantom{ii}(i)}
 LI has a winning strategy on $x$ iff RII does not have a non-losing strategy on  $x$.
\\ \emph{\phantom i(ii)} LII has a winning strategy on $x$ iff RI does not have a non-losing strategy on  $x$.
\\ \emph{(iii)} Symmetrically, exchanging the r\^ole of L and R.
\end{lem}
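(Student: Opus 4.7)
I will prove part (i); parts (ii) and (iii) follow by the obvious symmetries (swapping the roles of player I and II, respectively swapping L and R).

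For the easy direction of (i), suppose towards a contradiction that LI has a winning strategy $f$ and RII a non-losing strategy $g$. The unique play $f*g$ coherent with both lies in $\mathit{WPlay}^L_x$ (by the choice of $f$) and in $\mathit{NPlay}^R_x$ (by the choice of $g$). But $\mathit{WPlay}^L_x$ consists of finite plays ending with R unable to move, whereas $\mathit{NPlay}^R_x = \mathit{WPlay}^R_x \cup \mathit{DPlay}_x$ consists of infinite plays or finite plays ending with L unable to move; these two sets are disjoint, giving the contradiction.

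For the harder direction I argue by contraposition: assuming LI has no winning strategy on $x$, I shall produce a non-losing strategy for RII. Define transfinitely a subclass $W \subseteq \mathcal{H}$ by $W_0 = \emptyset$, $W_{\alpha+1} = \{ y = (Y^L, Y^R) \mid \exists y^L \in Y^L.\ \forall y^{L,R} \in (y^L)^R.\ y^{L,R} \in W_\alpha \}$, $W_\lambda = \bigcup_{\alpha < \lambda} W_\alpha$ at limits, and $W = \bigcup_\alpha W_\alpha$. Induction on the rank at which $y$ first enters $W$ yields a positional winning strategy for LI on any $y \in W$: at each L-turn position pick a witnessing L-move, so that every R-response lands at strictly smaller rank; well-foundedness then forces every coherent play to terminate with R stuck. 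In particular, if LI has no winning strategy on $x$ then $x \notin W$. The negation of the defining clause reads: for every $y \notin W$ and every L-option $y^L \in Y^L$ there exists $y^{L,R} \in (y^L)^R$ with $y^{L,R} \notin W$; in particular $(y^L)^R \neq \emptyset$, since otherwise $y$ would already lie in $W_1$. Using the Axiom of Choice (as in Proposition~\ref{posi}), pick one such response $y^{L,R}$ for every pair $(y, y^L)$ with $y \notin W$ and $y^L \in Y^L$, and assemble a positional strategy $g$ for RII on $x$. An easy induction along any play $f*g$ coherent with $g$ shows that after each R-move the current position stays in the complement of $W$; such a play is therefore either infinite (hence a draw) or finite and ending with L unable to move (hence a win for R), and in either case it lies in $\mathit{NPlay}^R_x$. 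So $g$ is non-losing for RII, completing the contrapositive.

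The only subtle step is extracting the ``R-closure'' property of the complement of $W$ from the negation of the clause defining $W$, and in particular the observation that $(y^L)^R$ is never empty when $y \notin W$; once this is in hand the rest is routine transfinite induction. In the well-founded Conway setting $W$ is built up by ordinary induction on game rank and the argument specializes to Conway's original determinacy result; in full generality it is a Gale--Stewart open-determinacy phenomenon, with infinite plays consistently interpreted as draws.
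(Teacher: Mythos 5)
Your proof is correct, but it takes a genuinely different route from the paper's. The easy direction is essentially the same (the product play cannot lie in both $\mathit{WPlay}^L_x$ and $\mathit{NPlay}^R_x$). For the converse, the paper argues directly: assuming RII has no non-losing strategy, it builds an LI strategy by induction on the length of plays, maintaining the invariant that RI has no non-losing strategy from each position reached after an L-move, and concludes that the product with any RII counterstrategy cannot be non-losing for R. You instead prove the contrapositive via the transfinite hierarchy $W$ of positions from which LI can force a win, show that the complement of $W$ is closed under ``L moves, then R responds'', and extract a positional RII strategy that stays outside $W$ forever. The two constructions are dual: the paper manufactures the LI winning strategy, you manufacture the RII non-losing strategy. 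What your Gale--Stewart-style approach buys is that termination of the LI winning strategy on $W$ is explicit --- the ordinal rank strictly decreases with each L-move/R-response round --- whereas the paper's induction on play length leaves implicit why no infinite (drawn) play can be coherent with the constructed $f$; that is precisely the delicate point in the non-wellfounded setting. One small step you should make explicit: passing from $y \notin W$ to ``for every $y^L$ there is some $y^{L,R} \notin W$'' uses that $W$ is a fixed point of its defining operator, which holds because each $(y^L)^R$ is a set (Replacement, or working below the inaccessible $\kappa$ of the paper's footnote); as written, $y \notin W_{\alpha+1}$ for all $\alpha$ only yields, for each $\alpha$, a witness $y^{L,R} \notin W_\alpha$ that could a priori depend on $\alpha$.
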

\begin{proof} \hfill
\\  (i) $(\Rightarrow)$ Let $f$ be a winning strategy for LI. Then, by definition,  for any strategy $f'$ for RII, $f*f'$ is winning for L. Hence RII cannot have any non-losing
strategy. 
\\ (i) $(\Leftarrow )$ Assume RII has no non-losing strategy. Then we can build a strategy $f: \mathit{FPlay}^{LI}_x \rightarrow \mathit{Pos}_x$ for LI by induction on finite plays, with the property that, for any play $\pi\in \mathit{FPlay}^{LI}_x$ coherent with $f$, $f(\pi)$ is defined  and RI has no non-losing strategy from $\pi f(\pi)$. Namely,
since RII has no non-losing strategy on $x$, then there exists an opening L move such that RI has no non-losing strategy from that position. This allows us to define 
$f$ on the empty play. Now assume to have defined $f$ on plays of length $n$. Let us consider a play $\pi$ of length n+1 coherent with $f$. By induction hypothesis,
RI has no non-losing strategy from $\pi$, hence for any  R  move  from $\pi$ to a position $x'$, L has an answer bringing to a position where RI has no
non-losing strategy. This allows us to extend $f$ on all plays of length $n+2$ coherent with $f$; on plays of length $n+2$ not coherent with $f$ but extensible with
a L move, we can define $f$ in an arbitrary way. This gives a strategy for LI, which is winning, because, by definition of $f$, for any counterstrategy $f'$ for RII,
$f*f'$  cannot be non-losing for R.
\\ (ii)-(iii) The proofs are similar  to the above one.
\end{proof}

\begin{thm}[Determinacy]\label{nls} 
Any  game has a  non-losing strategy at least for one of the players L,R,I,II.
\end{thm}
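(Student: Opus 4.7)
The plan is to derive Determinacy from Lemma~\ref{non} by a short propositional case analysis. The key observation is that every winning strategy is non-losing, so each equivalence of Lemma~\ref{non} yields the weaker implication: if, say, LI has no non-losing strategy on $x$, then \emph{a fortiori} LI has no winning strategy, whence by part~(i) RII has a non-losing strategy on $x$. Applying this to all four parts of the lemma, I obtain that, for every hypergame $x$, at least one of LI and RII admits a non-losing strategy on $x$, and at least one of LII and RI admits a non-losing strategy on $x$.

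Abbreviate the four atomic conditions by $a,b,c,d$, standing respectively for ``LI/LII/RI/RII has a non-losing strategy on $x$''. The two observations above read $(a \vee d) \wedge (b \vee c)$. By Definition~\ref{stp}(ii) together with Definition~\ref{wns}(iii), having a non-losing strategy for the composite players L, R, I, II amounts respectively to $a \wedge b$, $c \wedge d$, $a \wedge c$, $b \wedge d$. The theorem therefore reduces to the propositional tautology
\[
(a \vee d) \wedge (b \vee c) \;\Longrightarrow\; (a \wedge b) \vee (a \wedge c) \vee (c \wedge d) \vee (b \wedge d),
\]
which I would dispatch by cases on $a \vee d$: if $a$ holds, then $b \vee c$ yields a non-losing strategy for L (case $a \wedge b$) or for I (case $a \wedge c$); if instead $d$ holds, then $b \vee c$ yields a non-losing strategy for II (case $b \wedge d$) or for R (case $c \wedge d$).

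I do not foresee any real obstacle. All the nontrivial game-theoretic content, namely the construction of a non-losing strategy out of the absence of a winning strategy for the suitable opponent (which is where the coinductive flavour of hypergames enters via draws on infinite plays), is already packaged inside Lemma~\ref{non}; beyond that, the argument is a three-line bookkeeping over the four sub-players LI, LII, RI, RII. The only small care needed is not to confuse ``non-losing'' with ``winning'' when invoking the lemma, since it is precisely the asymmetry between these two notions on hypergames that prevents strengthening the conclusion to uniqueness, in contrast with the classical Conway case.
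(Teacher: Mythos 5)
Your proposal is correct and follows essentially the same route as the paper: all the substance is delegated to Lemma~\ref{non}, and the theorem is then obtained by a propositional case analysis over the four sub-players LI, LII, RI, RII. The only difference is presentational — the paper argues by contradiction and extracts \emph{winning} strategies for RII and LII before combining them into one for II, whereas you extract only the weaker non-losing conclusions $(a\vee d)\wedge(b\vee c)$ and verify the tautology directly, which is if anything slightly cleaner.
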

\begin{proof}
Assume by contradiction that $x$ has no non-losing strategies for L,R,I,II. Then in particular $x$ has no non-losing strategy for LI or
for LII. Assume the first case holds (the latter can be dealt with similarly). Then, by Lemma~\ref{non}, $x$ has a winning
strategy for RII. Now, since by hypothesis R has no non-losing strategy,  then there is no non-losing strategy for RI. But then, by Lemma~\ref{non}, there is a winning
strategy for LII. Therefore, by definition, there is a winning strategy for II. Contradiction.
\end{proof}

Theorem~\ref{nls} above can be sharpened, by considering when the non-losing strategy  is in particular a winning strategy:

\begin{thm}\label{qua}
Let $x$ be a game. Then either there exists a winning strategy on $x$ for exactly one of the players  L,R,I,II, and there are no non-losing strategies for the other players;
or at least two of L,R,I,II have a non-losing strategy. 
In this latter case either \emph{(i)} or \emph{(ii)} holds:
\\ \emph{\phantom i(i)} \emph{either} L or R have a non-losing strategy and  \emph{either} I or II have a non-losing strategy;
\\ \emph{(ii)} \emph{all}  players  L,R,I,II have a non-losing strategy.
\end{thm}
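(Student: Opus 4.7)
The plan is to reduce the statement to an exhaustive $3\times 3$ case analysis, combining Lemma~\ref{non} with Definition~\ref{stp}(ii). The key bookkeeping is that each of $L,R,I,II$ has a non-losing (respectively winning) strategy iff both of its two constituent sub-players from $\{LI,LII,RI,RII\}$ do; thus the fate of $L,R,I,II$ is completely determined by the quadruple of sub-player outcomes.

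First, I apply Lemma~\ref{non}(i) together with its symmetric counterpart from (iii), which give the biconditionals ``$LI$ has a winning strategy iff $RII$ has no non-losing strategy'' and ``$RII$ has a winning strategy iff $LI$ has no non-losing strategy''. These force the pair $(LI,RII)$ into exactly one of three mutually exclusive and jointly exhaustive states: (a) $LI$ has a winning strategy and $RII$ no non-losing one; (b) both have non-losing strategies but neither a winning one; (c) the mirror of (a). Applying parts (ii) and (iii) of the lemma, I obtain the analogous trichotomy $(a'),(b'),(c')$ for $(LII,RI)$, yielding a table with nine entries.

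Next I traverse the table using the pairing definition of Definition~\ref{stp}(ii). The four \emph{extremal} cells, obtained by crossing two $(a)$/$(c)$-type entries, each award a winning strategy to exactly one of $L,R,I,II$ while leaving the others with no non-losing strategy, namely $(a,a')\mapsto L$, $(a,c')\mapsto I$, $(c,a')\mapsto II$, $(c,c')\mapsto R$. These are the ``unique winner'' alternative of the statement. The \emph{central} cell $(b,b')$ makes all four sub-players non-losing, hence all of $L,R,I,II$ are non-losing, giving case (ii). The four \emph{mixed} cells each yield non-losing strategies for exactly two of $L,R,I,II$: $(a,b')\mapsto(L,I)$, $(b,a')\mapsto(L,II)$, $(c,b')\mapsto(R,II)$, $(b,c')\mapsto(R,I)$; each of these pairs one member of $\{L,R\}$ with one of $\{I,II\}$, yielding case (i).

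The only real difficulty is the mechanical bookkeeping of the nine cells, which is handled by directly substituting the sub-player states into the pairing definitions of strategies for $L,R,I,II$; no further use of Lemma~\ref{non} is needed once the two trichotomies are in place.
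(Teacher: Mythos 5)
Your proof is correct, and it rests on the same two ingredients as the paper's: Lemma~\ref{non} and the decomposition of $L,R,I,II$ into the sub-players $LI,LII,RI,RII$ (the ``both constituents'' bookkeeping you invoke is literally Definition~\ref{wns}(iii), rather than Definition~\ref{stp}(ii), which only gives the pairing structure -- a harmless mis-citation). The organization, however, is genuinely different and arguably better. The paper anchors its case split on the status of the compound player $L$ (winning strategy; non-losing but not winning; and the remaining configurations dispatched by ``\emph{e.g.}'' and ``a similar argument''), so its proof is really a representative sample of cases completed by symmetry. You instead observe that Lemma~\ref{non}(i) and its $L$/$R$-swapped version force each dual pair $(LI,RII)$ and $(LII,RI)$ into exactly one of three mutually exclusive, exhaustive states, and that the two pairs vary independently; the theorem then falls out of a mechanical $3\times 3$ table. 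This buys genuine exhaustiveness -- all nine configurations are exhibited and each is matched to one of the three alternatives in the statement -- and it makes visible \emph{why} case (i) always pairs one of $\{L,R\}$ with one of $\{I,II\}$: each compound player draws one constituent from each dual pair, so a mixed cell can only promote the two compound players whose constituents land in the favourable entries. The cost is only the bookkeeping you acknowledge; there is no gap.
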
 
\begin{proof} Assume \emph{e.g.} L has a winning strategy on $x$. Hence, by Lemma~\ref{non}, both RI and RII have no non-losing strategies. Therefore, R,I,II do not have non-losing strategies.
\\ Otherwise, assume \emph{e.g.} L has a non-losing strategy but no winning strategies on $x$. Then three cases can arise: (1) both
LI and LII have no winning strategies; (2) LI has no winning strategy, but LII has a winning strategy; (3) LI has a winning strategy, but LII
has no winning strategy. In the first case, by Lemma~\ref{non} both RII and RI have a non-losing strategy.
Hence all players, L,R,I,II, have non-losing strategies.
 In the second case, by 
Lemma~\ref{non} RII has a non-losing strategy, but  RI has no non-losing strategy, hence player II has a  non-losing strategy, while both R and I have no non-losing strategies.  Using a similar argument, one can show that in the third case, I has a non-losing strategy, but both R and II have no non-losing strategies.
\end{proof}

According to Theorem~\ref{qua} above, the space of hypergames can be decomposed as in Figure~\ref{fig}.
For example, the game $c=(\{ c\}, \{ c\})$ belongs to the center of the space, because it has non-losing strategies for all players,
while the games $a= (\{ b\},\{ \})$
and $b=(\{ \}, \{ a\})$ belong to the sectors marked with L,II and R,II, respectively, and the games $a_0 =(\{ b_0 \}, \{ 0\})$
and $b_0=(\{0 \}, \{ a_0\})$ belong to the sectors marked with R,I and L,I.

\begin{figure}
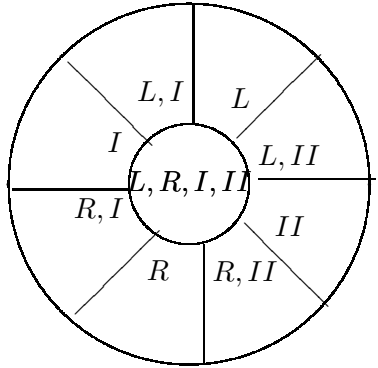

\[
\xy 
(0,30)*{ L,R,I, II}*\xycircle(8,8){-}="q0";
(0,30)*{ L,R,I, II }*\xycircle(24,24){-}="q1";
(-24,30)*{.}="p1";
(24,30)*{.}="p2";
(0,54)*{.}="p3";
(0,6)*{.}="p4";
(-17,47)*{.}="p5";
(17,47)*{.}="p6";
(-17,13)*{.}="p7";
(17,13)*{.}="p8";

{\ar @{-} "q0"; "p1" <2pt> ^{\txt{$R,I$}}}
\ar @{-} "q0"; "p2" <2pt> ^{\txt{$L,II$}}

\ar @{-} "q0"; "p3" <2pt> ^{\txt{$L,I$}}
\ar @{-} "q0"; "p4" <2pt> ^{\txt{$R,II$}}

\ar @{-} "q0"; "p5" <2pt> ^{\txt{$I$}} 

\ar @{-} "q0"; "p6" <2pt> ^{\txt{$L$}}

\ar @{-} "q0"; "p7" <2pt> ^{\txt{$R$}}
\ar @{-} "q0"; "p8" <2pt> ^{\txt{$II$}}

\endxy
\]
\caption{The space of hypergames.}
\label{fig}
\end{figure}
 
As a corollary of Theorem~\ref{qua} above we get Conway's determinacy result:

\begin{thm}[Determinacy, \cite{Con76}]\label{wst}
Any Conway's game has a winning strategy either for L or for R or for I or for II.
\end{thm}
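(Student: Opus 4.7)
The plan is to derive the Conway determinacy result as a direct specialization of the hypergame determinacy theorem (Theorem~\ref{nls}), using the essential fact that Conway games are well-founded, hence every play is finite.

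First I would observe that, since Conway games are the carrier of the \emph{initial} algebra of $F$, the relation ``is an option of'' is well-founded on $\mathrm{Pos}_x$. Consequently every play $\pi \in \mathit{Play}_x$ must be finite: an infinite play would produce an infinite descending chain in $\mathrm{Pos}_x$, contradicting well-foundedness. Therefore $\mathit{DPlay}_x = \emptyset$, and by the very definitions of $\mathit{NPlay}^P_x$ and $\mathit{WPlay}^P_x$, we get $\mathit{NPlay}^P_x = \mathit{WPlay}^P_x$ for every player $P \in \{\mathrm{LI}, \mathrm{LII}, \mathrm{RI}, \mathrm{RII}\}$. By Definition~\ref{wns}, it follows that on Conway games the notions of non-losing strategy and winning strategy coincide, for each of the players L, R, I, II.

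Now Theorem~\ref{nls} guarantees that any game $x$ has a non-losing strategy for at least one of L, R, I, II. Specializing to a Conway game $x$ and combining with the observation above, this non-losing strategy is actually a winning strategy. This yields the statement. As a further remark, one may also invoke Theorem~\ref{qua}: its clauses (i) and (ii) describe situations where strictly more than one player has a non-losing-but-not-winning strategy, a configuration which on Conway games is ruled out precisely because non-losing equals winning, so exactly one of L, R, I, II has a winning strategy.

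There is no real obstacle: the whole argument reduces to the remark that well-foundedness of Conway games rules out draws, so that Theorem~\ref{nls} (which is the genuinely non-trivial result, proved via Lemma~\ref{non}) immediately collapses into Conway's original statement.
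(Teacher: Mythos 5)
Your proposal is correct and follows the same route the paper takes: the paper derives Theorem~\ref{wst} as an immediate corollary of Theorem~\ref{qua} (hence of Theorem~\ref{nls}), relying on the observation, stated explicitly after Definition~\ref{wns}, that on well-founded Conway games infinite plays cannot arise and therefore non-losing and winning strategies coincide. You have merely written out the details that the paper leaves implicit.
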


As a consequence, Conway's games are all in the sectors L,R,I,II of Figure~\ref{fig}. However, notice that these sectors do not contain
only well-founded games, \emph{e.g.} the non-wellfounded game $d= (\{ d'\}, \{ d'\})$, where $d'= (\{d', 0\}, \{ d',0\})$ has a winning strategy for II.

\subsection{Characterization Results}
In \cite{Con76}, a relation $\gtrsim$ on games  is introduced, inducing a partial order (which is a total order on the subclass of games
corresponding to surreal numbers). Such relation allows to characterize Conway's games with a winning strategy for L,R,I or II.
In what follows, first we recall the above results on Conway's games, then we show how to generalize them to hypergames.
This generalization is based on a non-trivial extension of the relation $\gtrsim$.

The definition of $\gtrsim$ might appear a little strange at first. However, its structure is ultimately easy to grasp if we restrict to so called 
\emph{surreal numbers}, \emph{i.e.} games where all L members are hereditarily non-$\gtrsim$ of any R member. The definition then is akin to the definition 
of Dedekind section on real numbers.  The only difference lying in the fact that surreal numbers  are defined inductively, rather than ``impredicatively'', as in the
case of real numbers.

\begin{defi}[\cite{Con76}] \label{geq}
Let $x=(X^L, X^R)$, $y=(Y^L, Y^R)$ be Conway's games.
We define, by induction on games: \[ x\gtrsim y\ \  \mbox{iff}\ \  \forall x^R \in X^R.\  (y\not\gtrsim x^R) \ \wedge\  \forall y^L \in Y^L.\   (y^L \not\gtrsim x) \ .\]
Furthermore, we define:
\\ -- $x>y\ \  \mbox{iff}\ \  x\gtrsim y \ \wedge\ y\not \gtrsim x$
\\ -- $x\sim y\ \  \mbox{iff}\ \  x\gtrsim y \ \wedge\ y\gtrsim x$
\\ -- $x || y \ (x\mbox{ fuzzy } y) \ \  \mbox{iff}\ \  x\not\gtrsim y \ \wedge\ y\not\gtrsim x$
\end{defi}
Notice that $\not\gtrsim $ does not coincide with $<$, \emph{e.g.} $*1= (\{ 0\}, \{ 0\})$ is such that $*1 \not \gtrsim 0$ holds, but 
$*1 \gtrsim 0$ does not hold.
As one may expect, $1> 0>-1$, while for the game $*1$ (which is not a number), we have $*1 || 0$. 
Moreover:

\begin{prop}[\cite{Con76}]\label{ceq}
$\sim$ is an equivalence relation.
\end{prop}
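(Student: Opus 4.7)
The claim that $\sim$ is an equivalence relation breaks into three obligations: reflexivity ($x\sim x$), symmetry (immediate from the symmetric definition of $\sim$ in terms of $\gtrsim$), and transitivity. Symmetry is free, so the real work is \emph{reflexivity of $\gtrsim$} and \emph{transitivity of $\gtrsim$}, which together yield everything one needs. My plan is to prove both of these by induction on the well-founded rank of games (which is available since we are in $\mathcal{G}$, the initial algebra).

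For reflexivity, I would show $x\gtrsim x$ by induction on the rank of $x$. Unfolding the definition, I need $x\not\gtrsim x^R$ for every $x^R\in X^R$ and $x^L\not\gtrsim x$ for every $x^L\in X^L$. Expanding $x\not\gtrsim x^R$, it suffices to exhibit some $x'^R\in X^R$ with $x^R\gtrsim x'^R$; taking $x'^R=x^R$, the induction hypothesis (applied to the strictly simpler game $x^R$) gives exactly this. Symmetrically, to show $x^L\not\gtrsim x$ it suffices to exhibit some $x'^L\in X^L$ with $x'^L\gtrsim x^L$, and again $x'^L=x^L$ works by induction. This closes reflexivity.

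For transitivity, I would prove: for all Conway games $x,y,z$, if $x\gtrsim y$ and $y\gtrsim z$ then $x\gtrsim z$. The induction is on the multiset rank of the triple $(x,y,z)$ under the subgame ordering. Suppose by way of contradiction that $x\gtrsim y$ and $y\gtrsim z$ but $x\not\gtrsim z$. Then either (a) there is some $x^R\in X^R$ with $z\gtrsim x^R$, or (b) there is some $z^L\in Z^L$ with $z^L\gtrsim x$. In case (a), the induction hypothesis applied to the triple $(y,z,x^R)$, whose multiset rank is strictly smaller (only $x^R\prec x$ changed, others unchanged), combined with $y\gtrsim z$ and $z\gtrsim x^R$, yields $y\gtrsim x^R$, contradicting $x\gtrsim y$. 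In case (b), apply the induction hypothesis to $(z^L,x,y)$, again of smaller multiset rank, together with $z^L\gtrsim x$ and $x\gtrsim y$, to conclude $z^L\gtrsim y$, contradicting $y\gtrsim z$. Either way a contradiction, so $x\gtrsim z$.

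With reflexivity and transitivity of $\gtrsim$ in hand, $\sim$ inherits reflexivity and transitivity, and symmetry is definitional; hence $\sim$ is an equivalence relation. The main delicate point is the \emph{choice of induction measure} for transitivity: because the definition of $\gtrsim$ is mutually recursive through its negation, inducting on a single game does not suffice; one needs the multiset ordering on triples so that replacing one of the three games by a strict subposition decreases the measure while the other two are free to remain unchanged. Once that measure is fixed, both arguments are direct unfoldings of the definitions.
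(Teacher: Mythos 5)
Your proof is correct. Note first that the paper itself gives no proof of this proposition: it is stated with a citation to \cite{Con76}, so there is nothing internal to compare against. What you have written is essentially Conway's own argument from \emph{On Numbers and Games}: symmetry is definitional, reflexivity of $\gtrsim$ goes by induction on a single game (witnessing $x\not\gtrsim x^R$ by $x^R\gtrsim x^R$ and $x^L\not\gtrsim x$ by $x^L\gtrsim x^L$), and transitivity goes by simultaneous induction on the triple, refuting each of the two disjuncts of $x\not\gtrsim z$ against one of the hypotheses. Your remark about the induction measure is the right delicate point, and the multiset ordering on triples under the (well-founded) subposition relation does the job; Conway phrases this as induction on the sum of the ``birthdays'' of $x,y,z$, which is an equivalent bookkeeping device. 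For completeness, the paper later gestures at a second, more game-theoretic route (Remark~\ref{new}(iv) together with Proposition~\ref{unoh}): via the Characterization Theorem, $x\sim y$ amounts to player II having a winning strategy on $x-y$, so reflexivity is the copy-cat strategy on $x-x$ and transitivity can be extracted from strategy composition on sums. That route has the advantage of explaining \emph{why} $\sim$ is reflexive while $||$ is not, but it relies on the machinery of sums and the Characterization Theorem, whereas your direct induction uses only the definition of $\gtrsim$ and is the self-contained argument one would want at the point where the proposition is stated.
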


However, notice that $||$ is \emph{not} an equivalence relation, since $||$ is \emph{not}  reflexive:  trivially it is not the case that $0 || 0$.

 The following important theorem gives the connection between Conway's games and numbers, and it allows to characterize games according to winning strategies:

\begin{thm}[Characterization,  \cite{Con76}] \label{car}
Let $x$ be a Conway's game. Then
\\
\begin{tabular}{llll}
$x>0 $ & ($x$ is positive)\ \ \ \   & iff\ \ \ \ & $x$ has a winning strategy for L.
\\  $x<0 $\ \  & ($x$ is negative) & iff & $x$ has a winning strategy for R.
\\ $x\sim 0 $ & ($x$ is zero) & iff & $x$ has a winning strategy for II.
\\ $x||0 $ & ($x$ is fuzzy) &  iff & $x$ has a winning strategy for I.
\end{tabular}
\end{thm}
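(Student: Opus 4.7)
The central task is to prove the key lemma
\[
x \gtrsim 0 \ \iff\ \text{LII has a winning strategy on } x,
\]
by well-founded induction on $x$. The symmetric statement $0 \gtrsim x \iff \text{RII has a winning strategy on } x$ then follows by swapping the r\^oles of L and R throughout, and the theorem itself reduces to combining these two biconditionals with Conway's determinacy result.

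First I would simplify the definition of $\gtrsim$ in this special case. Since $0 = (\emptyset,\emptyset)$, the clause $\forall 0^L \in \emptyset$ is vacuous, giving
\[
x \gtrsim 0 \ \iff\ \forall x^R \in X^R.\ 0 \not\gtrsim x^R,
\]
and analogously $0 \gtrsim x^R \iff \forall (x^R)^L.\ (x^R)^L \not\gtrsim 0$. Unfolding one more level yields
\[
x \gtrsim 0 \ \iff\ \forall x^R \in X^R.\ \exists (x^R)^L.\ (x^R)^L \gtrsim 0.
\]
For the $(\Rightarrow)$ direction of the key lemma, I would build a winning strategy for LII by answering each R opening move $x^R$ with a witness $(x^R)^L$ supplied by the display above, and then appealing to the inductive hypothesis, which gives a winning LII-strategy on $(x^R)^L$; note that after L's reply it is R's turn on $(x^R)^L$, so LII is indeed the correct r\^ole there. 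For $(\Leftarrow)$, a winning LII-strategy $f$ on $x$ dictates, for every R opening $x^R$, a reply to some $(x^R)^L$ from which $f$ restricted remains winning for LII; by IH, $(x^R)^L \gtrsim 0$, supplying the required existential. The base case $x = 0$ is immediate: $0 \gtrsim 0$ holds vacuously and LII wins vacuously, because R has no opening move.

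With the key lemma and its dual in hand, the theorem follows by combining them with Conway's Determinacy (Theorem~\ref{wst}), which asserts that exactly one of L, R, I, II has a winning strategy on any Conway game. Indeed, LII wins iff the unique winner lies in $\{\mathrm{L},\mathrm{II}\}$, and RII wins iff the unique winner lies in $\{\mathrm{R},\mathrm{II}\}$; intersecting these possibilities with the four cases $x > 0$, $x < 0$, $x \sim 0$, $x || 0$, which are mutually exclusive and jointly exhaustive by the definitions of $>$, $\sim$, $||$, gives the four equivalences. For instance, $x > 0$ reads ``LII wins and RII does not win'', which by the dichotomy pins the unique winner down to L.

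The principal obstacle is the doubly-negated inductive definition of $\gtrsim$: the unfolding above requires careful bookkeeping of the $\forall/\exists$ alternation across two plies of the game tree, and the strategy extracted in the $(\Rightarrow)$ direction must correctly track whose turn it is after each move so that the inductively supplied LII-strategy on $(x^R)^L$ plugs in as a second-player strategy there. Once this bookkeeping is in place, the rest of the argument is a routine application of determinacy.
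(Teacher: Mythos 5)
Your proof is correct. Note, though, that the paper does not actually prove Theorem~\ref{car}: it is stated as a citation of Conway, and what the paper proves instead is the generalization to hypergames, Theorem~\ref{chh}, from which the Conway case follows by restriction (since $(\gtrsim,\not\gtrsim)$ is the restriction of $(\mge,\nmge)$ to well-founded games, where non-losing and winning strategies coincide). Your argument is essentially Conway's original one: well-founded induction on $x$, with the key lemma ``$x\gtrsim 0$ iff LII wins'' obtained by unfolding the doubly-negated definition two plies deep, plus determinacy to assemble the four cases. The paper's route replaces this induction by coinduction: the $(\Leftarrow)$ direction of Theorem~\ref{chh} exhibits a $\Phi$-bisimulation built from exactly the relations you would feed into your induction (pairs $(x,0)$ such that LII has a non-losing strategy, etc.), and the $(\Rightarrow)$ direction constructs the strategy by iterating the same definitional unfolding you use. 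So the two proofs are structurally parallel; the trade-off is that your induction is more elementary and self-contained for Conway's games, but it is unavailable for non-wellfounded games, which is precisely why the paper sets up the greatest-fixpoint pair $(\mge,\nmge)$ and the $\Phi$-coinduction principle. Two small points you gloss over, neither fatal: Theorem~\ref{wst} as stated gives existence of a winner, and the ``exactly one'' you invoke needs the exclusivity supplied by Lemma~\ref{non} (or Theorem~\ref{qua}); and in your $(\Leftarrow)$ direction the claim that the restriction of a winning LII-strategy past the prefix $x^R(x^R)^L$ is a winning LII-strategy on the game $(x^R)^L$ is cleanest if you first pass to positional strategies via Proposition~\ref{posi}.
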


The  generalization  to hypergames of Theorem~\ref{car} above is quite subtle, because it requires 
 to extend the relation $\gtrsim$ to hypergames, and this needs particular care. We would like to define
such relation  \emph{by coinduction}, as the greatest  fixpoint of a monotone operator on relations, however the 
operator which is naturally induced by the definition of $\gtrsim$ on Conway's games is \emph{not} monotone. 
This problem can be overcome as follows. 

Observe 
that the relation $\gtrsim$ in Definition~\ref{geq}  is defined in terms of the relation $\not\gtrsim$. Vice versa $\not \gtrsim$ is defined in terms of $\gtrsim$ by:
\[ x\not\gtrsim y \ \mbox{iff}\ \exists x^R.  y \gtrsim x^R \ \vee \ \exists y^L. y^L \gtrsim x \ .\]

Therefore, on hypergames
the idea is to define  both relations at the same time,  namely through  the greatest fixpoint of the following 
operator on pairs of relations:

\begin{defi}\label{ot}
Let $\Phi: {\mathcal P}({\mathcal H}\times {\mathcal H})\times {\mathcal P}({\mathcal H}\times {\mathcal H}) \longrightarrow {\mathcal P}({\mathcal H}\times {\mathcal H})\times {\mathcal P}({\mathcal H}\times {\mathcal H})$ be the operator defined by:
\begin{center}
$\hspace*{-1cm} \Phi ({\mathcal R}_1, {\mathcal R}_2)=(\{ (x,y)\  | \ \forall x^R.y {\mathcal R}_2 x^R \ \wedge\ \forall y^L. y^L {\mathcal R}_2 x \} , \ $
\\
\hspace*{6cm} $\{ (x,y)\  |\  \exists x^R.y {\mathcal R}_1 x^R \ \vee\ \exists y^L. y^L {\mathcal R}_1 x \} )$
\end{center}
\end{defi}

The above operator is monotone componentwise. Thus we can define: 

\begin{defi}\label{nov}
Let  the pair $(\mge, \nmge)$ be  the greatest fixpoint of $\Phi$.
\\
Furthermore, we define:\\
\begin{tabular}{lll}
 -- $x\mg y$\ \  & \mbox{iff}\ \   & $x\mge y \ \wedge\ y\nmge x$
\\ -- $x\bow y$\ \   & \mbox{iff}\ \  & $x\mge y \ \wedge\ y\mge x$
\\ -- $x || y$ \ \  & \mbox{iff}\ \  & $x\nmge y \ \wedge\ y\nmge x$
\end{tabular}
 \end{defi}
 
 As an immediate consequence of Tarski's Theorem, the above definition of the pair of relations $(\mge, \nmge)$ 
 as the greatest fixpoint of $\Phi$ gives us  \emph{Coinduction Principles}, which will be  useful in the sequel:

\begin{coin} We call \emph{$\Phi$-bisimulation} a pair of relations $ ({\mathcal R}_1, {\mathcal R}_2)$ such that
$({\mathcal R}_1, {\mathcal R}_2)\subseteq \Phi ({\mathcal R}_1, {\mathcal R}_2) $. The following principles hold:\medskip
 \begin{center}
  \reguno{({\mathcal R}_1, {\mathcal R}_2)\  \Phi\mbox{-bisimulation} \ \ \  x {\mathcal R}_1 y}{x \mge y} \ \ \ \ \ 
  \reguno{({\mathcal R}_1, {\mathcal R}_2)\  \Phi\mbox{-bisimulation}   \ \ \  x {\mathcal R}_2 y}{x \nmge y}
	 \end{center} 
	 \end{coin}
 
Notice that the pair of relations $(\gtrsim, \not\gtrsim)$ on Conway's games coincides with the restriction of 
$(\mge, \nmge)$ to Conway's games. Namely, the pair $(\gtrsim, \not\gtrsim)$
is  the least fixpoint of $\Phi$.

Moreover, somewhat surprisingly at  first sight, notice that the relations $\mge$ and $\nmge$ are \emph{not} disjoint. 
\emph{E.g.} the game $c=(\{ c\}, \{ c\})$
is such that both $c\mge 0$ and $c\nmge 0$ (and also $0\mge c$ and $0\nmge c$) hold. However, this is perfectly consistent in the
hypergame scenario, since
it  is in accordance with the fact that some
hypergames have non-losing strategies for more than one player. The following generalizes Conway's Characterization Theorem~\ref{car}:

\begin{thm}[Characterization] \label{chh}
Let $x$ be a hypergame. Then
\\
\begin{tabular}{llll}
$x\mg 0 $ \ \  \ \  & iff\ \ \ \ & $x$ has a non-losing strategy for L.
\\  $x\mn 0 $\ \  & iff & $x$ has a non-losing strategy for R.
\\ $x\bow 0 $ & iff & $x$ has a non-losing strategy for II.
\\ $x||0 $  &  iff & $x$ has a non-losing strategy for I.
\end{tabular}
\end{thm}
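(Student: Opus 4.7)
The plan is to reduce the theorem to the four elementary equivalences
\begin{iteMize}{$\bullet$}
\item (a) $x \mge 0$ iff LII has a non-losing strategy on $x$;
\item (b) $0 \mge x$ iff RII has a non-losing strategy on $x$;
\item (c) $x \nmge 0$ iff RI has a non-losing strategy on $x$;
\item (d) $0 \nmge x$ iff LI has a non-losing strategy on $x$,
\end{iteMize}
and then to read off the four clauses of the theorem by unpacking Definition~\ref{nov}. For instance $x\mg 0$ expands to $(x\mge 0)\wedge(0\nmge x)$, which by (a) and (d) is precisely ``L has a non-losing strategy both as II and as I'', i.e.\ as~L; the clauses for $\mn$, $\bow$ and $||$ pair up in the same way.

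Since $0=(\emptyset,\emptyset)$, the fixpoint equations for $(\mge,\nmge)$ unfold at $0$ into the simple characterisations
\[ x\mge 0 \ \iff\ \forall x^R\,\exists (x^R)^L.\ (x^R)^L \mge 0, \qquad 0\nmge x \ \iff\ \exists x^L.\ x^L \mge 0, \]
together with the R/L-dual formulations of $0\mge x$ and $x\nmge 0$. For the forward implication of (a) I would use the Axiom of Choice to pick, for every $y$ with $y\mge 0$ and every $y^R$, an L-option $(y^R)^L$ which is still $\mge 0$; this yields a positional LII-strategy maintaining the invariant ``after every L-move the current position is $\mge 0$''. Under that invariant L is never stuck, so every maximal play is either infinite (a draw, hence non-losing for L by our convention) or ends with R to move and no R-option (a win for L). The forward direction of (b) is the R/L dual; (c) and (d) first use the existential witness to make an opening move and then reduce to (b) and (a) respectively.

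For the backward direction I would avoid four ad hoc constructions by packaging all the information into a single coinductive step. Define the pair of relations
\[ \mathcal{R}_1 = \{(y,0) : \text{LII non-losing on } y\} \cup \{(0,y) : \text{RII non-losing on } y\}, \]
\[ \mathcal{R}_2 = \{(y,0) : \text{RI non-losing on } y\} \cup \{(0,y) : \text{LI non-losing on } y\}, \]
and check that $(\mathcal{R}_1,\mathcal{R}_2)\subseteq\Phi(\mathcal{R}_1,\mathcal{R}_2)$; the Coinduction Principle then delivers $\mathcal{R}_1\subseteq\mge$ and $\mathcal{R}_2\subseteq\nmge$, which is exactly the backward halves of (a)--(d). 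All four verifications exploit the fact that the sub-game reached after a single move swaps the r\^ole II$\leftrightarrow$I: if $f$ is a non-losing LII-strategy on $y$ then $\pi\mapsto f(y^R\,\pi)$ is a non-losing LI-strategy on each $y^R$, and a non-losing LI-strategy on $y$ exposes, via its opening move, a position $y^L$ on which LII is non-losing; together with the R-duals these four facts are precisely what the universal and existential clauses of $\Phi$ require.

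The main obstacle I anticipate is the bookkeeping needed to align the asymmetry between $\mge$ (universal quantifiers over options inside $\Phi$) and $\nmge$ (existential quantifiers) with the asymmetry between player II (respondent) and player I (opener), while simultaneously tying the L/R dichotomy to the two components of $(\mathcal{R}_1,\mathcal{R}_2)$. Once that alignment is made explicit, each forward implication reduces to a routine AC-based positional-strategy construction and each backward implication collapses into a single application of coinduction.
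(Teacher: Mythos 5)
Your proposal is correct and follows essentially the same route as the paper: the forward direction builds positional LII/LI (and dually RII/RI) strategies by repeatedly unfolding the fixpoint equations for $(\mge,\nmge)$ at $0$ and choosing witnesses via AC, and the backward direction uses exactly the same pair of relations $(\mathcal{R}_1,\mathcal{R}_2)$ verified as a single $\Phi$-bisimulation. Your explicit factoring through the four statements (a)--(d) (which are precisely the paper's Theorem~\ref{spe}) is a tidy reorganization --- the paper instead derives that result afterwards from this theorem together with Proposition~\ref{bas}(i) and determinacy --- but the underlying constructions are identical.
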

\begin{proof}
$(\Rightarrow)$ Assume $x\mg 0$, \emph{i.e.}  $x\mge 0$ and $0\nmge x$. 
We have to build non-losing strategies $f_{LI}$  for LI and $f_{LII}$ for LII. For LII:
since $x\mge 0$, then, by definition, $\forall x^R. 0 \nmge x^R$, \emph{i.e.}, for any R move $x^R$, $0\nmge x^R$.
 Let $x^R =(X^{RL}, X^{RR})$, then
$\exists x^{RL} \in X^{RL}. x^{RL} \mge 0$, that is there exists a L move $x^{RL}$ such that $x^{RL} \mge 0$. 
Hence, for any $x^R \in X^R$, we define $f_{LII} (x^R) = x^{RL}$.
Then, in order to extend  the definition of $f_{LII}$, we can apply again the two steps above starting from $X^{RL}$: either we  go on forever or we stop when R cannot move. In this way, we define $F_{LII}$ on coherent plays; on other plays $f_{LII}$ can be defined in an arbitrary way.
For LI: since $0\nmge x$, then $\exists x^L. x^L \mge 0$. This allows us to define $f_{LI}$ on the empty play. The definition of $f_{LI}$ can then be extended 
from $x^L$, using 
 the above construction of a non-losing strategy for LII.
\\ The other cases are dealt with similarly.
\\ $(\Leftarrow)$ We proceed  by coinduction, by showing all the four cases at the same time. Let 
\\ ${\mathcal R}_1= \{  (x,0) \ | \ x \mbox{ has a non-losing strategy for } LII \}\ \cup$
\\ \hspace*{5cm} $\{ (0,x) \ | \  x \mbox{ has a non-losing strategy for } RII  \} $, 
\\ ${\mathcal R}_2= \{  (x,0) \ | \ x \mbox{ has a non-losing strategy for } RI \}\ \cup$
\\ \hspace*{5cm} $\{ (0,x) \ | \  x \mbox{ has a non-losing strategy for } LI  \} $. 
\\ We  prove that $({\mathcal R}_1, {\mathcal R}_2)$ is a $\Phi$-bisimulation.  There are various cases to discuss. We only show one
case, the others being similar. We prove that, if $x {\mathcal R}_1 0$ and $x$ has a non-losing strategy for LII, then $\forall x^R. 0 {\mathcal R}_2
x^R$. If LII has a non-losing strategy on $x$, then,   for all $x^R$, there is a non-losing strategy for LI on $x^R$, hence, by definition,
$\forall x^R. 0 {\mathcal R}_2 x^R $.
\end{proof}

The following table summarizes the Characterization Theorem:
\begin{center}
\begin{tabular}{|c| l | l | l | l |}\hline
Non-losing strategies & \multicolumn{2}{c |}{Relations w.r.t. $0$} 
\\ \hline\hline
 L  &\  $x\mg 0$ \  &\ $ x\mge 0 \ \wedge\ 0\nmge x$\ 
\\ \hline
 R  &\ $x\mn 0$ \   &\ $ x\nmge 0 \ \wedge\ 0\mge x$ \
\\  \hline
\ \ II \ \  &\ $x\bow 0$ \   &\ $ x\mge 0 \ \wedge\ 0\mge x$ \
\\  \hline
I   &\ \ $x || 0$   &\ $ x\nmge 0 \ \wedge\ 0\nmge x$\  \\  \hline
\end{tabular}
\end{center}

\subsubsection{Properties of $\mge$.} \label{prom}
The following proposition, which  can be  proved by coinduction, 
generalizes  the corresponding results of \cite{Con76}  to hypergames:

\begin{prop}\label{bas}
For all hypergames $x,y$, we have
\\ \emph{\phantom i(i)} $x\mge y  \ \vee \ x\nmge y \ $.
\\ \emph{(ii)}  $x\nmge x^R\  \wedge \
x^L\nmge x\  \wedge \
x\mge x \  \wedge \ x\bow x\ . $
\end{prop}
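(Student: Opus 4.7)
The plan is to prove both statements by coinduction using the $\Phi$-bisimulation principle, with a single cleverly chosen pair of relations for (i) and another pair for (ii).

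For part (i), the idea is to exhibit the complements of $\mge$ and $\nmge$ as a $\Phi$-bisimulation. Concretely, I would set
\[ {\mathcal R}_1 = \{(x,y) \mid \neg (x \nmge y)\} \qquad \text{and} \qquad {\mathcal R}_2 = \{(x,y) \mid \neg (x \mge y)\},\]
and verify $({\mathcal R}_1,{\mathcal R}_2) \subseteq \Phi({\mathcal R}_1,{\mathcal R}_2)$. The key observation is that since $(\mge,\nmge)$ is a fixpoint of $\Phi$, we have the equivalences $x \nmge y \Leftrightarrow (\exists x^R. y \mge x^R) \vee (\exists y^L. y^L \mge x)$ and $x \mge y \Leftrightarrow (\forall x^R. y \nmge x^R) \wedge (\forall y^L. y^L \nmge x)$. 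Negating the first yields $\forall x^R.\neg(y \mge x^R)$ and $\forall y^L.\neg(y^L\mge x)$, i.e.\ $\forall x^R. y {\mathcal R}_2 x^R$ and $\forall y^L. y^L {\mathcal R}_2 x$; negating the second yields $\exists x^R. y {\mathcal R}_1 x^R$ or $\exists y^L. y^L {\mathcal R}_1 x$. These are precisely the two clauses in $\Phi$, so the $\Phi$-bisimulation check goes through by pure duality between the universal and existential components of $\Phi_1$ and $\Phi_2$. Coinduction then gives ${\mathcal R}_1 \subseteq \mge$ and ${\mathcal R}_2 \subseteq \nmge$, which together say $\neg(x \nmge y) \Rightarrow x \mge y$ and $\neg(x \mge y) \Rightarrow x \nmge y$, yielding the required disjunction $x \mge y \vee x \nmge y$.

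For part (ii), I would establish reflexivity $x \mge x$ together with $x \nmge x^R$ and $x^L \nmge x$ simultaneously by a single coinduction with
\[ {\mathcal R}_1 = \{(x,x) \mid x \in {\mathcal H}\}, \qquad {\mathcal R}_2 = \{(x,x^R) \mid x^R \in X^R\} \cup \{(x^L, x) \mid x^L \in X^L\}. \]
To verify the $\Phi$-bisimulation conditions: for $(x,x) \in {\mathcal R}_1$, the clause $\forall x^R. x {\mathcal R}_2 x^R$ holds by the first half of ${\mathcal R}_2$, and $\forall x^L. x^L {\mathcal R}_2 x$ holds by the second half. For $(x,x^R) \in {\mathcal R}_2$, the existential clause is satisfied by choosing the right option $x^R$ of $x$ itself, which gives $x^R {\mathcal R}_1 x^R$; symmetrically for $(x^L,x) \in {\mathcal R}_2$ one uses the left option $x^L$ of $x$ itself, obtaining $x^L {\mathcal R}_1 x^L$. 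Coinduction then yields $x \mge x$, $x \nmge x^R$, and $x^L \nmge x$, and $x \bow x$ is immediate from $x \mge x$ combined with itself.

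The main subtlety is getting the setup for (i) right: a naive attempt—such as pairing ${\mathcal R}_1 = \{(x,y)\mid \neg(x\nmge y)\}$ with ${\mathcal R}_2 = \nmge$—leads to a circular use of (i) itself, since the fixpoint equation only tells us that $\neg(y\mge x^R)$ is equivalent to $y\nmge x^R$ once (i) is available. Pairing the complement of $\nmge$ with the complement of $\mge$ sidesteps this, because then the two $\Phi$-bisimulation verifications are exactly the contrapositives of the two fixpoint equations. Part (ii) is more routine, with the only delicate point being the self-witnessing choice of $x^R$ (resp.\ $x^L$) when verifying the existential clause of $\Phi_2$ on ${\mathcal R}_2$.
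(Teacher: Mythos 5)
Your proof is correct and follows essentially the same route as the paper: for (i) you use exactly the pair ${\mathcal R}_1 = \{(x,y)\mid \neg(x\nmge y)\}$, ${\mathcal R}_2 = \{(x,y)\mid \neg(x\mge y)\}$ and for (ii) the pair ${\mathcal R}_1=\{(x,x)\}$, ${\mathcal R}_2=\{(x,x^R)\}\cup\{(x^L,x)\}$, which are precisely the $\Phi$-bisimulations the paper exhibits. The only difference is that you spell out the verification (the duality with the fixpoint equations, and the self-witnessing choice of option) that the paper leaves as ``one can easily check.''
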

\begin{proof}\hfill\\
\phantom i(i) If $x\mge y $ then we are done. Otherwise, assume $\neg x\mge y$. Then we show that $x\nmge y$. Namely, let
\[ \mathcal{R}_1 = \{ (x,y) \mid \neg x \nmge y \} \ \ \ \mbox{and}\  \ \  \mathcal{R}_2 = \{ (x,y) \mid \neg x \mge y \} \ .\]
One can easily check that $(\mathcal{R}_1, \mathcal{R}_2)$ is a $\Phi$-bisimulation. Therefore, in particular
$\neg x \mge y \ \Longrightarrow\ x\nmge y$.
\\ (ii) One can easily prove that the pair $(\mathcal{R}_1, \mathcal{R}_2)$ is a $\Phi$-bisimulation, where
\[ \mathcal{R}_1 = \{ (x,x) \mid  x \in \mathcal{H} \} \  \mbox{and}\  \mathcal{R}_2 = \{ (x,x^R) \mid x \in \mathcal{H}\} \cup
\{ (x^L,x) \mid x \in \mathcal{H}\} \ . \]
\end{proof}

As a consequence of Proposition~\ref{bas}(i), Theorem~\ref{chh} can be specialized as follows:

\begin{thm} \label{spe} Let $x$ be a hypergame. Then 
\\
\begin{tabular}{llll}
$x\mge 0 $ \ \  \ \  & iff\ \ \ \ & $x$ has a non-losing strategy for LII.
\\  $x\nmge 0 $\ \  & iff & $x$ has a non-losing strategy for RI.
\\ $0 \mge x $ & iff & $x$ has a non-losing strategy for RII.
\\ $0\nmge x $  &  iff & $x$ has a non-losing strategy for LI.
\end{tabular}
\end{thm}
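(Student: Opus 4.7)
The plan is to derive each of the four biconditionals from the Characterization Theorem~\ref{chh} by invoking Proposition~\ref{bas}(i) to split the conjunctions $\mg$, $\mn$, $\bow$, and $||$ back into their $\mge$ and $\nmge$ constituents. I describe the argument in detail for the first biconditional, $x\mge 0$ iff LII is non-losing; the other three follow by the same template applied to the appropriate antagonistic pair of sub-players.

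For the direction $(\Rightarrow)$, assume $x\mge 0$. By Proposition~\ref{bas}(i) applied to the pair $(0,x)$, either $0\mge x$ or $0\nmge x$. In the former case one has $x\bow 0$ by Definition~\ref{nov}, so Theorem~\ref{chh} provides a non-losing strategy for player II, which by Definition~\ref{stp}(ii) contains a non-losing strategy for LII. In the latter case $x\mg 0$, and Theorem~\ref{chh} supplies a non-losing strategy for L, which again contains an LII component.

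For the direction $(\Leftarrow)$, assume LII has a non-losing strategy on $x$. The key subclaim is that at least one of LI, RII must also be non-losing on $x$. Granted the subclaim, pairing the hypothesized LII-strategy with an LI-strategy yields a non-losing strategy for L, whence Theorem~\ref{chh} gives $x\mg 0$ and so $x\mge 0$; pairing it instead with an RII-strategy yields a non-losing strategy for II, whence Theorem~\ref{chh} gives $x\bow 0$ and again $x\mge 0$. To establish the subclaim I argue by contradiction: if neither LI nor RII were non-losing on $x$, then Lemma~\ref{non}(i) together with its L/R-symmetric sibling from (iii) would force both LI and RII to admit \emph{winning} strategies, say $f$ and $g$; the unique play $f*g$ coherent with both would then have to lie simultaneously in $\mathit{WPlay}^L_x$ (a finite play ending with R unable to move) and in $\mathit{WPlay}^R_x$ (ending with L unable to move), which is absurd since a finite play's last move is performed by exactly one of the two players.

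The main obstacle is precisely this little joint-determinacy subclaim: the consequence of Lemma~\ref{non} that LI and RII, respectively LII and RI, cannot both fail to admit non-losing strategies. The remaining three biconditionals go through by identical reasoning. The pair (LI, RII) is reused for $x\nmge 0$ iff RI non-losing (observing that R decomposes as RI and RII and I as LI and RI), while the dual pair (LII, RI) handles both $0\mge x$ iff RII non-losing and $0\nmge x$ iff LI non-losing.
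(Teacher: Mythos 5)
Your proof is correct and follows the paper's own argument almost step for step: the forward direction is identical (split via Proposition~\ref{bas}(i) into the $x\bow 0$ and $x\mg 0$ cases and apply Theorem~\ref{chh}), and the backward direction has the same skeleton (show that an LII non-losing strategy forces a non-losing strategy for L or for II, then apply Theorem~\ref{chh} again). The one place you genuinely diverge is in justifying the joint subclaim that LI and RII cannot both fail to be non-losing: the paper simply observes that otherwise none of L, R, I, II would have a non-losing strategy, contradicting the Determinacy Theorem~\ref{nls}, whereas you prove the needed instance directly --- Lemma~\ref{non} would give antagonistic \emph{winning} strategies $f$ for LI and $g$ for RII, and the unique play coherent with both would lie in $\mathit{WPlay}^L_x\cap \mathit{WPlay}^R_x$, which is empty. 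Your route is marginally more self-contained, re-proving a fragment of determinacy rather than citing it; the paper's is shorter. One cosmetic slip: for the pair (LI, RII) the two implications you need are Lemma~\ref{non}(i) itself and the L/R-symmetric version of (ii) (not the symmetric version of (i), which concerns the pair (LII, RI)); since all four variants are available through clause (iii), nothing breaks.
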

\begin{proof}
Assume $x \mge 0 $. Then we show that $x$ has a non-losing strategy for LII. Namely, by Proposition~\ref{bas}(i), $0\mge  x$ or $0\nmge x$.
In the first case, $x\mge  0$ and $0\mge  x$, \emph{i.e.} $x\bow 0$ and,  by Theorem~\ref{chh}, $x$ has a non-losing strategy for II. In the latter case,
 $x\mge 0$ and $0\nmge x$, \emph{i.e.} $x \mg 0$ and, by Theorem~\ref{chh}, $x$ has a non-losing strategy for L. Vice versa, assume $x$ has a non-losing
 strategy for LII. Then we show that $x$ has a non-losing strategy for L or for II. Thus, by  Theorem~\ref{chh}, $x\mge 0$. Assume by contradiction that $x$ has no non-losing strategies for L and II. Then $x$ has no non-losing strategies for LI and RII, hence $x$ has no non-losing strategies for L,R,I,II, contradicting 
 Theorem~\ref{nls}. The other cases can be proven similarly.
\end{proof}

Contrary to what happens on  Conway's games, the relation $\mge$ is \emph{not} a partial order on
hypergames, since $\mge$ fails to be transitive; as a consequence,  $\bow$ is \emph{not} an equivalence.\medskip

\noindent \emph{Counterexample.} Let $c=(\{c\}, \{ c\})$ and $c_1=(\{c_1\} , \{ 0\})$.  Then $c\mge 0$, since $c$ has
 non-losing strategies for all the players. Moreover, one can show that $c_1 \mge c$, by coinduction, by
 considering the relations ${\mathcal R}_1 = \{ (c_1,c) \} \cup \{(0,c)\}$ and ${\mathcal R}_2 = \{ (c,c_1) \} \cup \{(c,0)\}$.
 Thus we have $c_1\mge c \ \wedge\ c\mge 0$. However, one can easily check that $c_1\mge 0$ does not hold.
\\
The problem is that  the  ``pivot'' $c$   allows for unlimited plays.
Transitivity can be recovered on pairs where the pivot is 
 a well-founded game. We omit the details.

\section{Operations on Games: Sum and Negation.}\label{gcon}
An important operation on games studied in \cite{Con76} is  sum, arising when more games are played simultaneously. 
There are various ways in which we can play several different games at once. We shall focus only on the most literal one, where
 at each step
 the next player selects any of the component games and makes any legal move on that game, the other
games remaining unchanged. The other player can either choose to move in the same component or in a different one.
 This kind of compound game  can be formalized through the \emph{(disjunctive) sum}, \cite{Con76}, which can be naturally extended to hypergames via the following coinductive definition, whereby the sum operation is obtained as final morphism:

\begin{defi}[Game Sum]\label{hs}
The \emph{sum on games} is given by the the final morphism $+: ({\mathcal H} \times {\mathcal H}, \alpha_{+}) \longrightarrow
({\mathcal H}, \mbox{id})$, where the coalgebra morphism $\alpha_{+}: {\mathcal H} \times {\mathcal H}\longrightarrow F({\mathcal H} \times {\mathcal H})$
is defined by 
\\
$ \alpha_+ (x,y)=  (\{ (x^L , y) \ | \ x^L \in X^L\} \cup \{(x,y^L) \ |\  y^L \in Y^L   \}, $\\
\hspace*{5.7cm}
$\{ (x^R , y) \ |\  x^R \in X^R\} \cup \{(x,y^R) \ | \ y^R \in Y^R   \} ) \ . $

\[
\xymatrix{
\mathcal{H}\times \mathcal{H} \ar[r]^{+}   \ar[d]_{\alpha_{+}} & \mathcal{H}   \ar[d]^{\mathit{id}}  \\
F( \mathcal{H}\times \mathcal{H}) \ar[r]_{F( +)}        & F(  \mathcal{H})          
}
\] 

\end{defi}

\noindent That is $+$ is such that:
\\
$ x+y = (\{ x^L + y \ | \ x^L \in X^L\} \cup \{  x+y^L \ |\  y^L \in Y^L   \},   $
\\ \hspace*{5.2cm} $\{ x^R + y \ |\  x^R \in X^R\} \cup \{  x+y^R \ | \ y^R \in Y^R   \} ) \ .$
\smallskip

The above definition of game sum subsumes  the definition of sum on Conway's games. Game sum 
resembles \emph{shuffling} on processes. In fact it coincides with interleaving in the case of impartial
games, \emph{i.e.} games on which L and R have the same options at any position. 

A typical example of a sum game arises when the two players play on two different chess boards at once, each time choosing a board on which to move, and performing a move on that board. Notice that in this way  the alternance of L and R in the single component games is missed. Clearly, this is not the way simultaneous chess exhibition
games are played. It is the way many end-games can be analyzed  in Go, for instance.  For other examples of sum games see Section~\ref{ih}, where generalized Nim and ``Traffic Jam'' games are
discussed.

Game sum satisfies usual properties of sum (commutativity, associativity, etc.) and it induces a commutative semigroup with the game $0$ as
zero. These properties  are established using the basic Coinduction Principle~\ref{cbas}. Equality  is hypergame identity.

\begin{prop}\label{coas}
For all games $x,y,z$,
\\ \emph{\phantom{ii}(i)} $x+0=x$
\\ \emph{\phantom i(ii)} $x+y= y+x$
\\ \emph{(iii)} $(x+y)+z= x+ (y+z).$
\end{prop}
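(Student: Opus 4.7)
The plan is to derive all three identities from the Coinduction Principle~\ref{cbas} by exhibiting, for each identity, an explicit symmetric relation on hypergames and checking that it is a hyperbisimulation in the sense of Lemma~\ref{hbis}. The unfolding of $+$ provided by Definition~\ref{hs} turns each verification into a routine matching of L- and R-options on the two sides.

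For (i), I take $\mathcal{R}_1 = \{(x+0,x) \mid x \in \mathcal{H}\} \cup \{(x,x+0) \mid x \in \mathcal{H}\}$. Since $0 = (\emptyset,\emptyset)$, Definition~\ref{hs} gives that the L-options of $x+0$ reduce to $\{x^L + 0 \mid x^L \in X^L\}$, each paired under $\mathcal{R}_1$ with the corresponding L-option $x^L$ of $x$; the same happens for R-options. Hence $\mathcal{R}_1$ is a hyperbisimulation and the principle yields $x+0 = x$.

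For (ii), I take $\mathcal{R}_2 = \{(x+y,\, y+x) \mid x,y \in \mathcal{H}\} \cup \{(y+x,\, x+y) \mid x,y \in \mathcal{H}\}$. The L-options of $x+y$ form the set $\{x^L + y \mid x^L \in X^L\} \cup \{x + y^L \mid y^L \in Y^L\}$, and these are matched by $\mathcal{R}_2$ with the L-options $\{y + x^L\} \cup \{y^L + x\}$ of $y+x$ (pair $x^L + y$ with $y + x^L$ and $x + y^L$ with $y^L + x$); the R-options are handled symmetrically. For (iii), I take $\mathcal{R}_3 = \{((x+y)+z,\, x+(y+z)) \mid x,y,z \in \mathcal{H}\} \cup \{(x+(y+z),\, (x+y)+z) \mid x,y,z \in \mathcal{H}\}$. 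Applying Definition~\ref{hs} twice, the L-options of $(x+y)+z$ are $\{(x^L+y)+z,\, (x+y^L)+z,\, (x+y)+z^L\}$, which are paired by $\mathcal{R}_3$ with the L-options $\{x^L+(y+z),\, x+(y^L+z),\, x+(y+z^L)\}$ of $x+(y+z)$; the R-options are analogous.

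There is no real obstacle here: once the right candidate relations are written down, the verifications reduce to unfolding Definition~\ref{hs} and matching the three (resp.\ two, resp.\ one) families of options on each side. The only subtle point is that hyperbisimulations are required to be symmetric, which is why each relation above is explicitly closed under swapping. It is precisely the coinductive treatment that lets these identities be established uniformly on all hypergames, including the non-wellfounded ones such as $c = (\{c\},\{c\})$ where an inductive argument would not be available.
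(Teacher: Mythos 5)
Your proof is correct and follows essentially the same route as the paper: the paper's proof consists precisely of invoking the Coinduction Principle on the symmetric closures of the relations $\{(x+0,x)\}$, $\{(x+y,y+x)\}$, and $\{((x+y)+z,x+(y+z))\}$, leaving the option-matching implicit. Your explicit unfolding of Definition~\ref{hs} and pairing of L- and R-options is exactly the verification the paper omits as routine.
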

\begin{proof}
By coinduction, showing that the symmetric closures of the relations $\mathcal{R}_1 = \{ (x+0,x) \mid x\in \mathcal {H} \}$, 
$\mathcal{R}_2 = \{ (x+y,y+x) \mid x,y\in \mathcal {H} \}$, $\mathcal{R}_3 = \{ ((x+y)+z, x+ (y+z)) \mid x,y,z\in \mathcal {H} \}$
are hyperbisimulations.
\end{proof}

The  \emph{negation} is a unary operation on games, which allows to build a new game, where the roles
of L and R are exchanged.The following is the coinductive extension to hypergames of   \emph{negation}  on Conway's games:

\begin{defi}[Game Negation]
The \emph{negation} of a game is given by the final morphism $-: ({\mathcal H} , \alpha_{-}) \longrightarrow
({\mathcal H}, \mbox{id})$, where the coalgebra morphism $\alpha_{-}: {\mathcal H} \longrightarrow F({\mathcal H})$
is defined by 
$ \alpha_{-} (x)=  (\{ x^R  \ | \ x^R \in X^R\}, \  \{   x^L \ |\  x^L \in X^L   \} ) \ . $
\[
\xymatrix{
\mathcal{H} \ar[r]^{-}   \ar[d]_{\alpha_{-}} & \mathcal{H}   \ar[d]^{\mathit{id}}  \\
F( \mathcal{H}) \ar[r]_{F( -)}        & F(  \mathcal{H})          
}
\] 
\end{defi}

\noindent That is:
$- x = (\{ -x^R\ | \ x^R \in X^R \} , \ \{-x^L\  |\  x^L \in X^L \} )  \ .$\smallskip

In particular, if $x$ has a non-losing/winning strategy for LI  (LII), then $-x$ has a non-losing/winning strategy for RI (RII), and symmetrically.
Taking seriously players L and R and not fixing a priori L or R to play first, makes the definition of $-$ very natural.

The following basic properties of negation  can be easily shown by building corresponding hyperbisimulations:

\begin{prop}
For all games $x,y$,
\\ \emph{\phantom i(i)} $ - (x+y) = -x + (-y)$
\\ \emph{(ii)} $-(-x)= x$.
\end{prop}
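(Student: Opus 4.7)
The plan is to mimic the pattern used in Proposition~\ref{coas}: in each case I will exhibit a symmetric relation containing the pair I want to equate, verify it is a hyperbisimulation in the sense of Lemma~\ref{hbis}, and conclude by Coinduction Principle~\ref{cbas}. No ingenuity beyond correctly guessing the candidate relation is needed, because the coalgebraic presentations of $+$ and $-$ were specifically designed to make such identities follow from one-step unfolding.

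For (i), I take
\[\mathcal{R}_1 = \{(-(x+y),\, -x + (-y)) \mid x,y \in \mathcal{H}\}\]
together with its symmetric closure. To check the hyperbisimulation condition I unfold the definitions of $+$ and $-$ in sequence. The left options of $-(x+y)$ are the negations of the right options of $x+y$, namely
$\{-(x^R + y) \mid x^R \in X^R\} \cup \{-(x + y^R) \mid y^R \in Y^R\}$.
The left options of $-x + (-y)$, using that the left options of $-x$ are $\{-x^R \mid x^R \in X^R\}$ and similarly for $-y$, are
$\{-x^R + (-y) \mid x^R \in X^R\} \cup \{-x + (-y^R) \mid y^R \in Y^R\}$.
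These match pairwise as instances of $\mathcal{R}_1$, with $x$ replaced by $x^R$ (resp.\ $y$ by $y^R$). The verification for right options is entirely symmetric.

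For (ii), I take
\[\mathcal{R}_2 = \{(-(-x),\, x) \mid x \in \mathcal{H}\}\]
together with its symmetric closure. Unfolding, the left options of $-(-x)$ are the negations of the right options of $-x$, which in turn are the negations of the left options of $x$; hence the left options of $-(-x)$ are $\{-(-x^L) \mid x^L \in X^L\}$, each paired in $\mathcal{R}_2$ with the corresponding left option $x^L$ of $x$. The right options are treated symmetrically, so $\mathcal{R}_2$ is a hyperbisimulation.

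There is no real obstacle here. The only point worth flagging is a conceptual one: since hypergames include non-wellfounded games, we cannot proceed by structural induction as in the Conway case, and one must take the relation $\mathcal{R}_1$ (resp.\ $\mathcal{R}_2$) uniformly over \emph{all} $x,y \in \mathcal{H}$ so that the induced sub-positions remain inside the relation after one unfolding step. Once this is done, the coinductive proofs are literally a one-line unfolding in each direction.
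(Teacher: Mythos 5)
Your proof is correct and matches the paper's intended argument exactly: the paper merely remarks that these identities "can be easily shown by building corresponding hyperbisimulations," and the relations you exhibit (taken with their symmetric closures) are precisely the ones needed, verified by the same one-step unfolding of the coinductive definitions of $+$ and $-$. Nothing further is required.
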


In what follows, we use the notation $x-y$ to denote $x+ (-y)$.

In Proposition~\ref{unoh} below, we summarize some properties of sum and negation w.r.t. the relation $\mge$, that subsume corresponding results on Conway's games. These properties will be useful in Section~\ref{ges}, where
we will discuss equivalences and congruences on games. 

We start by introducing \emph{equideterminacy},  a first natural equivalence on games induced by non-losing strategies. 

\begin{defi}[Equideterminacy]
Let $x,y$ be games. We say that $x$ and $y$ are \emph{equidetermined}, denoted by
$x\Updownarrow y$, whenever $x$ has a L (R,I,II) non-losing strategy if and only if $y$ has a L (R,I,II) non-losing strategy.
\end{defi}

Notice that equideterminacy divides the space of hypergames in the equivalence classes
of Fig.~\ref{fig}.

\begin{prop}
\label{unoh}
For all games $x,y,z$, 
\\ \emph{\phantom{ii}(i)}  $x-x \bow 0$.
\\ \emph{\phantom i(ii)}   $x\mge 0\ \wedge\ y\mge 0\ \Longrightarrow \ x+y\mge 0$.
\\ \emph{(iii)}  $x\mge y\ \Longleftrightarrow\ x-y\mge 0$.
\\ \emph{(iv)}   $x\mge y \ \Longrightarrow \  x+z \mge y+z$.
\\ \emph{\phantom i(v)}  $y\Updownarrow 0\ \Longrightarrow \ x+y\Updownarrow x$.
\\ \emph{(vi)}    $y-z\Updownarrow 0 \ \wedge z-z \Updownarrow 0  \ \Longrightarrow \  x+y\Updownarrow   x+z$.
\end{prop}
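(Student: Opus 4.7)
The plan is to prove (i)--(vi) sequentially, drawing on three tools: the Coinduction Principle for $(\mge, \nmge)$, the Characterization Theorems~\ref{chh} and~\ref{spe}, and componentwise composition of non-losing strategies, with Proposition~\ref{coas} and the identity $-(y + z) = -y + (-z)$ handling the algebra of sums.

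\textbf{Coinductive items.} For (i), I exhibit a $\Phi$-bisimulation whose first component contains $\{(x - x, 0),\, (0, x - x) \mid x \in \mathcal{H}\}$ and whose second component collects the pairs reached after a single opponent move, which II will answer by the mirror move in the other summand, restoring the invariant that the position is of the form $w - w$. This is the classical \emph{Tweedledum-and-Tweedledee} copycat strategy cast coinductively. Item (ii) is a simpler $\Phi$-bisimulation with $\mathcal{R}_1 \supseteq \{(x + y, 0) \mid x \mge 0 \wedge y \mge 0\}$; the closure check splits on whether the opponent moves in the $x$- or the $y$-summand and draws the needed witness from the corresponding hypothesis. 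Item (iii) is the delicate step: a direct coinduction on either single direction stalls because unfolding $x - y \mge 0$ produces information about L-options of $x^R - y$ rather than of $y - x^R$, which is what the closure condition requires. I plan to prove both directions of (iii) \emph{jointly}, through a single $\Phi$-bisimulation whose $\mathcal{R}_1$ contains both $\{(x, y) \mid x - y \mge 0\}$ and $\{(x - y, 0),\, (0, y - x) \mid x \mge y\}$, and whose $\mathcal{R}_2$ contains the analogous $\nmge$-clauses, so that every closure obligation coming from one family of clauses is discharged by landing in a sibling clause of the other family. Item (iv) is then an algebraic corollary: $x \mge y$ gives $x - y \mge 0$ by (iii); (i) gives $z - z \mge 0$; (ii) gives $(x - y) + (z - z) \mge 0$; Proposition~\ref{coas} together with the distribution $-(y + z) = -y + (-z)$ identifies the left-hand side with $(x + z) - (y + z)$; and one more application of (iii) yields $x + z \mge y + z$.

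\textbf{Equideterminacy items and main obstacle.} For (v), note that $y \Updownarrow 0$ means precisely II, and no other player, has a non-losing strategy on $y$. For each $P \in \{L, R, I, II\}$, I combine a non-losing $P$-strategy on $x$ with the II-strategy on $y$ by dispatching each response to the summand in which the opponent last moved; a projection-of-plays argument, using that infinite plays are draws, shows the combined strategy is non-losing for $P$ on $x + y$. The converse turns a non-losing $P$-strategy on $x + y$ into one on $x$ by using the II-strategy on $y$ to absorb any real-opponent moves in the $y$-summand. Item (vi) then follows from two applications of (v) together with Proposition~\ref{coas}: from $y - z \Updownarrow 0$, (v) gives $(x + z) + (y - z) \Updownarrow (x + z)$, and from $z - z \Updownarrow 0$, (v) gives $(x + y) + (z - z) \Updownarrow (x + y)$; commutativity and associativity of $+$ identify the two left-hand sides, whence $x + y \Updownarrow x + z$. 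The main obstacle throughout is (iii): Conway's own argument for the analogous equivalence on games uses transitivity of $\gtrsim$, which fails on hypergames (see the counterexample at the end of Subsection~\ref{prom}), so a coinductive proof is mandatory, and the joint bisimulation has to be arranged so that every unfolding produces pairs already lying in a sibling clause, for all four combinations of which side of $x - y$ the opponent moves on.
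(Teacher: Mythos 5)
Your overall architecture is sound, and items (iv) and (vi) coincide exactly with the paper's proof (algebraic corollaries of (i)--(iii) and of (v) respectively, via Proposition~\ref{coas}). For (i)--(iii) you take a genuinely different route: the paper reduces each item to the existence of a non-losing strategy for a specific player via the Characterization Theorems~\ref{chh} and~\ref{spe} (for (i) the copy-cat strategy for II on $x-x$; for (ii) a componentwise LII-strategy; for (iii) the equivalence ``$x\mge y$ iff LII has a non-losing strategy on $x-y$'', whose converse is a coinduction on strategy-existence relations), whereas you work purely relationally with $\Phi$-bisimulations. That is legitimate and arguably more self-contained, but your bisimulation for (iii) does not close as stated: for a pair in the clause $\{(x,y)\mid x-y\mge 0\}$ you must place $(y,x^R)$ in $\mathcal{R}_2$, i.e.\ establish $y-x^R\nmge 0$; what $x-y\mge 0$ hands you is an L-option of $x^R-y$ that is $\mge 0$, and converting that into an R-option of $y-x^R$ that $0$ dominates requires the negation anti-symmetry $u\mge v\Leftrightarrow -v\mge -u$ (itself an easy auxiliary coinduction), which you never state. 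Your sibling clauses $(x-y,0)$ and $(0,y-x)$ close against each other, but they do not discharge the obligations generated by the purely relational clause.

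The genuine gap is the converse direction of (v). You propose to ``turn a non-losing $P$-strategy on $x+y$ into one on $x$ by using the II-strategy on $y$ to absorb any opponent moves in the $y$-summand''. In the game $x$ alone the opponent never moves in $y$; the real obstruction is that the given strategy on $x+y$ may itself elect to move in the $y$-component, and such a move has no counterpart in $x$, so the simulation desynchronises and you cannot name a legal move for $P$ in the real game. No direct projection works here. The paper argues by contraposition: if $P$ (say L) had no non-losing strategy on $x$, then by Lemma~\ref{non} RI or RII has a winning strategy on $x$, which R lifts to $x+y$ by playing that strategy on the $x$-component and the winning II-strategy on the $y$-component, contradicting (again by Lemma~\ref{non}) the assumed non-losing strategy for L on $x+y$. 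You need this determinacy-based detour; the forward direction of your (v), and all of (vi), is fine.
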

\begin{proof}

\noindent
\\ \phantom{ii}(i) By the Characterization Theorem~\ref{chh}, it is sufficient to prove that $x-x$ has a non-losing strategy for II. This is the  \emph{copy-cat strategy}, according to which, at each step,  player II simply
``copies'',   in the other component of   $x-x$, each move of player I in a component of $x-x$.
We omit the straightforward formalization of the copy-cat strategy.
\\ \phantom i(ii) By Theorem~\ref{spe}, it is sufficient to prove that 
LII has a non-losing strategy on $x+y$, assuming that LII has non-losing strategies on $x$ and $y$. A non-losing strategy for 
LII on $x + y$ can be obtained from the strategies on $x$ and $y$,
since L can always reply in the component where R moves in, by making a move according
to the non-losing strategy in that component.
\\ (iii) By Theorem~\ref{spe}, it is sufficient to prove that 
 $x\mge y$ iff  $x-y$ has a non-losing strategy for LII.  The implication $\Rightarrow$ follows by building directly a strategy for
LII, using the definition of $\mge$. For the converse implication, assume $x-y\mge 0$. Then, by Theorem~\ref{spe},  $x-y$ has a non-losing strategy for LII.
The thesis can then be proved by coinduction, showing that the  following pair of relations is a $\Phi$-bisimulation:
$\mathcal{R}_1 =\{ (x,y) \mid x-y \mbox{ has a non-losing strategy for LII } \}$, $\mathcal{R}_2 =\{ (x,y) \mid x-y \mbox{ has a non-losing strategy for LI } \}$.
\\ (iv) Assume $x\mge y$. Then, by item (iii) of this proposition, $x-y\mge 0$. By item (i), $z-z \mge 0$. Therefore, by item (ii), $x-y + (z-z) \mge 0$, \emph{i.e.}
$x+z - (y+z) \mge 0$, hence, by item (iii), $x+z \mge y+z$.
\\ \phantom i(v) From $y\Updownarrow 0$, by definition of $\Updownarrow$, it follows that $y$ has a winning strategy for II. We have to prove that  $x+y \Updownarrow x$.
Assume $x$ has a non-losing strategy for \emph{e.g.} L. Then one can easily
show that L has a non-losing strategy on $x+y$ as well, whereby L moves in $x$ according to the above non-losing strategy, and responds in $y$ 
to any move of R following the winning strategy for II which exists on $y$ by hypothesis. Vice versa assume by contradiction that L has a non-losing strategy on $x+y$ but no
non-losing strategy on $x$. Then, by Lemma~\ref{non}, RI or RII has a winning strategy on $x$. But then RI or RII has a winning strategy 
on $x+y$ as well, whereby R moves on $x$ according to the winning strategy and plays as II on $y$ according to the winning strategy which exists by hypothesis. 
By Lemma~\ref{non}, this contradicts the fact that L has a non-losing strategy on $x+y$. 
\\ (vi) Assume $y-z \Updownarrow 0$ and $z-z \Updownarrow 0$. Then, by item (v), since $y-z \Updownarrow 0$, $x+z + (y-z) \Updownarrow x+z$. Now $x+z + (y-z) = x+y + (z-z)$ and, since $z-z\Updownarrow 0$,
by item (v),
$x+y + (z-z) \Updownarrow x+y$. Hence, by transitivity of $\Updownarrow$, $x+y \Updownarrow x+z$.
\end{proof}


\begin{rem}\label{new}
\hfill
\\ \phantom{ii}(i) Notice that, on Conway's games, $x\Updownarrow 0$ is equivalent to $x\sim 0$, since, by Theorem~\ref{car},  $x\sim 0$ is equivalent to having a winning strategy
for player II on $x$.\\
However,  the implication $x \bow 0\ \Longrightarrow\ x\Updownarrow 0$ does \emph{not} hold on hypergames, since \emph{e.g.} $c \bow 0$, where $c = (\{ c\}, \{ c\} )$, because $c$ has a non-losing strategy for II, while $c\not\Updownarrow 0$, since $c$ has non-losing strategies for all players.
\\ \phantom i(ii) By  (i), 
 for Conway's games, items (v) and (vi) of Proposition~\ref{unoh} can be rephrased as 
\\ a. $y\sim 0 \ \Longrightarrow \ x+y \sim x$
\\ b. $y-z \sim 0 \ \wedge\ z-z\sim 0  \ \Longrightarrow \ x+y \sim x+z$.
\\ (iii) 
By Proposition~\ref{ceq} and  item (iv) of Proposition~\ref{unoh}, using commutativity of $+$, it follows that the equivalence $\sim$ on Conway's games is a \emph{congruence} w.r.t. sum.
\\ (iv)
Proposition~\ref{unoh} above gives an intuitive justification for why  $\sim$ is reflexive (and an equivalence relation on Conway's games), while $||$ is not. Namely,
by Proposition~\ref{unoh}, items (iii) and (i),  and by the Characterization Theorem~\ref{car}, we have that $x\sim x$ corresponds to having a winning strategy for player II on $x-x$. This is the very structural   ``copy-cat'' strategy, whereby   player II
copies any move of  player  I in the other component. The copy-cat strategy on $x-x$ is independent from the given game $x$. This is the reason why  player II has
a special role.  There is no counterpart of such a structural general strategy for  player I. The existence of a winning strategy for player I always depends on the specific nature of $x$.
\end{rem}

\section{Game Equivalences}\label{ges} 
Having defined games as elements of a final coalgebra, we have already taken games up-to bisimilarity equivalence, thus abstracting from 
superficial features of positions. Bisimilarity is a first structural equivalence on position graphs, but  on top of this one can define various notions of equivalences and congruences on games, by looking at strategies.  

A first notion of equivalence induced on games by strategies is equideterminacy $\Updownarrow$.
This is quite coarse, since  it  divides the space of hypergames in the equivalence classes
of Fig.~\ref{fig}.
 Trivially, $\Updownarrow$ is \emph{not} a congruence w.r.t. sum, already on Conway's games, since \emph{e.g.}, for the games $*1=(\{ 0\}, \{0\})$,  $*2=(\{ 0, *1\}, \{0,*1\})$,  we have:
$*1\Updownarrow *2$, but $*1+*1 \not\Updownarrow *2+*1$, because $*1+*1$ has  a winning strategy for  player II, while $*2+*1$ has a winning strategy for player I. 

The equivalence $\sim$ on Conway's games is a congruence (see Remark~\ref{new}(iii)), however, its extension $\bow$ to hypergames fails to be an equivalence. Therefore,
a question which naturally arises is whether there exists an equivalence on hypergames which extends $\sim$. 
The notion of \emph{contextual equivalence}   provides an answer to this question. This is defined as the closure under additive contexts of equideterminacy:

\begin{defi}[Contextual Equivalence]\hfill \label{conte}
\\
\phantom i(i) Let us consider the following class of \emph{additive contexts} on games:
\[ C[\ ] \ ::=\ [\ ]\  |\ C[\ ]  + x \ |\ x  +C[\ ]  \ , \]
where $x$ is a game.
\\ (ii) Let $\approx$ be the \emph{contextual equivalence} on games defined by:
\[ x\approx y \ \Longleftrightarrow \ \forall C[\ ].\ C[x]\Updownarrow C[y] \ . \]
\end{defi}

It is interesting to notice that, if $x\approx y$, then $x\Updownarrow y$. Moreover, if $\simeq$ is a congruence satisfying this property, then $\simeq \subseteq \approx$. \emph{I.e.}, we have:

\begin{lem}
The contextual equivalence $\approx$ is the greatest congruence refining equideterminacy.
\end{lem}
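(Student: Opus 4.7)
The plan is to establish three facts: (a) that $\approx$ refines equideterminacy, (b) that $\approx$ is a congruence with respect to sum, and (c) that any sum-congruence refining equideterminacy is contained in $\approx$. None of the steps is deep; the only point requiring some care is the congruence property in (b).

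For (a), I take the trivial additive context $C[\ ]=[\ ]$ in Definition~\ref{conte}(ii): from $x\approx y$ one reads off immediately $x=C[x]\Updownarrow C[y]=y$. Reflexivity, symmetry and transitivity of $\approx$ are inherited pointwise from those of $\Updownarrow$, so $\approx$ is an equivalence relation.

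For (b), by commutativity of sum (Proposition~\ref{coas}(ii)) it suffices to show that $x\approx y$ implies $x+z\approx y+z$ for every hypergame $z$. Given an arbitrary additive context $D[\ ]$, I define $C[\ ]:=D[\ +z]$, i.e.\ the context obtained by plugging $[\ ]+z$ into the slot of $D$. A routine induction on the grammar of Definition~\ref{conte}(i) shows that $C[\ ]$ is again an additive context: the base case $D[\ ]=[\ ]$ yields $C[\ ]=[\ ]+z$, while the inductive clauses $D'[\ ]+w$ and $w+D'[\ ]$ are preserved under this substitution by the induction hypothesis (since associativity of $+$ is available from Proposition~\ref{coas}(iii) when needed to flatten the result). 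The hypothesis $x\approx y$ then gives $C[x]\Updownarrow C[y]$, which is exactly $D[x+z]\Updownarrow D[y+z]$; since $D[\ ]$ was arbitrary, $x+z\approx y+z$.

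For (c), let $\simeq$ be any sum-congruence satisfying $\simeq\,\subseteq\,\Updownarrow$, and assume $x\simeq y$. A straightforward induction on the structure of an additive context $C[\ ]$ — using only that $\simeq$ is preserved by adding a fixed game on either side — yields $C[x]\simeq C[y]$ for every such $C[\ ]$; by the assumption on $\simeq$ this gives $C[x]\Updownarrow C[y]$, and hence $x\approx y$ by Definition~\ref{conte}(ii). The only ``obstacle'' worth mentioning is the syntactic induction over additive contexts appearing in both (b) and (c), but it is entirely mechanical, so I do not anticipate any real difficulty.
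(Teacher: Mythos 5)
Your proposal is correct and follows essentially the same route as the paper's own proof: the trivial context gives refinement of $\Updownarrow$, closure of additive contexts under composition gives the congruence property, and the maximality argument is identical. The only difference is that the paper dismisses the congruence step as holding ``by definition'', whereas you spell out the (routine) induction showing that plugging $[\ ]+z$ into a context yields another context.
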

\begin{proof}
By definition, $\approx$ refines equideterminacy, \emph{i.e.} $x\approx y \ \Rightarrow\  x \Updownarrow y$, and $\approx$ is
a congruence, \emph{i.e.} $x\approx y \ \Rightarrow\  \forall C[\ ]. \ C[x] \approx C[y]$. Now assume that $\simeq$ is a congruence 
which refines equideterminacy. If $x\simeq y$, then $\forall C[\ ].\ C[x] \simeq C[y]$, since $\simeq$ is a congruence; moreover, since $\simeq$
refines equideterminacy, we have $\forall C[\ ].\ C[x] \Updownarrow C[y]$, hence $x\approx y$.
\end{proof}

As an immediate  consequence of commutativity and associativity of sum (Proposition~\ref{coas}), we have that the class of contexts in the definition of $\approx$ can be reduced as follows:

\begin{lem}\label{dd}
\[ x\approx y \ \Longleftrightarrow\ \forall D[\ ]. \ D[x]\Updownarrow D[y] \]
where $D[\ ]$ ranges over contexts of the shape $[\ ]+ z$, for $z$ any game.
\end{lem}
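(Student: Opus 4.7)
The plan is to prove the two inclusions separately. The direction $(\Rightarrow)$ is immediate, since every context of the shape $[\ ]+z$ is a particular additive context in the sense of Definition~\ref{conte}(i), so the universal quantification over all $C[\ ]$ trivially entails the universal quantification over the restricted class.

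For the direction $(\Leftarrow)$, the key idea is a \emph{normalization} of additive contexts: I claim that for every additive context $C[\ ]$ there exists a game $z_C$ such that $C[x]=x+z_C$ for all games $x$, where the equality is hypergame identity (i.e., coincidence in the final coalgebra, by Coinduction Principle~\ref{cbas}). I would prove this by structural induction on $C[\ ]$. The base case $C[\ ]=[\ ]$ is handled by taking $z_C=0$ and applying Proposition~\ref{coas}(i). For $C[\ ]=C'[\ ]+w$, the induction hypothesis gives $C'[x]=x+z_{C'}$, and Proposition~\ref{coas}(iii) (associativity) yields $C[x]=(x+z_{C'})+w=x+(z_{C'}+w)$, so one sets $z_C=z_{C'}+w$. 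The symmetric case $C[\ ]=w+C'[\ ]$ is handled analogously, additionally invoking Proposition~\ref{coas}(ii) (commutativity) to bring $x$ to the front.

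Once this normal form is available, the backward direction is immediate: assuming $D[x]\Updownarrow D[y]$ for every context of the form $D[\ ]=[\ ]+z$, and taking any additive context $C[\ ]$, the normalization gives a game $z_C$ with $C[x]=x+z_C$ and $C[y]=y+z_C$; the hypothesis applied to $D[\ ]=[\ ]+z_C$ then yields $C[x]\Updownarrow C[y]$, so $x\approx y$ by Definition~\ref{conte}(ii).

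I do not expect any substantial obstacle here: the lemma is essentially a syntactic reduction, and all the algebraic ingredients (commutativity, associativity, and $0$ as unit for $+$) are already packaged in Proposition~\ref{coas}. The only point requiring minor care is to keep in mind that the equalities $x+0=x$, $x+y=y+x$, and $(x+y)+z=x+(y+z)$ hold as identities of hypergames (i.e., modulo the final coalgebra structure), so that applying the normalization inside the context and inside the equideterminacy relation is legitimate without any further bisimilarity bookkeeping.
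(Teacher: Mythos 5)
Your proposal is correct and follows essentially the same route as the paper, which derives the lemma directly from commutativity, associativity, and the unit law of Proposition~\ref{coas}; you merely make explicit the structural induction normalizing each additive context $C[\ ]$ to the form $[\ ]+z_C$. Your closing remark that these identities hold as genuine hypergame equalities (so no extra bisimilarity bookkeeping is needed when substituting into $\Updownarrow$) is exactly the right point of care.
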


Finally, we show that the contextual equivalence $\approx$ coincides
with $\sim$ on Conway's games. 

\begin{thm}\label{aps} 
For all  Conway's games $x,y$,  
\[ x \approx y \ \Longleftrightarrow \  x\sim y \ .\]
\end{thm}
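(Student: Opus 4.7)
The plan is to prove both directions by exploiting the Lemma immediately following Definition~\ref{conte}, which characterises $\approx$ as the greatest congruence w.r.t.\ sum refining equideterminacy, together with Remark~\ref{new}(i), which on Conway's games identifies $x\sim 0$ with $x\Updownarrow 0$. A recurring ingredient will be that, as noted in Subsection~\ref{prom}, $\gtrsim$ is transitive (hence a partial order) on Conway's games, whereas its hypergame extension $\mge$ need not be.

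For $(\Leftarrow)$, I show that $\sim$ is itself a congruence refining $\Updownarrow$, forcing $\sim\,\subseteq\,\approx$ by the Lemma in question. The congruence part is exactly Remark~\ref{new}(iii). That $\sim$ refines equideterminacy follows from transitivity of $\gtrsim$: if $x\sim y$, then $x\gtrsim y$ and $y\gtrsim x$ yield $x\gtrsim 0\Leftrightarrow y\gtrsim 0$ and $0\gtrsim x \Leftrightarrow 0\gtrsim y$; thus $x$ and $y$ satisfy exactly the same of the four mutually exclusive alternatives $\mg 0,\mn 0,\sim 0,||\,0$, and by the Characterization Theorem~\ref{car} admit winning strategies for exactly the same players, i.e.\ $x\Updownarrow y$.

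For $(\Rightarrow)$, I instantiate the universal quantifier of Lemma~\ref{dd} with the test context $D[\,] = [\,] + (-y)$, obtaining $x - y \Updownarrow y - y$. By Proposition~\ref{unoh}(i) we have $y - y \bow 0$, which on Conway's games collapses to $y-y \sim 0$; Remark~\ref{new}(i) then yields $y-y \Updownarrow 0$, and transitivity of $\Updownarrow$ gives $x - y \Updownarrow 0$, hence $x-y \sim 0$ again by Remark~\ref{new}(i). In particular $x-y \gtrsim 0$ and $0 \gtrsim x-y$, and Proposition~\ref{unoh}(iii) immediately turns the first into $x \gtrsim y$.

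The main obstacle is extracting $y \gtrsim x$ from $0 \gtrsim x-y$, since the paper does not explicitly record that negation reverses $\gtrsim$. My workaround is to add $y-x$ on both sides via Proposition~\ref{unoh}(iv), obtaining $y - x \gtrsim (x-y) + (y-x)$; commutativity and associativity of $+$ (Proposition~\ref{coas}) rearrange the right-hand side to $(x-x) + (y-y)$, which is $\sim 0$ by Proposition~\ref{unoh}(i) combined with congruence of $\sim$ w.r.t.\ sum (Remark~\ref{new}(iii)). Transitivity of $\gtrsim$ then gives $y - x \gtrsim 0$, and Proposition~\ref{unoh}(iii) finally yields $y \gtrsim x$; combined with $x\gtrsim y$ this closes the argument.
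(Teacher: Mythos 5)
Your proof is correct, and it diverges from the paper's own argument in two places worth noting. For $(\Rightarrow)$ you follow the paper exactly up to $x-y\sim 0$ (same test context, same chain through $\Updownarrow$ and Remark~\ref{new}(i)), but where the paper simply cites Proposition~\ref{unoh}(iii) to conclude $x\sim y$, you correctly observe that (iii) as stated only converts $x-y\gtrsim 0$ into $x\gtrsim y$, and you supply the missing half $y\gtrsim x$ by translating $0\gtrsim x-y$ with Proposition~\ref{unoh}(iv), the identities of Proposition~\ref{coas}, $x-x\sim 0$, and transitivity of $\gtrsim$ on Conway's games; this makes explicit a step the paper leaves implicit (one could alternatively note that negation reverses $\gtrsim$, but your additive workaround is sound and uses only recorded facts). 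For $(\Leftarrow)$ the paper argues directly: from $x\sim y$ it derives $x-y\Updownarrow 0$ and $y-y\Updownarrow 0$ and invokes Proposition~\ref{unoh}(vi) to get $z+x\Updownarrow z+y$ for all $z$. You instead verify that $\sim$ is a congruence refining equideterminacy (congruence by Remark~\ref{new}(iii); refinement of $\Updownarrow$ by transitivity of $\gtrsim$ plus the Characterization Theorem~\ref{car} and strong determinacy) and then appeal to the maximality clause of the lemma following Definition~\ref{conte}. Both routes are legitimate; the paper's is more economical because Proposition~\ref{unoh}(vi) was engineered precisely for this purpose, while yours makes the conceptual content visible --- namely that $\sim\subseteq\approx$ is forced as soon as $\sim$ is any sum-congruence compatible with the outcome classification, which is exactly the kind of argument that survives (suitably adapted) when one looks for analogues beyond the well-founded case.
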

\begin{proof}

\noindent
$(\Rightarrow)$ Let $x\approx y$. Then  $x-y \Updownarrow y-y$. Moreover,  by  Proposition~\ref{unoh}(i), $y-y \sim 0$, 
hence  $y-y \Updownarrow 0$, by Remark~\ref{new}(i). Therefore, by transitivity of $\Updownarrow$, 
$x-y \Updownarrow 0$. Hence, by Remark~\ref{new}(i),
 $x-y\sim 0$. Finally, by Proposition~\ref{unoh}(iii), $x\sim y$. 

\noindent
$(\Leftarrow)$ Let $x\sim y$. Then, by Proposition~\ref{unoh}(iii), $x- y\sim 0$ and, by Proposition~\ref{unoh}(i), $y-y \sim 0$.
Hence, by Remark~\ref{new}(i), $x-y \Updownarrow 0$ and $y-y \Updownarrow 0$. Then, by Proposition~\ref{unoh}(vi),
for any hypergame $z$, $z+x\Updownarrow z+y$, thus $x\approx y$.
\end{proof}

Clearly, the extension of the above theorem to hypergames fails, since $\bow$ is not an equivalence. However, 
as we will see in the next section, when restricted to impartial hypergames, the contextual equivalence turns out to be the equivalence
induced by the generalized Grundy semantics.

\section{The Theory of Impartial Games}
\label{img}
In this section, we focus on the subclass  of \emph{impartial games}, where, at each position, L and R have the same moves.
Such games  can be simply represented by $x=X$, where $X$ is the set of moves (for L or R). Coalgebraically, this amounts to saying  that impartial games are the elements of  the final coalgebra $\mathcal{J}$ of the powerset functor.
Thus they correspond directly  to possibly non-wellfounded sets, and they can be represented as finite or infinite, possibly cyclic graphs,
having a node for each position of
the game, and a direct edge from $x$ to $y$ when it is legal to move from $x$ to $y$. The subclass $\mathcal{I}$ of impartial Conway's games correspond
to well-founded sets and they form an initial  algebra of the powerset functor.

In this section, first we specialize some general results to the case of impartial hypergames, including Determinacy and Characterization Theorems. Then, in Section~\ref{sui}, we recall
the Grundy-Sprague theory for 
dealing with impartial Conway's games, based on a  class of \emph{canonical
games}. We show that these give, via the Grundy-Sprague function, 
 a \emph{compositional semantics} of games, that  induces exactly
the contextual equivalence $\approx$. In Section~\ref{ih}, we extend the above  theory to hypergames, by revisiting in a coalgebraic setting Smith's generalization of Grundy-Sprague results. In particular, we introduce a class of \emph{canonical
hypergames}, extending the \emph{Nim numbers}. Moreover,
we show how to extend the \emph{semantics} given by the Grundy function  for impartial Conway's  games to the whole class of impartial
hypergames. Our approach extends other approaches in the literature, where the generalized Grundy function is defined only on certain classes of finite 
cyclic graphs, see \emph{e.g.} \cite{FR01}. 
We illustrate our results on an  example.\bigskip

 Since on impartial games the distinction between L and R is blurred, we can only consider player I and player II, and Theorem~\ref{qua} specializes as follows:
 
 \begin{thm}
 Any impartial game has a  winning strategy  either for player I or for   player II or the two players can draw.
 \end{thm}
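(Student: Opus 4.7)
The plan is to derive this as a direct specialization of the general Determinacy Theorem~\ref{qua}, exploiting the extra L/R symmetry available on impartial games. An impartial game $x$, viewed inside $\mathcal{H}$, satisfies $X^L = X^R$ hereditarily, so the game is literally invariant under the involution that interchanges the labels L and R on positions. Formally, one shows by coinduction (building a hyperbisimulation relating $x$ to its ``swapped'' version) that for any impartial $x$, a strategy for LI translates into a strategy for RI of the same quality, and similarly for LII/RII. Consequently:
\[
\text{L has a (winning / non-losing) strategy on } x \;\Longleftrightarrow\; \text{R has a (winning / non-losing) strategy on } x.
\]

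First I would discharge the easy cases. By Theorem~\ref{qua}, either exactly one of L,R,I,II has a winning strategy, or at least two of them have non-losing ones. Using the L/R symmetry above together with the fact that a winning strategy for L means winning strategies for both LI and LII (hence, together with RI winning from the symmetry, a winning strategy for I), the case ``L has a winning strategy'' forces ``I has a winning strategy'', contradicting the uniqueness clause of Theorem~\ref{qua}. The same holds for R. So in the first case of Theorem~\ref{qua} the unique winner must be player I or player II, yielding the first two alternatives of the statement.

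Next I would treat the remaining case, where no player has a winning strategy but at least two have non-losing strategies. I claim that in this case both I and II have non-losing strategies, which is exactly what it means for the two players to be able to draw (they play their non-losing strategies against each other, producing a play that is either already a draw or is non-losing for both, hence necessarily infinite, hence a draw). To prove the claim, observe that by Definition~\ref{wns}, L having a non-losing strategy amounts to LI and LII both having one; by the L/R symmetry, RI and RII then also have non-losing strategies, and therefore I ($=$ LI $\uplus$ RI) and II ($=$ LII $\uplus$ RII) both have non-losing strategies. The same conclusion is reached if R has a non-losing strategy. In Theorem~\ref{qua}(i)--(ii), at least one of L or R has a non-losing strategy (either directly in (ii), or because the ``I or II'' witness in (i), combined with the impossibility of having a non-losing strategy for only one of I,II without one of L,R — this follows from Lemma~\ref{non} applied symmetrically — forces a member of $\{L,R\}$ to have one), so the claim always applies.

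The main obstacle is the precise bookkeeping in the last paragraph: Theorem~\ref{qua}(i) only guarantees that \emph{some} member of $\{L,R\}$ and \emph{some} member of $\{I,II\}$ have non-losing strategies, and we need to upgrade ``some'' to ``both I and II''. The cleanest way to do this is to invoke Lemma~\ref{non} directly: if only, say, I had a non-losing strategy, then RII would have a winning strategy, hence by symmetry LII also has one, contradicting the assumption that no player has a winning strategy. This rules out the mixed situations and forces I and II to jointly admit non-losing strategies, completing the trichotomy.
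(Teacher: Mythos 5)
Your proposal is correct and follows essentially the same route as the paper: specialize Theorem~\ref{qua} using the L/R symmetry of impartial games (L has a winning/non-losing strategy iff R does), rule out L or R as the unique winner by the resulting contradiction with the uniqueness clause, and conclude that in the remaining case both I and II have non-losing strategies. The paper's own proof is just a terser version of this argument, leaving implicit the bookkeeping you spell out for the draw case.
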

 \begin{proof}
 The proof follows from Theorem~\ref{qua}, by observing that, on impartial games, L has a non-losing strategy iff R has, and neither L nor R can have a winning strategy,
 because otherwise we would have both L and R having a winning strategy, contradicting  Theorem~\ref{qua}. 
 \end{proof}

Moreover, 
the following lemma holds:

\begin{lem}\label{lcd}
Let $x,y\in \mathcal{J}$. Then
\[ (x\mge y \ \Longleftrightarrow\ y\mge x \ \Longleftrightarrow\  x\bow y) \ \wedge\  (x\nmge y \ \Longleftrightarrow\ y\nmge x \ \Longleftrightarrow\  x|| y) \]
\end{lem}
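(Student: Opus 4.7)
The plan is to establish that on impartial hypergames the relations $\mge$ and $\nmge$ are each \emph{symmetric}; once symmetry is in hand, the six stated equivalences collapse immediately, since by definition $x \bow y$ means $x \mge y \wedge y \mge x$, which under symmetry reduces to $x \mge y$ (equivalently $y \mge x$), and similarly $x || y$ means $x \nmge y \wedge y \nmge x$, which reduces to $x \nmge y$ (equivalently $y \nmge x$).

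To prove symmetry I would proceed by coinduction using the Coinduction Principle for $\Phi$-bisimulations introduced just after Definition~\ref{ot}. Define the candidate pair on $\mathcal{J}\times\mathcal{J}$:
\[ \mathcal{R}_1 = \{(x,y)\in \mathcal{J}\times \mathcal{J} \mid y \mge x\}, \qquad \mathcal{R}_2 = \{(x,y)\in \mathcal{J}\times \mathcal{J} \mid y \nmge x\}. \]
I would then verify $(\mathcal{R}_1, \mathcal{R}_2) \subseteq \Phi(\mathcal{R}_1, \mathcal{R}_2)$. For the first component: given $(x,y) \in \mathcal{R}_1$, unfold $y \mge x$ to get $\forall y^R.\, x \nmge y^R$ and $\forall x^L.\, x^L \nmge y$. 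Since $x,y$ are impartial, $X^L = X^R$ and $Y^L = Y^R$, so these two conditions can be rewritten as $\forall y^L.\, x \nmge y^L$ and $\forall x^R.\, x^R \nmge y$, which is exactly the requirement $\forall x^R.\, y\, \mathcal{R}_2\, x^R$ and $\forall y^L.\, y^L\, \mathcal{R}_2\, x$. An analogous unfolding of $y \nmge x$ (using the $\exists$-clause of $\nmge$ and the impartial identification) handles the second component. By the Coinduction Principle, $\mathcal{R}_1 \subseteq {\mge}$ and $\mathcal{R}_2 \subseteq {\nmge}$, which is precisely the implications $y \mge x \Rightarrow x \mge y$ and $y \nmge x \Rightarrow x \nmge y$.

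The main subtlety — and essentially the only one — is tracking the built-in asymmetry of the operator $\Phi$, whose defining clauses quantify over $x^R$ and $y^L$ rather than symmetrically over all options of both games. Without impartiality this asymmetry is genuine (it is exactly what makes $\mge$ fail to be symmetric in general, and indeed what accounts for the nontrivial structure of Conway's partial order on partizan games); but the identification $X^L = X^R = X$ characteristic of impartial games washes it out completely, so once the candidate $\Phi$-bisimulation is set up the verification reduces to routine renaming.
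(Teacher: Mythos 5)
Your proposal is correct and follows essentially the same route as the paper: the paper also proves the lemma by exhibiting the $\Phi$-bisimulation $\mathcal{R}_1 = \{(y,x) \mid x \mge y\}$, $\mathcal{R}_2 = \{(y,x) \mid x \nmge y\}$ (the same relations you define, modulo naming of the pair components) and discharging the verification via the identifications $X^L = X^R$ and $Y^L = Y^R$. The only cosmetic difference is that you explicitly spell out the collapse of the six equivalences from symmetry, which the paper leaves implicit.
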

\begin{proof}
The proof is by coinduction, by showing that the following pair of relations is a $\Phi$-bisimulation:
$\mathcal{R}_1 = \{ (y,x) \mid x\mge y \} $ and  $\mathcal{R}_2 = \{ (y,x) \mid x\nmge y\} $. In order to prove that $(\mathcal{R}_1, \mathcal{R}_2)$ is a $\Phi$-bisimulation,
we assume  $(y,x) \in \mathcal{R}_1$ and we prove that $\forall  x^L . \ (x^L,y) \in \mathcal{R}_2$ and $\forall  y^R . \ ( x,y^R) \in \mathcal{R}_2$.  We omit the proof for  $\mathcal{R}_2$ which is similar. Assume then $(y,x) \in \mathcal{R}_1$. Hence $x\mge y$, and by definition of $\mge$, 
$\forall  x^R . \ y \nmge  x^R$ and $\forall  y^L . \ y^L\nmge x$. Hence, since $X^L=X^R$ and $Y^L=Y^R$, we have 
$\forall  x^L . \ y \nmge  x^L$ and $\forall  y^R . \ y^R\nmge x$. Thus, by definition of $\mathcal{R}_2$, 
$\forall  x^L . \ ( x^L,y) \in \mathcal{R}_2$ and $\forall  y^R . \ (x, y^R) \in \mathcal{R}_2$.
\end{proof}

Hence, by Theorem~\ref{spe} and Lemma~\ref{lc},  we have:

\begin{thm}[Characterization of Impartial Hypergames]\label{chs}
Let $x\in \mathcal{J}$. Then 
\\
\begin{tabular}{lll}
$x\mge 0 $ \ \ \ \   & iff\ \ \ \ & $x$ has a non-losing strategy for II.
\\  $x\nmge 0 $ \ \ \ \   & iff\ \ \ \ & $x$ has a non-losing strategy for I.
\\ $x\mge 0 $ and not $ x \nmge 0$ \ \ \ \   & iff\ \ \ \ & $x$ has a winning strategy for II.
\\ $x\nmge 0 $ and not $ x \mge 0$ \ \ \ \   & iff\ \ \ \ & $x$ has a winning strategy for I.
\end{tabular}
\end{thm}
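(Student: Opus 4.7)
The plan is to combine Theorem~\ref{spe} with the symmetry collapse provided by Lemma~\ref{lcd}. Applying Lemma~\ref{lcd} with $y=0$ yields the equivalences $x\mge 0 \Leftrightarrow 0\mge x$ and $x\nmge 0 \Leftrightarrow 0\nmge x$. So the four clauses of Theorem~\ref{spe} collapse into two pairs: $x\mge 0$ is simultaneously equivalent to LII and to RII having a non-losing strategy, and dually $x\nmge 0$ is simultaneously equivalent to RI and to LI having a non-losing strategy.

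From this, items 1 and 2 are almost immediate. For item 1, by the collapse and Theorem~\ref{spe}, $x\mge 0$ is equivalent to the conjunction ``LII and RII both admit a non-losing strategy'', which by Definition~\ref{wns}(iii) is exactly the statement that player II has a non-losing strategy; conversely, if II has a non-losing strategy, then in particular LII does, so $x\mge 0$. Item 2 is entirely analogous, using the pair LI/RI and the equivalence $x\nmge 0 \Leftrightarrow 0\nmge x$.

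Items 3 and 4 require the further input of Lemma~\ref{non}, which trades ``no non-losing strategy for one sub-player'' for ``winning strategy for the dual sub-player''. For item~3, the direction $(\Leftarrow)$ is easy: if II has a winning strategy, then it is in particular non-losing, so by item~1 we have $x\mge 0$; moreover Lemma~\ref{non} turns the winning strategies for LII and RII into the failures of non-losing strategies for RI and for LI respectively, so I has no non-losing strategy and not $x\nmge 0$ by item~2. For $(\Rightarrow)$, from $x\mge 0$ and item~1, both LII and RII have non-losing strategies; from not $x\nmge 0$ and item~2, neither LI nor RI has a non-losing strategy, and then Lemma~\ref{non}(ii) and its L/R-symmetric statement upgrade each of LII and RII to having a winning strategy, so II has a winning strategy. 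Item~4 is obtained by interchanging the roles of I and II throughout this argument.

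The step that needs the most care is the $(\Rightarrow)$ direction of item~3: one might be tempted to argue directly that a non-losing strategy for II with no ``escape'' strategy for I must already be winning, but a given non-losing II strategy could in principle produce draws against some I strategies, so one cannot extract a winning strategy from it pointwise. Lemma~\ref{non}, applied separately on each of the two L/R sub-players of I, provides exactly the uniform upgrade needed, and this is the pivot of the proof.
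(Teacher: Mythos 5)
Your proof is correct and follows essentially the route the paper intends: the theorem is stated there as an immediate consequence of Theorem~\ref{spe} together with the impartial symmetry collapse of Lemma~\ref{lcd} (miscited in the text as Lemma~\ref{lc}), which is exactly your first two items, and your use of Lemma~\ref{non} to upgrade to winning strategies in items 3 and 4 is the standard mechanism the paper uses elsewhere (e.g.\ in Theorem~\ref{qua}). Your closing remark correctly identifies why the upgrade cannot be done pointwise and why Lemma~\ref{non} is the necessary pivot.
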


\subsection{Impartial Conway's Games.} \label{sui}
Impartial well-founded games are dealt with the theory of Grundy-Sprague, \cite{Gru39,Spra35}.
 Central to this theory
 is \emph{Nim}, a well-known impartial game, which is played with a
number of heaps of matchsticks. The legal move is to strictly decrease the number of matchsticks in any heap (and throw away the removed sticks). A player unable to move because no sticks remain is the loser. 

The Nim game with one heap of size $n$ can be represented as the Conway game $*n$, defined (inductively) by 
\[ *n= \{ *0, *1, \ldots , *(n-1)\} \ ,\]
where $*0=0$.

Namely,  with a heap of size $n$,  the options of the next player consist in moving to a heap of size $0, 1, \ldots , n-1$.
The number $n$ is called the \emph{Grundy number} of the game.
Clearly, if $n=0$,  player II wins, otherwise player I has a winning strategy, moving to $*0$. 

Nim games can be naturally extended to heaps of arbitrary ordinal length, and they
 correspond precisely  to von Neumann ordinals, \emph{i.e.} $*\alpha = \{ *\beta \mid \beta < \alpha \}$.

Notice that different Nim games are told apart by $\sim$:

\begin{lem}\label{ta}
Let $*\alpha, *\beta$ be Nim games. Then
\[ *\alpha \sim *\beta \ \Longleftrightarrow\  \alpha = \beta \ .\]
\end{lem}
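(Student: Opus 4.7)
\emph{Plan.} The reverse direction $\alpha = \beta \Rightarrow *\alpha \sim *\beta$ is immediate: if $\alpha = \beta$ then the two games are literally equal, and $\sim$ is reflexive by Proposition~\ref{bas}(ii) (which, restricted to Conway games, recovers the original result of \cite{Con76}).

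For the forward direction I would argue the contrapositive. Suppose $\alpha \neq \beta$; by symmetry of $\sim$ we may assume without loss of generality $\alpha < \beta$. Because $*\beta = \{*\gamma \mid \gamma < \beta\}$ and the game is impartial, $*\alpha$ is in particular a Right option of $*\beta$. Now Proposition~\ref{bas}(ii) tells us that $x \not\gtrsim x^R$ for every Right option $x^R$ of any game $x$; specializing $x := *\beta$ and $x^R := *\alpha$ gives $*\beta \not\gtrsim *\alpha$. By Definition~\ref{geq}, $*\alpha \sim *\beta$ would require $*\beta \gtrsim *\alpha$, which we have just ruled out; therefore $*\alpha \not\sim *\beta$.

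What makes this so short is that all of the combinatorial content already lives inside Proposition~\ref{bas}(ii): the definition of $\gtrsim$ forbids a game from being $\gtrsim$ any of its own proper Right (or Left) options, and for Nim games the options are literally the strictly smaller Nim games. Consequently no transfinite induction on ordinals is required, nor do we need the Grundy-Sprague function or the Characterization Theorem to separate distinct Nim values. The only potential pitfall is forgetting to symmetrize and trying to show both $*\alpha \not\gtrsim *\beta$ and $*\beta \not\gtrsim *\alpha$ separately; but since $\sim$ is the conjunction of the two, failing \emph{one} of the directions already suffices, and reflexivity handles the trivial case.
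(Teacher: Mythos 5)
Your proof is correct, but it takes a genuinely different route from the paper's. You argue order-theoretically: since $\alpha<\beta$ makes $*\alpha$ literally a Right option of $*\beta$, Proposition~\ref{bas}(ii) (i.e.\ $x\nmge x^R$, which on Conway's games is exactly $x\not\gtrsim x^R$) immediately yields $*\beta\not\gtrsim *\alpha$, hence $*\alpha\not\sim *\beta$. The paper instead argues game-theoretically through sums: it shows $*\alpha\not\approx *\beta$ by exhibiting the context $[\ ]+*\alpha$, in which $*\alpha+*\alpha$ is a second-player win (copy-cat) while $*\alpha+*\beta$ is a first-player win, and then invokes Theorem~\ref{aps} ($\approx\,=\,\sim$ on Conway's games) to conclude. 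Your argument is more elementary and self-contained --- it needs neither the sum operation, nor the analysis of who wins $*\alpha+*\beta$ (which the paper leaves as ``one can check''), nor Theorem~\ref{aps}; it exploits only the fact that Nim games are the von Neumann ordinals, so distinct Nim games are comparable by membership. What the paper's heavier route buys is the stronger intermediate fact $*\alpha\not\approx*\beta$, which is the statement actually reused later (e.g.\ in Lemma~\ref{ugu}, where the same sum-based discrimination is extended to canonical hypergames $*\infty_K$ --- a setting in which your option-based argument would no longer be available, since $\bow$ fails to be an equivalence and the relevant games are no longer options of one another). One small presentational point: you should make explicit, as you do implicitly, that you are using the paper's identification of $(\gtrsim,\not\gtrsim)$ with the restriction of $(\mge,\nmge)$ to Conway's games when citing Proposition~\ref{bas}(ii), since that proposition is stated for the coinductively defined pair.
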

\begin{proof}
If $\alpha\neq \beta$, then $*\alpha \not\approx *\beta$, since $*\alpha+ *\alpha$ has a winning strategy for II (the copy-cat strategy), while one can check that
$*\alpha+ *\beta$ has a winning strategy for I. Hence, by Theorem~\ref{aps}, $*\alpha \not\sim *\beta$.
\end{proof}

Nim games are central in game theory, since there is a classical result (by Grundy and Sprague, independently, 
\cite{Gru39,Spra35}) showing that any
impartial well-founded  game ``behaves'' as a Nim game, in the sense that it is $\sim$-equivalent  to a single-heap Nim game (see \cite{Con76}, Chapter 11). 
The algorithm for discovering the Nim game (or the Grundy number) corresponding to a given impartial game $x$ proceeds
inductively as follows. \medskip

\noindent \emph{\emph{\bf Grundy Algorithm.}
Assume that the Grundy numbers of the positions in  $x$ are $\{ \alpha_0, \alpha_1 ,\ldots \}$, then the Grundy number of $x$ is
the \emph{minimal excludent} (\emph{mex}) of $\{ \alpha_0, \alpha_1 ,\ldots \}$, where
the mex of a set $X\subsetneq \mathit{Ord}$ is the least ordinal in the complement  set  of $X$.}\medskip

Formally, the Grundy-Sprague function $g: \mathcal{I}\rightarrow \mathit{Ord}$, associating to each impartial Conway's game the corresponding Grundy number via the mex algorithm, amounts to the following algebra morphism:

\begin{prop} \label{tau}
Let $ \mathit{mex}:\mathcal{P}( \mathit{Ord}) \rightarrow \mathit{Ord}$ be the mex function. Then the \emph{Grundy-Sprague function} $g$ is the unique morphism from the
initial $\mathcal{P}$-algebra $(\mathcal{I}, \mathit{id})$ to the $\mathcal{P}$-algebra $(\mathit{Ord}, \mathit{mex})$:
\[
\xymatrix{
\mathcal{I} \ar[r]^{g}  & \mathit{Ord}  \\
\mathcal{P}( \mathcal{I}) \ar[r]_{\mathcal{P}( g)}      \ar[u]^{\mathit{id}}     & \mathcal{P}( \mathit{Ord}) \ar[u]_{\mathit{mex}}           
}
\] 
\end{prop}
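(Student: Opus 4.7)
The plan is to reduce the statement to a direct application of the universal property of the initial algebra $(\mathcal{I},\mathit{id})$, and then to check that the induced recursion is literally the Grundy algorithm. First I would recall that, since the class $\mathcal{I}$ of impartial Conway games coincides with the class of well-founded sets of the ambient universe (an impartial game is a set of impartial games, and well-foundedness matches termination of plays), the pair $(\mathcal{I},\mathit{id})$ is indeed the initial algebra of the powerset functor $\mathcal{P}$ in $\mathrm{Class}^*$: the structure map is $\mathit{id}$ because, tautologically, a set of games is again a game.

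Second, I would invoke initiality: for any $\mathcal{P}$-algebra $(A,\alpha)$ there exists a unique morphism from $(\mathcal{I},\mathit{id})$ to $(A,\alpha)$. Applied to $(\mathit{Ord},\mathit{mex})$, this yields a unique $h:\mathcal{I}\to\mathit{Ord}$ such that $h\circ\mathit{id}=\mathit{mex}\circ\mathcal{P}(h)$, i.e.
\[
h(x)\;=\;\mathit{mex}\bigl(\{\,h(x')\mid x'\in x\,\}\bigr)\qquad\text{for every }x\in\mathcal{I}.
\]
This fixed-point equation is, verbatim, the Grundy algorithm displayed just above the statement. Hence the Grundy-Sprague function $g$ coincides with this unique $h$, establishing both existence and uniqueness at once.

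Third, I would discharge the mild well-definedness issue: for the equation above to make sense, $\{g(x'):x'\in x\}$ must lie in $\mathcal{P}(\mathit{Ord})$, i.e.\ be a set (so that $\mathit{mex}$ returns an ordinal). This follows because in $\mathrm{Class}^*$ each $x\in\mathcal{I}$ has bounded hereditary cardinal, so the class of options and hence its $g$-image is a set; a straightforward transfinite induction on the rank of $x$ then shows that $g(x)$ exists as an ordinal and that the recursion is satisfied.

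The only delicate point is foundational rather than combinatorial: making sure that the ambient category $\mathrm{Class}^*$ really does carry $(\mathcal{I},\mathit{id})$ as an initial $\mathcal{P}$-algebra, which requires the size restriction the paper already builds into the definition of $\mathrm{Class}^*$ (otherwise $\mathcal{P}$ has no initial algebra on $\mathrm{Set}$ by Cantor's theorem). Once this is in place, the proposition is an immediate instance of the universal property, together with the observation that the induced recursion is precisely the mex-based definition of $g$.
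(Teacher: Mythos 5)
Your proof is correct and follows exactly the route the paper intends: the paper states Proposition~\ref{tau} without an explicit proof, treating it as an immediate instance of the universal property of the initial $\mathcal{P}$-algebra of well-founded sets, with the induced recursion $g(x)=\mathit{mex}(\{g(x')\mid x'\in x\})$ being the mex algorithm verbatim. Your additional remarks on well-definedness of $\mathit{mex}$ on sets (as opposed to proper classes) and on the size restrictions built into $\mathrm{Class}^*$ simply make explicit the caveats the paper already flags in its footnote on hereditary cardinals.
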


The following holds:

\begin{thm}[\cite{Gru39,Spra35,Con76}]\hfill\label{sgs}
\\ \emph{\phantom ii)} Nim games are canonical, in the sense that,
for any game $x\in \mathcal{I}$, the Grundy function $g$ gives the unique  Nim game $*g(x)$  such that $x\sim *g(x)$.
\\ \emph{ii)} Any impartial game $x$ has a winning strategy for  player II  if and only if $g(x)$ is $0$, otherwise it has a winning strategy for  player I.
\end{thm}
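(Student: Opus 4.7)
The plan is to prove part (i) by well-founded induction on $x \in \mathcal{I}$, with the induction hypothesis that $x^L \sim *g(x^L)$ for every option $x^L$ of $x$. For the inductive step, I want to show $x \sim *g(x)$, which by combining Proposition~\ref{unoh}(iii) with its symmetric version reduces to $x - *g(x) \sim 0$; since every impartial game satisfies $-x = x$ (a trivial coinduction), this is just $x + *g(x) \sim 0$. By Remark~\ref{new}(i) and the Characterization Theorem~\ref{car}, it therefore suffices to exhibit a winning strategy for player II on $x + *g(x)$.

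The strategy for II consists of three cases responding to I's move, each exploiting the mex property. \textbf{Case A}: I moves in the $*g(x)$ component to $*\beta$ with $\beta < g(x)$. Since $g(x) = \mathit{mex}\{g(x^L) : x^L \in X^L\}$, every ordinal $\beta < g(x)$ occurs as $g(x^L)$ for some option $x^L$; II responds by moving $x \to x^L$, reaching $x^L + *g(x^L)$. \textbf{Case B}: I moves $x \to x^L$ with $g(x^L) < g(x)$. Then $*g(x)$ has $*g(x^L)$ as a legal option, and II moves there, again reaching $x^L + *g(x^L)$. \textbf{Case C}: I moves $x \to x^L$ with $g(x^L) > g(x)$. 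Because $g(x^L) = \mathit{mex}\{g(x^{LL})\}$ and $g(x) < g(x^L)$, the ordinal $g(x)$ must appear among the $g(x^{LL})$ (else the mex would be at most $g(x)$); II therefore moves $x^L \to x^{LL}$ with $g(x^{LL}) = g(x)$, reaching $x^{LL} + *g(x^{LL})$. In every case II's response produces a position of the form $y + *g(y)$ for a proper subposition $y$ of $x$. By the induction hypothesis $y \sim *g(y)$, and since $\sim$ is a congruence with respect to $+$ on Conway games (Remark~\ref{new}(iii)), $y + *g(y) \sim *g(y) + *g(y)$; the latter is $\sim 0$ by Proposition~\ref{unoh}(i). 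Hence, using Remark~\ref{new}(i) once more, player II has a winning strategy from the new position, and iterating this argument along the well-founded tree of $x$ yields a winning strategy for II on $x + *g(x)$. Uniqueness of the Nim game is immediate from Lemma~\ref{ta}: if $x \sim *\alpha$ then $*\alpha \sim *g(x)$ forces $\alpha = g(x)$.

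For part (ii), by (i) we have $x \sim *g(x)$, and $*g(x) \sim 0$ iff $g(x) = 0$ by Lemma~\ref{ta}. If $g(x) = 0$ then $x \sim 0$, so Theorem~\ref{car} gives a winning strategy for II. If $g(x) \neq 0$, then $x \not\sim 0$, so II has no winning strategy; by the symmetry between L and R on impartial games together with Conway's determinacy (Theorem~\ref{wst}), the remaining option is that player I has a winning strategy. The main conceptual difficulty lies entirely in verifying the strategy of part (i), specifically in recognising that the mex characterisation of $g(x)$ provides exactly the two ``ordinal-equalising'' moves (below-to-equal in Case A and above-to-equal in Case C) that make the copy-cat argument $y + *g(y) \sim 0$ go through uniformly.
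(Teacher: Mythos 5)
The paper does not actually prove this theorem: it is stated as a citation of the classical Grundy--Sprague result (and of \cite{Con76}), so there is no in-paper argument to compare against. Your reconstruction is the standard proof and is essentially correct: reducing $x\sim *g(x)$ to a winning strategy for player II on $x+*g(x)$ via Proposition~\ref{unoh}(iii), Remark~\ref{new}(i) and Theorem~\ref{car}, and then using the two directions of the $\mathit{mex}$ property (every $\beta<g(x)$ is realised by some option, and $g(x)$ itself is never realised, which is what guarantees that your Cases B and C are exhaustive) is exactly the right mechanism; uniqueness via Lemma~\ref{ta} and part (ii) via determinacy and the L/R symmetry of impartial games are also fine. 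The one imprecision is in Case C: there player II's reply lands on $x^{LL}+*g(x^{LL})$, a grandchild of $x$, so the induction hypothesis as you state it (``for every \emph{option} $x^L$ of $x$'') does not literally cover it. This is harmless --- run the induction on rank, or as $\in$-induction giving the claim for all positions in the transitive closure of $x$, or alternatively argue directly that the described reply scheme never leaves II without a move and that all plays on the well-founded game $x+*g(x)$ terminate --- but it should be stated that way.
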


\subsubsection*{Sum of impartial games.} 
In~\cite{Gru39,Spra35}, an efficient algorithm for computing the Grundy number corresponding to the sum of impartial games is provided, based on
binary sum without carries. Namely, for all numbers $n_1, n_2$, one can define the \emph{Nim sum} of $n_1$ and $n_2$ by $n_1 \oplus n_2 =n$, where 
$n$ is the number resulting from the binary sum without carries of $n_1 $ and $n_2$. \emph{E.g.} $1\oplus 3 =2$, since the binary sum without carries of 10 and 11 is 10.

The Nim sum satisfies the following property:

\begin{prop}[\cite{Gru39,Spra35}]\label{cinse}
\label{seq} For all Nim games $*\alpha, *\beta$,
\[ *\alpha + *\beta \sim *(\alpha \oplus \beta) \ .\]
\end{prop}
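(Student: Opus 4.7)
The plan is to route through the Grundy--Sprague function, exploiting Theorem~\ref{sgs}(i), which gives $x \sim *g(x)$ for every impartial Conway game $x$. It therefore suffices to establish the single equality $g(*\alpha + *\beta) = \alpha \oplus \beta$; the proposition then follows at once from $*\alpha + *\beta \sim *g(*\alpha + *\beta) = *(\alpha \oplus \beta)$.

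As a preliminary, a straightforward transfinite induction on $\gamma$ using Proposition~\ref{tau} together with the fact that the options of $*\gamma$ are $\{*\delta \mid \delta < \gamma\}$ yields $g(*\gamma) = \mathit{mex}\{\delta \mid \delta < \gamma\} = \gamma$. The main induction then proceeds by transfinite induction on the pair $(\alpha,\beta)$ under the (well-founded) product order on ordinals. Unfolding Definition~\ref{hs} shows that the options of $*\alpha + *\beta$ are
\[
\{*\alpha' + *\beta \mid \alpha' < \alpha\} \;\cup\; \{*\alpha + *\beta' \mid \beta' < \beta\},
\]
so Proposition~\ref{tau} gives
\[
g(*\alpha + *\beta) = \mathit{mex}\bigl(\{g(*\alpha' + *\beta) \mid \alpha' < \alpha\} \cup \{g(*\alpha + *\beta') \mid \beta' < \beta\}\bigr),
\]
and the inductive hypothesis rewrites the right-hand side as $\mathit{mex}\bigl(\{\alpha' \oplus \beta \mid \alpha' < \alpha\} \cup \{\alpha \oplus \beta' \mid \beta' < \beta\}\bigr)$.

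The remaining step is the purely combinatorial identity
\[
\mathit{mex}\bigl(\{\alpha' \oplus \beta \mid \alpha' < \alpha\} \cup \{\alpha \oplus \beta' \mid \beta' < \beta\}\bigr) = \alpha \oplus \beta,
\]
which is the heart of Bouton's theorem. It splits into two independent checks: (a) $\alpha \oplus \beta$ is \emph{not} in the excluded set, which follows from XOR cancellation $\alpha' \oplus \beta = \alpha \oplus \beta \Rightarrow \alpha' = \alpha$; and (b) every $\gamma < \alpha \oplus \beta$ \emph{is} in the set, which is obtained by locating the leading binary digit of $\gamma \oplus (\alpha \oplus \beta) \neq 0$ and noting that whichever of $\alpha,\beta$ carries a $1$ in that position can be strictly decreased to produce XOR equal to $\gamma$. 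This is the main (and only) obstacle; it is classical and entirely independent of the coalgebraic framework developed in the paper. Once granted, the inductive step closes and the proposition follows. (For transfinite ordinals one takes the mex characterization as Conway's definition of $\oplus$, making the identity tautological.)
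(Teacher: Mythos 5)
The paper itself offers no proof of this proposition: it is stated with a citation to Grundy and Sprague as a classical result, and is only \emph{used} later (to derive Corollary~\ref{compo}). Your argument is therefore not comparable to anything in the text, but it is a correct and non-circular reconstruction of the classical proof, fitted sensibly into the paper's framework: you compute $g(*\alpha+*\beta)$ via the algebra-morphism characterization of $g$ (Proposition~\ref{tau}), reduce to Bouton's mex identity, and then convert the equality of Grundy numbers into $\sim$ via Theorem~\ref{sgs}(i). Note that Theorem~\ref{sgs}(i) is logically prior to the present statement (its standard proof needs only the mex rule, not Nim addition), so there is no circularity; and your induction on the product order of $(\alpha,\beta)$ is legitimate since each option of $*\alpha+*\beta$ strictly decreases one coordinate. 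An alternative route staying closer to the paper's other tools would be to exhibit a second-player winning strategy on $*\alpha+*\beta+*(\alpha\oplus\beta)$ and invoke Proposition~\ref{unoh}(iii) together with Theorem~\ref{car}; this avoids the detour through $g$ but lands on exactly the same combinatorial lemma. One small caveat: the paper defines $\oplus$ as binary sum without carries (extended to ordinals via base-$2$ Cantor normal form), so your closing parenthetical --- redefining $\oplus$ by the mex recursion to make the identity ``tautological'' --- would silently change the statement; fortunately the leading-digit argument you sketch does go through verbatim for Cantor normal forms, so you should simply say that rather than switch definitions.
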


As a consequence of the above proposition and of Lemma~\ref{ta}, we have: 

\begin{cor}
\label{compo} For all $x,y \in \mathcal{I}$,
\[ g(x+y) = g(x) \oplus g(y) \ .\]
\end{cor}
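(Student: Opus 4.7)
The plan is to chain together the three earlier results: Theorem~\ref{sgs}(i), which provides for each impartial Conway game $z$ a canonical Nim representative $*g(z)$ with $z \sim *g(z)$; Proposition~\ref{cinse}, which tells us how Nim games interact under sum; and Lemma~\ref{ta}, which says $\sim$ distinguishes distinct Nim games. The only additional ingredient is that $\sim$ is a congruence with respect to $+$ on Conway's games, which has already been recorded in Remark~\ref{new}(iii).

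Concretely, first I would apply Theorem~\ref{sgs}(i) to $x$ and to $y$ separately, obtaining $x \sim *g(x)$ and $y \sim *g(y)$. Then, invoking congruence of $\sim$ with respect to sum, I would deduce
\[ x + y \ \sim\ *g(x) + *g(y) \ . \]
Next, by Proposition~\ref{cinse}, $*g(x) + *g(y) \sim *(g(x) \oplus g(y))$, so transitivity of $\sim$ (Proposition~\ref{ceq}) yields $x+y \sim *(g(x) \oplus g(y))$.

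Finally, applying Theorem~\ref{sgs}(i) to the game $x+y$ itself gives $x+y \sim *g(x+y)$, and combining with the previous equivalence via transitivity produces
\[ *g(x+y) \ \sim\ *(g(x) \oplus g(y)) \ . \]
Since both sides are Nim games, Lemma~\ref{ta} forces the ordinals to coincide, and we conclude $g(x+y) = g(x) \oplus g(y)$.

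There is no real obstacle here: the work was done in the preceding results, and the corollary is obtained by a short diagram chase through congruence and transitivity of $\sim$. The only point that deserves a moment's care is the appeal to congruence, which is why Remark~\ref{new}(iii) was recorded earlier; without it, one could not propagate $x \sim *g(x)$ and $y \sim *g(y)$ through the sum operation.
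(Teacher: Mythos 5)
Your proof is correct and follows essentially the same route as the paper: both arguments use Theorem~\ref{sgs}(i) to replace $x$, $y$, and $x+y$ by their Nim representatives, push the equivalences through the sum via congruence of $\sim$ (Remark~\ref{new}(iii)), apply Proposition~\ref{cinse} and transitivity, and conclude with Lemma~\ref{ta}. No gaps.
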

\begin{proof}
By Theorem~\ref{sgs}(i), $*g(x+y) \sim x+y$, then by Theorem~\ref{sgs}(i) and congruence of $\sim$, $x+y \sim  *g(x) + * g(y)$, hence by Proposition~\ref{cinse}
and transitivity of $\sim$, $*g(x+y) \sim * (g(x) \oplus g(y))$. Finally, by Lemma~\ref{ta}, $*g(x+y) = * (g(x) \oplus g(y))$.
\end{proof}

By the above corollary, the Grundy  number corresponding to $x+y$ is obtained by Nim summing the Grundy numbers of $x$ and $y$. 
The Nim sum is efficient when dealing with finite Grundy numbers.
It applies also to infinite ordinals, but clearly for ordinals above limit ordinals, \emph{i.e.} $\lambda$ 
such that $2^{\lambda} =\lambda$, it does not provide substantial improvements w.r.t. the direct mex calculation of the sum game.

\subsubsection*{Fully abstract semantics of impartial Conway's games.}   By Corollary~\ref{compo}, the Grundy function $g$  provides a compositional semantics
on  impartial  Conway's games. This semantics 
is  \emph{fully abstract}  w.r.t. contextual equivalence, namely we have: 

 \begin{thm}[Full Abstraction] For all $x,y \in \mathcal{I}$, 
\[ g(x)=g(y)  \  \Longleftrightarrow\  x\approx y   \]
\end{thm}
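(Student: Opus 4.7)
The plan is to reduce this to the already-established coincidence between contextual equivalence and Conway's $\sim$ on Conway's games, and then exploit canonicity of the Nim representatives. Since $\mathcal{I} \subseteq \mathcal{G}$, Theorem~\ref{aps} applies, giving $x \approx y \Longleftrightarrow x \sim y$ for all $x,y \in \mathcal{I}$. So it suffices to prove that, for impartial Conway's games, $g(x) = g(y) \Longleftrightarrow x \sim y$.

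For the direction $g(x) = g(y) \Longrightarrow x \approx y$, I would argue as follows. If $g(x) = g(y)$, then the Nim games $*g(x)$ and $*g(y)$ coincide on the nose. By Theorem~\ref{sgs}(i), we have $x \sim *g(x)$ and $y \sim *g(y)$, so by symmetry and transitivity of $\sim$ (Proposition~\ref{ceq}) we obtain $x \sim y$, and hence $x \approx y$ by Theorem~\ref{aps}.

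For the converse, $x \approx y \Longrightarrow g(x) = g(y)$, I would first invoke Theorem~\ref{aps} to get $x \sim y$. Then Theorem~\ref{sgs}(i) gives $x \sim *g(x)$ and $y \sim *g(y)$, and transitivity of $\sim$ yields $*g(x) \sim *g(y)$. Finally, Lemma~\ref{ta}, which tells distinct Nim games apart under $\sim$, forces $g(x) = g(y)$.

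There is no real obstacle here: every piece is in place. The only delicate point to mention is that all cited facts (Theorem~\ref{aps}, Theorem~\ref{sgs}, Lemma~\ref{ta}, Proposition~\ref{ceq}) were proved for Conway's games, which is exactly the setting we are in, since $\mathcal{I}$ consists of impartial \emph{well-founded} games. The essence of the argument is that Nim games serve as canonical $\sim$-representatives, and the Grundy function picks out precisely the ordinal index of this canonical form, so equality of Grundy numbers is the same as $\sim$-equivalence, which in turn coincides with $\approx$ on Conway's games.
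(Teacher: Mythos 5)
Your proof is correct and follows essentially the same route as the paper's: both directions rest on Theorem~\ref{sgs}(i), transitivity, Theorem~\ref{aps}, and Lemma~\ref{ta}. The only cosmetic difference is that in the backward direction you pass through $\sim$ where the paper works with $\approx$ before invoking Lemma~\ref{ta}, which changes nothing given Theorem~\ref{aps}.
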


\begin{proof}

\noindent
$(\Rightarrow)$ Let $g(x)=g(y)$. By Theorem~\ref{sgs}(i), using transitivity of $\sim$, we have $x\sim y$. Hence, by Theorem~\ref{aps}, $x\approx y$.

\noindent
$(\Leftarrow)$ Let $x\approx y$. By Theorem~\ref{aps} and Theorem~\ref{sgs}(i), using  transitivity of $\approx$, we have $*g(x)\approx *g(y)$. But then
$g(x)=g(y)$, by Lemma~\ref{ta}.
\end{proof}

\subsection{Impartial Hypergames} \label{ih}

We introduce a class of \emph{canonical hypergames}, extending the Nim games:

\begin{defi}[Canonical Hypergames]\label{canh}
The \emph{canonical hypergames} $ *\sigma$  are defined by:
\begin{iteMize}{$\bullet$}
\item the Nim games $*\alpha$, for $\alpha \in \mathit{Ord}$;
 \item the hypergames $*\infty = \{ *\infty \}$ and $*\infty_K= \{ *\infty\} \cup \{ *k \mid  k\in K\} $, for $K\neq \emptyset $, $K\subseteq \mathit{Ord}$.
\end{iteMize}
\noindent The \emph{generalized Grundy numbers} are the $\sigma$'s such that $*\sigma$  is a canonical hypergame.
\end{defi}
In what follows, we denote  by $\sigma, \tau$ generalized Grundy numbers, and by $\alpha, \beta$ ordinal (well-founded) Grundy numbers.
We will also use the notation $*\infty_{\emptyset}$ ($\infty_{\emptyset}$) to denote the hypergame
(Grundy number) $*\infty$  ($\infty$).

The following lemma holds on canonical hypergames:

\begin{lem}\label{lc}
The game $*\alpha$ is a  win for player II if and only if $\alpha$ is $0$, otherwise it is a win for player I. An hypergame
 $*\infty_K$ is a  win for  player I if and only if $0\in K$, otherwise both players have non-losing strategies.
\end{lem}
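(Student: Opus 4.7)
The proof proceeds by case analysis on the canonical hypergame, handling the Nim case by transfinite induction and then the $*\infty_K$ case directly, by constructing explicit winning or non-losing strategies. No delicate use of $\mge,\nmge$ is required: the graphs are so simple that strategies can be exhibited by inspection.

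\textbf{Nim games $*\alpha$.} I argue by transfinite induction on $\alpha$. If $\alpha=0$, then $*0=0$ has no options, so player I has no move and loses, while player II vacuously wins. If $\alpha>0$, then $*0$ is among the options of $*\alpha$; player I's winning opening move is to $*0$, after which player II has no move and loses. This gives the first sentence of the statement.

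\textbf{The hypergame $*\infty$.} Since the only option of $*\infty$ is $*\infty$ itself, every play is the forced infinite sequence $*\infty,*\infty,\ldots$, which is a draw. Hence the (unique) strategy is non-losing for both players. This covers the sub-case $K=\emptyset$ of $*\infty_K$, where the condition $0\notin K$ holds vacuously.

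\textbf{The hypergame $*\infty_K$ with $0\in K$.} Player I's winning move is directly to $*0$, leaving player II with no options.

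\textbf{The hypergame $*\infty_K$ with $K\neq\emptyset$ and $0\notin K$.} I exhibit non-losing strategies for both players. For player I: open by moving to $*\infty$; then player II is forced to respond with $*\infty$, and the play loops forever, producing a draw. For player II: if player I opens with the move to $*\infty$, respond with $*\infty$ (draw); otherwise player I opens with a move to some $*k$ with $k\in K$, hence $k>0$, and player II then acts as first mover on $*k$, where by the Nim case above there is a winning strategy. In both subcases player II does not lose.

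\textbf{Main obstacle.} The only subtle point is the role-flip: when player I of $*\infty_K$ descends into a component $*k$, the player labelled II in the original hypergame becomes the first mover on $*k$, so the winning strategy for the \emph{first} player on Nim games $*k$ with $k>0$ (established in the first step) is precisely what is needed to secure player II's win in the $*\infty_K$-game. Everything else is routine case analysis.
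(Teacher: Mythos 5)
Your proof is correct, but it takes a different route from the paper's. You construct explicit winning and non-losing strategies by inspecting the (very simple) game graphs: move to $*0$ when available, loop on $*\infty$ to force a draw, and exploit the role-flip so that player II inherits the first-mover win on a nonzero Nim component. The paper instead works entirely through the order-theoretic machinery: it discharges the Nim half by citing the Grundy--Sprague result (Theorem~\ref{sgs}), and for $*\infty_K$ it proves the two facts (a) $*\infty_K \nmge 0$ always (because $0 \mge *\infty$ and $*\infty$ is an option of $*\infty_K$) and (b) $*\infty_K \mge 0$ iff $0\notin K$, then reads off the strategy statements from the Characterization Theorem~\ref{chs}. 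Your approach is more elementary and self-contained --- it needs neither Theorem~\ref{chs} nor the coinductively defined pair $(\mge,\nmge)$ --- and it makes the strategies concrete. What the paper's detour buys is that it places the canonical hypergames explicitly in the $\mge/\nmge$ order relative to $0$; facts of exactly this shape (e.g.\ $0\mge *\infty$, $0\nmge *k$ for $k\neq 0$) are the currency of the later $\Phi$-bisimulation arguments (Lemma~\ref{ugu}, Proposition~\ref{csum}, Theorem~\ref{jam}), so proving the lemma in that vocabulary avoids re-deriving them. One small point you leave implicit: the ``only if'' directions (e.g.\ that $*\infty_K$ with $0\notin K$ is \emph{not} a win for player I, and that $*\alpha$ with $\alpha\neq 0$ is not a win for player II) follow from your exhibited non-losing/winning strategies for the opponent only via the standard incompatibility of a winning strategy for one player with a non-losing strategy for the other (Lemma~\ref{non}); this is routine but should be cited to close the equivalences.
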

\begin{proof}
We only have to prove the thesis for the hypergames
$*\infty_K$, since for well-founded games $*\alpha$ the result follows from Theorem~\ref{sgs}. 
Using Theorem~\ref{chs}, the thesis on hypergames follows
by showing  that:
\\ (a) any hypergame $*\infty_{K}$ is such that   $*\infty_{K} \nmge 0$;
\\ (b) the hypergame $*\infty_{K}$ has no subscript $0$ iff  $*\infty_{K} \mge 0$.
\\ 
$(a)$ 
First of all, notice that, since $*\infty$ has a non-losing strategy for II, then, by Theorem~\ref{chs} and Lemma~\ref{lcd}, 
$*\infty$ is such that $0\mge *\infty$.
  Then, by Definition of $\nmge$,  we have
$*\infty_{K}\nmge 0$.
 \\ 
 $(b\Rightarrow)$ Assume $0\not\in K$. Then 
 $*\infty_K \mge 0$, since,   for all elements  $k\in K$,  $0\nmge k$, by Lemma~\ref{ta} and Proposition~\ref{bas}(i),
 and $0\nmge *\infty$, because $*\infty$ has a non-losing
 strategy for I.
 \\
 $(b\Rightarrow)$ Assume $*\infty_K\mge 0$. If by contradiction $0\in K$, then $*\infty_K \mge 0$ does not
 hold, since $0\nmge 0$ does not hold.
\end{proof}

In the following lemma, we prove that different canonical hypergames are told apart by the contextual equivalence; this extends Lemma~\ref{ta} to  hypergames:

\begin{lem}\label{ugu}
For all canonical hypergames  $*\sigma, *\tau$, 
\[ *\sigma \approx *\tau \ \Longleftrightarrow\  \sigma = \tau \ .\]
\end{lem}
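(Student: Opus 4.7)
The $(\Leftarrow)$ direction is immediate: if $\sigma = \tau$ then $*\sigma = *\tau$ as hypergames, so trivially $*\sigma \approx *\tau$.

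For $(\Rightarrow)$ the plan is to argue by contraposition, assuming $\sigma \neq \tau$ and producing, via Lemma~\ref{dd}, an impartial hypergame $z$ with $*\sigma + z \not\Updownarrow *\tau + z$. Three cases arise according to the shape of $\sigma,\tau$. If both are ordinals $\alpha,\beta$, then $*\alpha,*\beta$ are Conway games and the combination of Theorem~\ref{aps} with Lemma~\ref{ta} yields $*\alpha \not\approx *\beta$ already with $z = 0$. If exactly one is an ordinal, say $\sigma = \alpha$ and $\tau = \infty_K$, the plan is to take $z = *\alpha$: the Conway sum $*\alpha + *\alpha \sim 0$ admits a winning (hence non-losing) strategy only for II, while on $*\infty_K + *\alpha$ player I has a non-losing (drawing) strategy obtained by first moving $*\infty_K \to *\infty$ and thereafter always replying $*\infty \to *\infty$, forcing an infinite play. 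Hence equideterminacy fails. If $\sigma = \infty_K$ and $\tau = \infty_{K'}$ with $K \neq K'$, pick (by symmetry) $\gamma \in K \setminus K'$ and take $z = *\gamma$: in $*\infty_K + *\gamma$ player I wins immediately by moving $*\infty_K \to *\gamma$, landing in the Nim position $*\gamma + *\gamma \sim 0$, so II has no non-losing strategy there, whereas on $*\infty_{K'} + *\gamma$ player II still has one. Again the sums are not equidetermined.

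The real content, common to the last two cases, is the following auxiliary classification of the positions $*\infty_L + *\beta$, which I plan to establish by transfinite induction on $\beta$. Labelling each position as a win, a loss, or a draw for the player to move, I will first show that $*\infty + *\beta$ is always a draw: its only options are $*\infty + *\beta$ itself and the positions $*\infty + *\beta'$ for $\beta' < \beta$, which are drawn by induction, so no option is a loss. Then, for arbitrary $L$, the options of $*\infty_L + *\beta$ are $*\infty + *\beta$ (a draw), the Nim positions $*k + *\beta$ for $k \in L$ (with Grundy value $k \oplus \beta$, a loss iff $k = \beta$), and $*\infty_L + *\beta'$ for $\beta' < \beta$ (never a loss, by induction). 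This yields: $*\infty_L + *\beta$ is never a loss, is a win iff $\beta \in L$, and is a draw otherwise. Via the Characterization Theorem~\ref{chs} this translates into: player~I always has a non-losing strategy on $*\infty_L + *\beta$, and player~II has one iff $\beta \notin L$, exactly the facts invoked above (and covering the edge cases $\alpha = 0$ or $\gamma = 0$, where the context degenerates to $z = 0$).

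The main obstacle will be this auxiliary classification, and specifically ensuring that the persistent self-loop $*\infty \to *\infty$ present in every $*\infty_L + *\beta$ never gives the current player a winning move but only a drawing one. This is what forces the precise dichotomy ``$\beta \in L$ yields a win, $\beta \notin L$ only a draw'', and it requires combining the ordinary Grundy analysis of the finite heap options $*k + *\beta$ with the coinductive fact that the loopy options $*\infty + *\beta$ are always drawn.
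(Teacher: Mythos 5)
Your proof is correct and follows essentially the same route as the paper's: a case analysis on the shapes of $\sigma,\tau$, distinguishing the two canonical hypergames in contexts $[\;]+*\alpha$ for a suitable well-founded Nim heap $*\alpha$ and comparing which players retain non-losing strategies. The only difference is organizational: you isolate as a single auxiliary lemma the classification of $*\infty_L+*\beta$ (a win for the mover iff $\beta\in L$, otherwise a draw, never a loss), which generalizes the paper's Lemma~\ref{lc} and replaces its inline ``one can easily check'' verifications of the individual sum positions.
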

\begin{proof}
Let $\sigma \neq \tau$. Then we show that there is a context $D[\ ]$ such that  $D[*\sigma] \not\Updownarrow
 D[*\tau]$.
There are various cases to consider, according to the shape of $*\sigma, *\tau$. If both $\sigma, \tau \in \mathit{Ord}$, then
the thesis follows from Lemma~\ref{ta} and Theorem~\ref{aps}.
  If $\sigma=\infty_H$, $\tau=\infty_K$, $H\neq K$,  then there exists 
$\alpha\in H\setminus K$ (or $\alpha\in K\setminus H$).  If $\alpha\neq 0$, then 
 we 
get that $*\infty_H + *\alpha $ has a winning strategy for player I, which opens  on $*\infty_H$ moving to $\alpha$, and then plays as II according to the copy-cat strategy.
While one can easily check that $*\infty_K + *\alpha $ has non-losing strategies for both players. 
If $\alpha =0$, then $*\sigma \not\Updownarrow *\tau$, since, by Lemma~\ref{lc}, 
 player II has a non-losing strategy on H (K) but not on K (H).
If $\sigma\in \mathit{Ord}$ and $\tau =\infty_K$, then if $\sigma=0$ or $0\not\in K$, then, by Lemma~\ref{lc},
$*\sigma\not\Updownarrow *\tau$. Finally, if $\sigma=\alpha$, $\alpha \neq 0$, and
$\tau=\infty_K$, $0\in K$, then for $D[\ ] = [\ ] + *\alpha$, we have that $D[*\sigma]$ has a winning strategy for II, hence no non-losing strategies for I, while $D[*\tau]$ has a non-losing strategy for I.
\end{proof}

 The Nim sum  can be extended to the whole class of generalized Grundy numbers: 
 \begin{defi}[Generalized Nim Sum] \label{gns}
The \emph{generalized Nim sum} $\oplus$ is defined by extending the Nim sum 
 on  $\infty$-Grundy numbers by:
 \[ \alpha\oplus \infty_K = \infty_K \oplus \alpha = \infty_{\{ k\oplus \alpha \ |\ k\in K \}} \ \ \ \ \ \infty_K \oplus \infty_H =\infty \ .\]
 \end{defi}
 
 The  following extends Proposition~\ref{seq} to the case of hypergames:
 
 \begin{prop}\label{csum}
For all canonical hypergames $*\sigma, *\tau$, 
\[ *\sigma + *\tau \approx *(\sigma \oplus \tau) \ .\]
\end{prop}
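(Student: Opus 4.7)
The plan is to apply Lemma~\ref{dd} and reduce the claim to showing $*\sigma + *\tau + z \Updownarrow *(\sigma \oplus \tau) + z$ for every hypergame $z$, then split into three cases based on the shapes of $\sigma$ and $\tau$. When both $\sigma=\alpha$ and $\tau=\beta$ are ordinals, the equivalence $*\alpha + *\beta \sim *(\alpha \oplus \beta)$ is exactly Proposition~\ref{cinse}, and Theorem~\ref{aps} lifts this to $\approx$ on Conway's games; since $\approx$ is a congruence with respect to $+$, this yields $*\alpha + *\beta + z \approx *(\alpha \oplus \beta) + z$, and hence the desired equidetermination.

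For the remaining two cases (where at least one of $\sigma,\tau$ is of the form $\infty_K$), I would prove the equidetermination by constructing, for each role $P \in \{LI, LII, RI, RII\}$, an explicit transfer of non-losing (respectively winning) strategies between $*\sigma+*\tau+z$ and $*(\sigma \oplus \tau)+z$. The key correspondence is the following: a ``stalling'' move $*(\sigma \oplus \tau) \to *\infty$ on the right-hand side matches a move $*\infty_K \to *\infty$ or $*\infty_H \to *\infty$ on the left-hand side, after which both sides contain an $*\infty$-summand and infinite plays are freely available to either player. In Case~2 ($\sigma=\alpha$ ordinal, $\tau=\infty_K$, so that $\sigma \oplus \tau = \infty_{K'}$ with $K' = \{k \oplus \alpha \mid k \in K\}$), a ``substantial'' right-hand move $*\infty_{K'} \to *(k \oplus \alpha)$ is matched by the left-hand move $*\infty_K \to *k$, the residuals $*(k \oplus \alpha)+z$ and $*\alpha + *k + z$ being $\approx$-equivalent by Case~1 combined with congruence. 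Opponent moves inside $*\alpha$ on the left, which have no direct counterpart on the right, are handled by transfinite induction on $\alpha$: the residual $*\alpha' + *\infty_K + z$ is a smaller instance of Case~2 and hence $\approx$-equivalent to $*\infty_{\{k \oplus \alpha' \mid k \in K\}} + z$, re-anchoring the simulated play on the right side. Case~3 ($\sigma=\infty_K$, $\tau=\infty_H$, and $\sigma \oplus \tau = \infty$) proceeds analogously: the left-hand substantial moves to $*k + *\infty_H + z$ or $*\infty_K + *h + z$ are instances of Case~2 inside an extended context such as $*\infty_H + z$ or $*\infty_K + z$, and again admit the $*\infty$-stalling response on the right.

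The main obstacle I anticipate is formalizing these strategy transfers in the presence of non-wellfoundedness and of an arbitrary context~$z$. A first subtlety is that Nim sums are not monotone in the naive ordinal order --- one can have $k' < k$ while $k' \oplus \alpha > k \oplus \alpha$ --- so an opponent move $*k \to *k'$ on the left does not always correspond to a single valid right-hand move, and the transfer must use $\approx$-congruence repeatedly to re-anchor the simulated play rather than maintaining a tight positional invariant. The non-wellfounded behaviour of $*\infty$-components and of $z$ is addressed by first passing to positional strategies via Proposition~\ref{posi}, then combining a transfinite induction on the ordinals $\alpha, k, h$ appearing in the canonical parts (which bounds the number of substantial moves in any play) with a coinductive treatment of the infinite stalling plays in the $*\infty$-summands; the uniqueness of the target $*(\sigma \oplus \tau)$ up to $\approx$, guaranteed by Lemma~\ref{ugu}, ensures that the induction closes on the intended canonical representative.
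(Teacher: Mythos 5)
Your proposal is correct in substance but takes a genuinely different route from the paper in the two non-wellfounded cases. The paper does not manipulate strategies at all there: after the same reduction to $\forall z.\ (*\sigma+*\tau)+z \Updownarrow *(\sigma\oplus\tau)+z$ and the same appeal to Proposition~\ref{seq} and Theorem~\ref{aps} for the ordinal case, it invokes Theorem~\ref{chs} to convert equideterminacy into the two conditions $\cdot\mge 0$ and $\cdot\nmge 0$, and then exhibits explicit $\Phi$-bisimulations such as $\mathcal{R}_1=\{(*(\sigma\oplus\tau)+z,0)\mid (*\sigma+*\tau)+z\mge 0\}$ with $z$ universally quantified \emph{inside} the relation. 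That quantification is exactly what makes your ``re-anchoring'' step automatic: when the opponent moves inside $*\alpha$ or inside $z$, the residual is again an instance of the relation, so no transfinite induction on $\alpha$ and no splicing of strategies is needed. Your direct strategy-transfer argument buys operational transparency (one sees concretely that the $*\infty$-summand lets the defender stall, and that substantial moves are matched via Case~1 plus congruence), but it carries a real formalization cost at the point you identify: after an unmatched opponent move you abandon the simulated right-hand strategy and extract a fresh non-losing strategy from the equideterminacy of the residual, and one must check that such mid-play strategy surgery still yields a single non-losing strategy (it does, since non-losing is a property of the plays generated from each reachable position, but this is precisely the bookkeeping that the coinductive invariant packages away). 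If you push your invariant to the level of ``for all contexts $z$, the residual pair is related,'' your proof collapses into the paper's $\Phi$-bisimulation argument; as written, it is a heavier but viable alternative, and your observations about the non-monotonicity of $\oplus$ and the finiteness of substantial moves are correct though not strictly needed once the stalling response via $*\infty$ is available.
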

\begin{proof}
If $\sigma, \tau\in \mathit{Ord}$, then the thesis follows by Proposition~\ref{seq} and Theorem~\ref{aps}. Then assume that $\sigma =\infty_K$ and $\tau =\alpha$,
or $\sigma =\infty_K$ and $\tau =\infty_H$. We have to show that $\forall z\in \mathcal{J}.\ (*\sigma + *\tau) + z \Updownarrow *(\sigma\oplus \tau) +z$.
By Theorem~\ref{chs},  it is sufficient to prove that 
\[
(*\sigma + *\tau) +z \mge 0 \ \Longleftrightarrow\ * (\sigma \oplus \tau) +z \mge 0 \ \  \mbox{ and } \ \ 
(*\sigma + *\tau) +z \nmge 0 \ \Longleftrightarrow\ * (\sigma \oplus \tau) +z \nmge 0 \ . \ \ (a)
\]
\noindent The two implications $(\Rightarrow)$ in (a) follow, using Lemma~\ref{lcd},  by proving that the following pair of relations form a $\Phi$-bisimulation:
\\ $\mathcal{R}_1 = \{ (*(\sigma\oplus \tau)+z,0 ) \mid z\in \mathcal{J} \ \wedge\ (*\sigma + *\tau)+z \mge 0\}$
\\ $\mathcal{R}_2 = \{ (0, *(\sigma\oplus \tau)+z) \mid z\in \mathcal{J} \ \wedge\  0 \nmge (*\sigma + *\tau)+z \}$.
\\ Namely, let $(*(\sigma\oplus \tau)+z,0 )\in \mathcal{R}_1 $. We need to prove that $\forall x\in *(\sigma \oplus \tau)+z.\ (0,x) \in \mathcal{R}_2$.
Two cases arise: (i) $x= *(\sigma\oplus\tau) +z'$ and $z'\in z$, or (ii) $x=x'+z$ and $x'\in *(\sigma \oplus \tau)$. 
\\ In case (i), since $(*\sigma + *\tau) +z \mge 0$,
then 0\nmge $(*\sigma +*\tau) +z' $, hence $(0, *(\sigma + \tau) + z')\in \mathcal{R}_2$.
\\ In case (ii), if $\sigma= \infty_K$ and $\tau=\alpha$, then $x'\in * (\sigma \oplus \tau)$ iff (a) $x'= *\infty$ or (b) $x'= *(k\oplus \alpha)$, for $k\in K$. 
If $x'= *\infty$, we have to prove that $(0, *\infty +z)\in \mathcal{R}_2$. But, since $\infty= \infty \oplus\infty$, it is sufficient to prove that $0\nmge *(\infty \oplus\infty ) +z $.
But this can be  easily proved.  If $x'= *(k\oplus \alpha)$, for $k\in K$, then we have to prove that $0\nmge (*k + * \alpha) +z $.
This follows from $(*\infty_K + * \alpha) +z \mge 0$, by definition of $\mge$, since $*k\in * \infty_K$. If $\sigma=\infty_K$ and $\tau =\infty_H$, the proof is similar.
\\ Using similar arguments one can also prove that if $(0,*(\sigma\oplus \tau)+z)\in \mathcal{R}_2$, then $\exists x\in *(\sigma\oplus \tau) +z.\ (0,x) \in \mathcal{R}_1$.
\\ Vice versa, in order to prove the implications $(\Leftarrow)$ in (a), it is sufficient to show that the following  pair of relations form a $\Phi$-bisimulation:
\\ $\mathcal{R}_1 = \{ ((*\sigma + *\tau)+z,0 ) \mid z\in \mathcal{J} \ \wedge\   *(\sigma\oplus \tau)+z  \mge 0\}$
\\ $\mathcal{R}_2 = \{ (0, (*\sigma + *\tau)+z ) \mid z\in \mathcal{J} \ \wedge\   0 \nmge *(\sigma\oplus \tau)+z  \}$.
\\ We omit the proof which uses arguments similar to the ones above.
 \end{proof}

\subsubsection*{Generalized Grundy function.}
We define a \emph{generalized Grundy function} $\gamma$, associating to each hypergame $x$ a generalized  Grundy
number such that $x\approx *\gamma(x)$. As we  will show, as a consequence,  such function provides a \emph{compositional semantics} on hypergames, inducing
the contextual equivalence $\approx$.

We define the generalized Grundy function $\gamma: \mathcal{J}\rightarrow \mathit{Ord}\cup \{ \infty_K  \mid K \subseteq \mathit{Ord} \}$ in two steps.
\begin{enumerate}[(1)]
\item  First, we define $\gamma_0:\mathcal{J}\rightarrow \mathit{Ord}\cup \{ \bot \}$ as least fixpoint of a suitable monotone operator. 
The function $\gamma_0$
will mark all positions $\approx$-equivalent to a  Nim game with the corresponding Grundy number, and it marks as $\bot$ all positions corresponding to
$\infty$-hypergames.
More precisely,  $\gamma_0: \mathcal{J}\rightarrow \mathit{Ord}\cup \{ \bot \}$ is defined as limit of  subsequent approximations, starting from the overall undefined function $f_0= \lambda x. \bot$. Approximations of $\gamma_0$ are
built by induction: at a given step, the next approximation function will mark a position $x$ with the $\mathit{mex}$ $\alpha$ of the successors of $x$ which already have received a marking
in $\mathit{Ord}$ in previous steps, provided that any $\bot$-successor $y\in x$ has a successor $z\in y$ which is already marked by  $\alpha$.
Intuitively, this marking procedure of a position $x$ with a $\mathit{mex}$ $\alpha$ will remain ``correct'' in the future, since any successor $y\in x$, which has not yet received a marking in $\mathit{Ord}$, will never receive $\alpha$ as marking, because it has a successor already marked by $\alpha$; thus $\alpha$ will remain the $\mathit{mex}$ of
the successors of $x$ marked in $\mathit{Ord}$ at any subsequent step.
\item Once we have defined $\gamma_0$, we can  define the generalized Grundy function $\gamma$,  by suitably marking $\bot$-positions with 
$\infty$-Grundy numbers.
\end{enumerate}

In what follows, we formalize the construction hinted above.

\begin{defi}
Let $(\mathit{Ord}\cup \{ \bot \}, \sqsubseteq)$ be the c.p.o. endowed with the flat ordering  such that $\bot \sqsubseteq \alpha$, for any $\alpha\in \mathit{Ord}$.
\\ \phantom{ii}i) Let $([\mathcal{J}\rightarrow \mathit{Ord}\cup \{ \bot \}], \sqsubseteq)$ denote the space of functions  $f:\mathcal{J}\rightarrow \mathit{Ord}\cup \{ \bot \}$
endowed  with the point-wise ordering on functions.
\\ \phantom iii) Let $f\in [\mathcal{J}\rightarrow \mathit{Ord}\cup \{ \bot \}]$ and  let $x\in \mathcal{J}$, we define $\mathcal{F}_f (x) =
\{ fy \mid y\in x \ \wedge fy\neq \bot \}$.
\\ iii)  Let  $D \subseteq
[\mathcal{J}\rightarrow \mathit{Ord}\cup \{ \bot \}]$ denote the subspace of functions $f$ such that, for all $x\in \mathcal{J}$, 
\[ fx=\alpha\in \mathit{Ord} 
\  \Longrightarrow \  \alpha = \mathit{mex}  (\mathcal{F}_f (x))  \  \wedge \   \forall y\in x\ (fy=\bot  \ \Rightarrow \
 \exists z\in y. \ fz=
\mathit{mex} (\mathcal{F}_f (x))) \ . \]
\end{defi}

\begin{lem}\label{dicio}
 $(D, \sqsubseteq)$ with the point-wise ordering on functions is a c.p.o.
\end{lem}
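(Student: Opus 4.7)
The plan is to verify that $D$ has a bottom element and is closed under directed suprema in the ambient pointwise-flat c.p.o.\ $[\mathcal{J}\to\mathit{Ord}\cup\{\bot\}]$. The bottom $\lambda x.\bot$ lies in $D$ since its defining implication is vacuous. Given a directed family $\{f_i\}_{i\in I}\subseteq D$, its pointwise sup $f$ is well-defined by $f(x)=\alpha$ whenever some $f_i(x)=\alpha$ (any two such values must agree by directedness and flatness of $\mathit{Ord}\cup\{\bot\}$) and $f(x)=\bot$ otherwise; this is clearly the sup in the ambient c.p.o., so the only real content is checking that $f\in D$.

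Suppose $f(x)=\alpha\in\mathit{Ord}$, witnessed by some $f_i(x)=\alpha$. From $f_i\in D$ we have $\alpha=\mathit{mex}(\mathcal{F}_{f_i}(x))$, and every $y\in x$ with $f_i(y)=\bot$ admits some $z\in y$ with $f_i(z)=\alpha$. Note that $\mathcal{F}_f(x)=\bigcup_j\mathcal{F}_{f_j}(x)$: the $\supseteq$ inclusion is immediate from $f_j\sqsubseteq f$, and the $\subseteq$ inclusion holds because any $\beta\in\mathcal{F}_f(x)$ comes from some $y\in x$ with $f(y)=\beta$, which forces $f_j(y)=\beta$ for some $j$. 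Hence $\{\beta:\beta<\alpha\}\subseteq\mathcal{F}_{f_i}(x)\subseteq\mathcal{F}_f(x)$. If, toward a contradiction, $\alpha\in\mathcal{F}_f(x)$, pick $y\in x$ and $j$ with $f_j(y)=\alpha$, and by directedness choose $f_k\sqsupseteq f_i,f_j$; then $f_k(x)=f_k(y)=\alpha$ with $y\in x$, contradicting $\alpha=\mathit{mex}(\mathcal{F}_{f_k}(x))$. Therefore $\alpha=\mathit{mex}(\mathcal{F}_f(x))$. For the second clause of the membership condition for $D$, if $y\in x$ satisfies $f(y)=\bot$, then necessarily $f_i(y)=\bot$ as well (otherwise $f(y)=f_i(y)\neq\bot$), so the $z\in y$ provided by $f_i\in D$ satisfies $f(z)=f_i(z)=\alpha$, as required.

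The main obstacle is precisely ruling out $\alpha\in\mathcal{F}_f(x)$: the set $\mathcal{F}_f(x)$ accumulates contributions from every member of the family, whereas the $\mathit{mex}$ condition $\alpha\notin\mathcal{F}_{f_i}(x)$ only controls one index. The crucial move is to use \emph{directedness} (not merely chain-closure) to amalgamate any hypothetical $f_j$ contributing $\alpha$ at a successor of $x$ with $f_i$ into a single $f_k\in D$, where the $\mathit{mex}$ constraint of $f_k$ forces the contradiction. Once this is in place, the rest is bookkeeping, and $D$ is a sub-c.p.o.\ of $[\mathcal{J}\to\mathit{Ord}\cup\{\bot\}]$.
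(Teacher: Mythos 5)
Your proof is correct. The paper in fact states this lemma without any proof, so there is no argument to compare against; your verification supplies the omitted details and correctly isolates the only non-trivial point, namely that the pointwise supremum $f$ of a directed family in $D$ still satisfies $\alpha=\mathit{mex}(\mathcal{F}_f(x))$ whenever $f(x)=\alpha$, which you establish by showing $\mathcal{F}_f(x)=\bigcup_j\mathcal{F}_{f_j}(x)$ and then amalgamating the witness $f_i(x)=\alpha$ with any hypothetical $f_j$ contributing $\alpha$ at a successor of $x$ into a common upper bound $f_k\in D$, whose own $\mathit{mex}$ constraint yields the contradiction. (The same amalgamation is of course available for chains by comparability, so your argument covers either reading of ``c.p.o.''; the flat codomain makes the rest, including preservation of the second clause of membership in $D$, routine as you say.)
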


\begin{defi}
Let $T:(D, \sqsubseteq) \longrightarrow  (D, \sqsubseteq)$ be the function operator defined by
\[ T(f)(x)= \begin{cases}
\mathit{mex}  (\mathcal{F}_f (x)) & \mbox{ if }\  \forall y\in x\ (fy=\bot  \ \Longrightarrow \
 \exists z\in y. \ fz=
\mathit{mex} (\mathcal{F}_f (x)))
\\
\bot & \mbox{ otherwise }
\end{cases} \]
\end{defi}

\begin{lem}\label{ven}
The operator $T$ is monotone over $D$.
\end{lem}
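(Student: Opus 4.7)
The plan is to unfold the definition of $\sqsubseteq$ on $D$ (the flat, pointwise order) and show that if $f\sqsubseteq g$ and $T(f)(x)=\alpha\in\mathit{Ord}$, then $T(g)(x)=\alpha$ as well. Since $\bot$ is the bottom element, this is exactly what pointwise monotonicity amounts to.

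First I would observe the easy inclusion $\mathcal{F}_f(x)\subseteq \mathcal{F}_g(x)$: whenever $f(y)\in\mathit{Ord}$ we have $g(y)=f(y)$, so every element of $\mathcal{F}_f(x)$ is already in $\mathcal{F}_g(x)$. Next, assuming $T(f)(x)=\alpha$, I would use the definition of $T$ to record both (a) $\alpha=\mathit{mex}(\mathcal{F}_f(x))$ and (b) for every $y\in x$ with $f(y)=\bot$, some $z\in y$ satisfies $f(z)=\alpha$.

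The central step (and the only delicate one) is to prove that $\alpha\notin\mathcal{F}_g(x)$, so that $\mathit{mex}(\mathcal{F}_g(x))$ still equals $\alpha$ (all smaller ordinals are already in $\mathcal{F}_f(x)\subseteq \mathcal{F}_g(x)$ by (a)). Pick $y\in x$; I must show $g(y)\neq\alpha$. If $f(y)\in\mathit{Ord}$, then $g(y)=f(y)\in\mathcal{F}_f(x)$, which excludes $\alpha$ by (a). If $f(y)=\bot$, invoke (b) to get $z\in y$ with $f(z)=\alpha$; monotonicity of $f\sqsubseteq g$ gives $g(z)=\alpha$, so $\alpha\in\mathcal{F}_g(y)$. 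Since $g\in D$, if $g(y)$ were an ordinal it would equal $\mathit{mex}(\mathcal{F}_g(y))\neq\alpha$; and if $g(y)=\bot$ then trivially $g(y)\neq\alpha$. Either way $g(y)\neq\alpha$, as required.

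Finally, I would verify the side condition defining $T(g)(x)=\mathit{mex}(\mathcal{F}_g(x))$: for any $y\in x$ with $g(y)=\bot$, necessarily $f(y)=\bot$ (as $f\sqsubseteq g$), so by (b) some $z\in y$ has $f(z)=\alpha$, whence $g(z)=\alpha=\mathit{mex}(\mathcal{F}_g(x))$. Therefore $T(g)(x)=\alpha=T(f)(x)$, proving $T(f)\sqsubseteq T(g)$. The main obstacle, as indicated above, is showing $\alpha\notin \mathcal{F}_g(x)$; this is where one must crucially use that $g$ belongs to $D$, not merely that it extends $f$, since otherwise passing from $f$ to $g$ could introduce a fresh occurrence of $\alpha$ among the successor values and shift the $\mathit{mex}$.
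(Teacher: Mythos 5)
Your proof is correct and follows essentially the same route as the paper's: the crux in both arguments is that the side condition on $f$ supplies, below any $\bot$-successor $y$ of $x$, a grandchild $z$ with $f(z)=\alpha$ (hence $g(z)=\alpha$), which together with $g\in D$ forbids $g(y)=\alpha$ and so keeps the $\mathit{mex}$ equal to $\alpha$. The only differences are presentational --- you argue directly where the paper argues by contradiction --- and your explicit verification that the side condition still holds for $g$ (so that $T(g)(x)\neq\bot$) covers a case the paper's proof passes over with ``we necessarily have''.
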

\begin{proof}
Let $f\sqsubseteq f'$. Assume by contradiction that $T(f)\not \sqsubseteq T(f')$. Then we necessarily have that $T(f)(x)=  \mathit{mex} (\mathcal{F}_f (x))=\alpha$ and
$T(f')(x)=  \mathit{mex}(\mathcal{F}_{f'}  (x))=\alpha'$ with $\alpha\neq \alpha'$. Since $f\sqsubseteq f'$, this implies $\alpha< \alpha'$. Then there is $y\in x$ such that $f'y=\alpha$,
while $fy=\bot$.  But then
there  is $z\in y$  such that $fz= \alpha$. Thus, since   $f\sqsubseteq f'$, also $f'z=\alpha$. This contradicts the fact that  $f'y=\alpha$  is a $\mathit{mex}$.
\end{proof}

By Lemmata~\ref{dicio} and~\ref{ven}, we have:

\begin{thm}\label{lfix} The operator $T$ has least fixpoint
 $\gamma_0= \bigsqcup_{n\in \omega} f_{\alpha}$, where 
 \[ f_0 = \la x. \bot \ \ \ \ \  \  \ \ f_{\alpha+1} = Tf_{\alpha}\ \ \  \ \ \ \ \ f_{\lambda} = \bigsqcup_{\alpha<\lambda} Tf_{\alpha} \ , \mbox{ for } \lambda \mbox{ limit.}\]
\end{thm}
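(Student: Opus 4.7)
The plan is to prove this as a standard transfinite Kleene-style fixpoint construction for the monotone operator $T$ on the c.p.o.\ $(D, \sqsubseteq)$, reading the indexing in the statement as ranging over all ordinals (up to stabilization) rather than just over $\omega$.

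First, I would prove by transfinite induction on $\alpha$ that $f_\alpha$ is well-defined, belongs to $D$, and that $f_\alpha \sqsubseteq f_{\alpha+1}$. The base case is immediate, since $f_0 = \lambda x.\bot$ lies in $D$ and is the least element, so trivially $f_0 \sqsubseteq Tf_0 = f_1$. For the successor step, from $f_\alpha \sqsubseteq f_{\alpha+1}$ and the monotonicity of $T$ (Lemma~\ref{ven}) I obtain $f_{\alpha+1} = Tf_\alpha \sqsubseteq Tf_{\alpha+1} = f_{\alpha+2}$; and $f_{\alpha+1} \in D$ by definition of $T$. For the limit step, since $\{Tf_\beta\}_{\beta<\lambda}$ is a chain in $D$, Lemma~\ref{dicio} guarantees that its supremum $f_\lambda$ lies in $D$, and $f_\lambda \sqsubseteq Tf_\lambda = f_{\lambda+1}$ follows from monotonicity applied to $Tf_\beta \sqsubseteq Tf_\lambda$ for each $\beta<\lambda$.

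Next, I would argue that the chain $(f_\alpha)_\alpha$ stabilizes. Since the target flat c.p.o.\ $\mathit{Ord}\cup\{\bot\}$ is such that once $f_\alpha(x)\in \mathit{Ord}$ the value $f_\beta(x)$ is forced to remain equal for all $\beta\geq \alpha$, the increase from $f_\alpha$ to $f_{\alpha+1}$ is witnessed by at least one position $x$ moving from $\bot$ to some ordinal value. Working inside the universe of sets of hereditary cardinality below the fixed inaccessible $\kappa$ (as stipulated in the footnote on $\mathrm{Class}^*$), the class of positions reachable in the computation is bounded, so there must exist an ordinal $\bar\alpha$ at which $f_{\bar\alpha+1} = f_{\bar\alpha}$. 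Set $\gamma_0 := f_{\bar\alpha}$; then $T\gamma_0 = \gamma_0$, so $\gamma_0$ is a fixpoint, and by construction $\gamma_0 = \bigsqcup_\alpha f_\alpha$.

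Finally, to see that $\gamma_0$ is the \emph{least} fixpoint, I would let $g\in D$ be any fixpoint of $T$ and show by transfinite induction that $f_\alpha \sqsubseteq g$ for every $\alpha$. The base case $f_0 = \lambda x.\bot \sqsubseteq g$ is immediate. For successors, $f_{\alpha+1} = Tf_\alpha \sqsubseteq Tg = g$ by monotonicity and the induction hypothesis. For limits, $g$ is an upper bound of $\{f_\beta\}_{\beta<\lambda}$, hence $f_\lambda = \bigsqcup_{\beta<\lambda} Tf_\beta \sqsubseteq g$. Thus $\gamma_0 \sqsubseteq g$, completing the proof.

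The main delicate point is the stabilization argument, since $\mathcal{J}$ is a proper class and one must appeal to the size-bounding convention on $\mathrm{Class}^*$ to guarantee that the increasing chain cannot strictly grow through all ordinals. Everything else is a routine adaptation of the Kleene fixpoint construction and uses only monotonicity of $T$ and completeness of $D$.
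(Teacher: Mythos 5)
Your proof is correct (under the sensible reading of the statement's $\bigsqcup_{n\in\omega}f_\alpha$ as a transfinite supremum $\bigsqcup_\alpha f_\alpha$), but it resolves the one genuinely delicate point differently from the paper. Both proofs handle the routine parts the same way: monotonicity of $T$ (Lemma~\ref{ven}) plus chain-completeness of $D$ (Lemma~\ref{dicio}) give an increasing chain in $D$, and leastness is the standard transfinite induction $T^\alpha(f_0)\sqsubseteq\gamma_0'$ for any fixpoint $\gamma_0'$. The divergence is in why $\gamma_0$ is actually a fixpoint. You argue for \emph{global} stabilization of the chain: since each strict increase forces some position to move from $\bot$ to an ordinal value in the flat order, and the universe is size-bounded by the inaccessible $\kappa$, the chain becomes stationary at some $\bar\alpha$, whence $Tf_{\bar\alpha}=f_{\bar\alpha}$. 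This is valid, but it genuinely depends on the cardinality convention for $\mbox{Class}^*$: over a true proper class $\mathcal{J}$ there need be no single ordinal at which the whole chain stabilizes. The paper instead verifies $T(\bigsqcup_\alpha f_\alpha)=\bigsqcup_\alpha f_\alpha$ \emph{pointwise}, using the fact that the value at a fixed $x$ is determined after a number of iterations bounded by the cardinality of the transitive closure of $x$ (a set, even when $\mathcal{J}$ is a proper class); the remark following the theorem explicitly lists your global-bounding strategy and the paper's pointwise eventual-constancy strategy as the two available ``way outs.'' So your argument buys simplicity at the cost of committing to the inaccessible-cardinal reading, while the paper's buys uniformity over proper classes at the cost of a slightly fiddlier pointwise analysis. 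A small credit to you: you spell out that each $f_\alpha$ lies in $D$ and that the chain is increasing, which the paper compresses into ``one can easily check that $\gamma_0\in D$.''
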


Theorem~\ref{lfix} above is an instance of a general fixpoint theorem. However, notice that, in this case, we have to pay attention to the fact that we are dealing with
proper classes. There are various way outs. The most obvious rests on restricting on sets whose transitive closure has cardinality smaller than an inaccessible cardinal.
 Other approaches rely on the fact that the iterations are point-wise eventually constant. Namely, the
 value of the fixpoint on a given set is determined after a number of iterations corresponding to the cardinality  of its
 transitive closure, which is always a set.

By construction, $\gamma_0$ satisfies the following property:

\begin{prop}
For any $x\in {\mathcal J}$,
\[ \gamma_0 (x) =\alpha \in \mathit{Ord} \ \Longrightarrow\ \alpha= \mathit{mex} \{ \gamma_0(y) \mid y\in x \ \wedge\ \gamma(y) \in \mathit{Ord} \} \ .\]
\end{prop}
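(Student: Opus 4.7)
The plan is to invoke directly the fact, established in Theorem~\ref{lfix}, that $\gamma_0$ is a fixpoint of the operator $T$. Since $\gamma_0 = T(\gamma_0)$, we have $\gamma_0(x) = T(\gamma_0)(x)$ for every $x \in \mathcal{J}$. The definition of $T(f)(x)$ splits into two branches: it returns $\mathit{mex}(\mathcal{F}_f(x))$ whenever the side condition on $\bot$-successors is met, and $\bot$ otherwise. In particular, whenever $T(f)(x)$ is an ordinal, it must have been produced by the first branch, so it necessarily coincides with $\mathit{mex}(\mathcal{F}_f(x))$.

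Applying this observation to $f = \gamma_0$ under the hypothesis $\gamma_0(x) = \alpha \in \mathit{Ord}$, we obtain $\alpha = \mathit{mex}(\mathcal{F}_{\gamma_0}(x))$. It then only remains to rewrite the set $\mathcal{F}_{\gamma_0}(x)$ in the form claimed in the statement: by definition, $\mathcal{F}_{\gamma_0}(x) = \{\gamma_0(y) \mid y \in x \ \wedge\ \gamma_0(y) \neq \bot\}$, and since the codomain of $\gamma_0$ is $\mathit{Ord} \cup \{\bot\}$, the predicate $\gamma_0(y) \neq \bot$ is equivalent to $\gamma_0(y) \in \mathit{Ord}$. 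This yields the required identity.

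There is essentially no real obstacle here: the content of the proposition is simply the first clause of the definition of $T$, read through the fixpoint equation $\gamma_0 = T(\gamma_0)$. The only (very mild) subtlety is the reformulation of the codomain condition $\gamma_0(y) \neq \bot$ as $\gamma_0(y) \in \mathit{Ord}$, which is immediate from the way the c.p.o. $\mathit{Ord} \cup \{\bot\}$ was introduced.
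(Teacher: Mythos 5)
Your argument is correct and is exactly the justification the paper intends: the proposition is stated without proof as holding ``by construction,'' and reading the first branch of the definition of $T$ through the fixpoint equation $\gamma_0 = T(\gamma_0)$ is precisely that construction. The only point worth flagging is that the displayed set in the statement writes $\gamma(y)\in\mathit{Ord}$ (with $\gamma$, defined only later) where $\gamma_0(y)\in\mathit{Ord}$ is meant, but the two conditions are equivalent by Definition~\ref{geng}, so your silent identification of the set with $\mathcal{F}_{\gamma_0}(x)$ is harmless.
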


Now we are ready to define the generalized Grundy function $\gamma$, which is obtained from $\gamma_0$ by extending the marking on $\perp$-positions. These 
will receive $\infty$-marking as follows:

\begin{defi}[Generalized Grundy Function]\label{geng}
Let $\gamma: \mathcal{J}\rightarrow \mathit{Ord}\cup \{ \infty_K \mid K\in \mathit{Ord} \}$ be defined by 
\[ \gamma(x) = \begin{cases}
\gamma_0(x) & \mbox{ if } \gamma_0(x) \in \mathit{Ord}
\\
\infty_K & \mbox{ for } K =\{ \gamma(y)  \mid y\in x \ \wedge\  \gamma_0(y)\neq \bot \}, \mbox{  otherwise }
\end{cases}\]
\end{defi}

\begin{rem} Notice that any fixpoint of the operator $T$ induces, via Definition~\ref{geng},  a function which satisfies the axiomatization of generalized Grundy function given in the literature, see 
\emph{e.g.} \cite{FR01}. It has been shown that, for some classes of cyclic games, \emph{i.e.} the \emph{locally path-bounded} graphs,
a generalized Grundy function exists uniquely with values in $\omega \cup \{ \infty_K \}_{K\in \omega}$, \cite{FR01}. A graph is locally path-bounded if for every node 
$x$ there is  $n_x\in\omega$ such that the length  of every path consisting of different nodes, starting from $x$, does not exceed $n_x$. 
The above result of~\cite{FR01} can be strengthened, by observing that bisimilar graphs have the same Grundy function. Therefore, a Grundy function 
with values in $\mathit{Ord} \cup \{ \infty_K \}_{K\in \omega}$ exists uniquely on all graphs which
are bisimilar to locally path-bounded graphs. However, notice that this is not the largest class of graphs for which a Grundy function with values in 
$\mathit{Ord} \cup \{ \infty_K \}_{K\in \omega}$ exists uniquely. Namely, there are path-finite but not locally-path bounded graphs for which such property holds, \emph{e.g.}
the following:
\vspace*{1ex}
\[
 \xymatrixrowsep={0.001pc}
\xymatrixcolsep={0.0000001pc}
 \xymatrix@!=1.5pc{
 & & 1 \ar[ddll]  \ar[dd]  \ar[ddrr] \ar[ddrrrr]&& &&  & &
\\
&&&&&&&&&&
\\
 0  & & 0 \ar[dd]  & & 0 \ar[dd] & & 0 \ar[dd]  &&  \ldots
\\
&&&&&&&&&&
\\
 & &1 \ar[uull] & & 1 \ar[lluu] & &1  \ar[lluu] &  & 
 } 
\] 

In this section, we have proved that a generalized Grundy function with values in $\mathit{Ord} \cup \{ \infty_K \}_{K\in \mathit{Ord}}$ exists
on the whole class of hypergames. However, this is \emph{not} unique, because  $T$
 \emph{does not} have a unique fixpoint. 
As a consequence, on hypergames, there exists more functions satisfying the axiomatization of the Grundy function in~\cite{FR01}.
Namely, let us consider the following infinite hypergame, on which the function $\gamma$ induced by the least fixpoint of $T$,  $\gamma_0$,  provides the marking:\vspace*{1ex}
\[
 \xymatrixrowsep={0.001pc}
\xymatrixcolsep={0.0000001pc}
 \xymatrix@!=1.5pc{
\infty_1 \ar[rr]  \ar[dd] & & \infty   \ar[rr] & & \infty_1  \ar[rr]  \ar[dd] & & \infty   \ar[rr] & &  \infty_1 \ar[rr]  \ar[dd] & & \ldots
\\
&&&&&&&&&&
\\
 1  \ar[dd] & & & & 1 \ar[dd]  & &  && 1\ar[dd] && 
\\
&&&&&&&&&&
\\
0 & & & & 0 \ar[lllluu] & & &  & 0 \ar[lllluu] & &
 } 
\] 
One can check that there are other  fixpoints of $T$, such as \emph{e.g.}  the function  $\gamma'_0$,
which on the hypergame above  gives the following marking:\vspace*{1ex}
\[
 \xymatrixrowsep={0.001pc}
\xymatrixcolsep={0.0000001pc}
 \xymatrix@!=1.5pc{
0 \ar[rr]  \ar[dd] & & 1   \ar[rr] & & 0  \ar[rr]  \ar[dd] & & 1 \ar[rr] & &  0 \ar[rr]  \ar[dd] & & \ldots
\\
&&&&&&&&&&
\\
 1  \ar[dd] & & & & 1 \ar[dd]  & &  && 1\ar[dd] && 
\\
&&&&&&&&&&
\\
0 & & & & 0 \ar[lllluu] & & &  & 0 \ar[lllluu] & &
 } 
\] 
Thus also $\gamma'_0$ would satisfy the axiomatic definition of the Grundy function.
However, if we denote by $\gamma'$ the function induced by $\gamma'_0$ via Definition~\ref{geng}, this provides a marking
which is \emph{not} safe, since it is not the case that $x\approx *\gamma' (x)$ for any $x$. Namely, 
 some nodes marked with $0$ in the above figure have a non-losing strategy for  player I. On the contrary, the least fixpoint $\gamma_0$ provides a safe marking, since, as we will prove,  $x\approx *\gamma(x)$ for any $x$. Finally, it is interesting to notice that, if we add the natural requirement $\forall x.\ x\approx *\gamma(x)$ in the axiomatization
 of the Grundy function, then, by Lemma~\ref{ugu}, the Grundy function exists  uniquely  on the whole class of hypergames.
\end{rem}

\subsubsection*{Properties of $\gamma$.}\label{propg}
Here we show  that $\gamma$ is ``well-behaved'', \emph{i.e.}, for any hypergame $x$, $x\approx *\gamma(x)$, and it provides a \emph{compositional semantics} of hypergames, \emph{fully abstract} w.r.t. the contextual 
 equivalence $\approx$.

\begin{thm}\label{jam}
For any $x\in \mathcal{J}$, we have
\[ x\approx *\gamma(x)\ .\]
\end{thm}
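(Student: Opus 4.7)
The plan is to unfold $\approx$ and reduce the theorem to biconditionals on $\mge$ and $\nmge$ that can be handled coinductively. By Lemma~\ref{dd}, proving $x \approx *\gamma(x)$ amounts to proving $x + z \Updownarrow *\gamma(x) + z$ for every $z \in \mathcal{J}$. Combining the Characterization Theorem~\ref{chs} with the symmetry of $\mge$ and $\nmge$ on impartial hypergames (Lemma~\ref{lcd}), this reduces further to showing, uniformly in $z$,
\begin{align*}
x + z \mge 0 &\ \Longleftrightarrow\ *\gamma(x) + z \mge 0, \\
x + z \nmge 0 &\ \Longleftrightarrow\ *\gamma(x) + z \nmge 0.
\end{align*}

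I would then establish all four implications in a single coinductive sweep, by exhibiting a $\Phi$-bisimulation $(\mathcal{R}_1, \mathcal{R}_2)$ whose first component contains every pair $(*\gamma(x) + z, 0)$ and $(x + z, 0)$ (together with their reverses) admitting a witness of $\mge 0$ on either side, and whose second component is defined analogously with $\nmge$. The Coinduction Principle for $\Phi$-bisimulations then yields the required biconditionals once the closure $(\mathcal{R}_1, \mathcal{R}_2) \subseteq \Phi(\mathcal{R}_1, \mathcal{R}_2)$ is verified.

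The verification splits according to the shape of $\gamma(x)$. If $\gamma(x) = \alpha \in \mathit{Ord}$, the $\mathit{mex}$ property of $\gamma_0$ provides, for each $\beta < \alpha$, an $x' \in x$ with $\gamma(x') = \beta$, and forbids any $x' \in x$ from having $\gamma(x') = \alpha$; this is exactly what is needed to match moves on $*\alpha$ against moves on $x$ one-for-one. The moves on the $z$-component are trivial, since the hypotheses $x + z \mge 0$ and $x + z \nmge 0$ propagate verbatim to each $z' \in z$. If $\gamma(x) = \infty_K$, the set $K$ enumerates by construction precisely the ordinal $\gamma$-values reachable in one move from $x$, so moves $*\infty_K \to *k$ with $k \in K$ are matched as in the ordinal case, while the self-loop move $*\infty_K \to *\infty$ is matched by any $y \in x$ with $\gamma_0(y) = \bot$; such a $y$ necessarily exists, since otherwise $T$ would have assigned $x$ an ordinal marking at some stage of Theorem~\ref{lfix}.

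The main obstacle will be the $\infty_K$-case, specifically the self-loop branch. To close $\mathcal{R}_2$ under the move $*\infty_K + z \to *\infty + z$, one must ensure that the matching move $x + z \to y + z$ (with $\gamma_0(y) = \bot$, hence $\gamma(y) = \infty_{K'}$ for some $K'$) preserves the defining hypothesis of $\mathcal{R}_2$. This is the delicate bookkeeping step: the relations must be uniform in the side context $z$, and the invariant ``the hypergame component is currently $\bot$-marked and its canonical image is of shape $*\infty_{K'}$'' has to be propagated along arbitrarily deep infinite branches. I expect this to rest ultimately on the \emph{least}-fixpoint characterization of $\gamma_0$ (Theorem~\ref{lfix}), which guarantees that no spurious ordinal marking appears along such branches, so that every successor reached through a self-loop remains safely inside the $\infty$-stratum where the relations were defined.
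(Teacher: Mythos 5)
Your overall architecture coincides with the paper's: reduce $x \approx *\gamma(x)$ via Lemma~\ref{dd} and Theorem~\ref{chs} to the biconditionals $x+z \mge 0 \Leftrightarrow *\gamma(x)+z \mge 0$ and $x+z \nmge 0 \Leftrightarrow *\gamma(x)+z \nmge 0$ for all $z\in\mathcal{J}$, and discharge these by exhibiting $\Phi$-bisimulations parametric in the side context $z$, with a case split on whether $\gamma(x)$ is an ordinal or an $\infty_K$. (The paper uses two separate pairs of relations, one per direction of the biconditionals, rather than your single merged pair; by monotonicity of $\Phi$ that difference is inessential.)

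The gap is in your ordinal case. You claim the $\mathit{mex}$ property lets you ``match moves on $*\alpha$ against moves on $x$ one-for-one'', but the matching is not one-for-one: $x$ may have options $x'$ with $\gamma(x')=\beta>\alpha$, and, more importantly, options with $\gamma_0(x')=\bot$, and neither kind has a counterpart among the options of $*\alpha$. The first kind is handled by Nim-reversibility ($*\alpha$ is itself an option of $*\beta$, so the hypothesis on $*\alpha+z$ is recovered after one further move). The second kind is precisely where the extra clause in the definition of $D$ and of the operator $T$ --- every $\bot$-marked option $y$ of an $\alpha$-marked position has an option $z'\in y$ with $\gamma_0(z')=\alpha$ --- is load-bearing; your sketch never invokes it, and without it the statement is false: a position whose only ``extra'' option leads into an inescapable loop would receive an ordinal marking it cannot sustain in sum contexts. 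The paper sidesteps part of this by defining $\mathcal{R}_2$ so that pairs $(0,x'+z)$ with $\gamma(x')=\infty_{K'}$ belong to it unconditionally, since $0\nmge *\infty_{K'}+z$ holds for every $z$ (player I can always draw in the $*\infty_{K'}$ component); the revert clause is then what allows those pairs to be unfolded. Finally, your attribution of safety solely to the least-fixpoint property of Theorem~\ref{lfix} conflates two distinct ingredients: leastness rules out self-justifying ordinal markings on cycles (the paper's $\gamma'_0$ counterexample), while the revert clause rules out ordinal markings with unanswerable $\bot$-options. Both are needed, and only the first appears in your sketch.
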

\begin{proof} 
By Theorems~\ref{aps} and~\ref{chs}, it is sufficient to prove that,
 for any  $z\in \mathcal{J}$, 
\[ x+z \mge 0 \ \Longleftrightarrow *\gamma(x) +z \mge 0 \mbox{    and    } x+z \nmge 0 \ \Longleftrightarrow *\gamma(x) +z \nmge 0 \ .\]
\noindent In order to prove  $(\Leftarrow)$ implications, it is sufficient to show that the following pair of relations is a $\Phi$-bisimulation:
\\ $\mathcal{R}_1 = \{ (x+z, 0) \mid  z\in \mathcal{J} \ \wedge\ *\gamma (x) +z \mge 0 \} \ \ \ \  \mathcal{R}_2 = \{ (0, x+z) \mid z\in \mathcal{J} \ \wedge\ 0\nmge *\gamma (x) +z \}$.
\\  Vice versa, in order to prove  $(\Rightarrow)$ implications, it is sufficient to show that the following pair of relations is a $\Phi$-bisimulation:
\\ $\mathcal{R}'_1 = \{ (*\gamma (x)+z, 0) \mid z\in \mathcal{J} \ \wedge\ x +z \mge 0 \} \ \ \ \  \mathcal{R}'_2 = \{ (0,*\gamma (x)+z) \mid z\in \mathcal{J} \ \wedge\ 0 \nmge x+z \}$.
\\ Both proofs proceed by case analysis. We only show that if $(x+z,0)\in \mathcal{R}_1$, then $\forall y\in x+z.\ (0,y) \in \mathcal{R}_2$. 
The other cases being dealt with similarly.
If $y\in x+z$ is such that $y= x+z'$ and $z'\in z$, the thesis
follows from the fact that $0\nmge * \gamma(x) +z'$, since 
$*\gamma(x) +z \mge 0$. If $y=x'+z$ and $x'\in x$, then if $\gamma(x') \in \mathit{Ord}$, by definition of
$\gamma$, $*\gamma (x') \in *\gamma(x)$, and hence $0\nmge *\gamma(x') +z $; thus $(0, x'+z) \in \mathcal{R}_2$. If $\gamma(x') = \infty_K$, then one can show that $0\nmge
*\infty_K +z$ using Theorem~\ref{chs}, since player I has always a non-losing strategy on $*\infty_K +z$, for any $z$. Hence $(0,x'+z)\in \mathcal{R}_2$.
\end{proof}

By Theorem~\ref{jam} and Proposition~\ref{csum}, we have:

\begin{prop}[Compositionality]
For all $x,y \in \mathcal{J}$,
\[  \gamma (x+y) = \gamma(x) \oplus \gamma(y) \ .\]
\end{prop}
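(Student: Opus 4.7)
The plan is to chain together the three ingredients already in hand: the semantic adequacy statement $x \approx *\gamma(x)$ from Theorem~\ref{jam}, the behaviour of canonical hypergames under sum from Proposition~\ref{csum}, and the separation property for canonical hypergames from Lemma~\ref{ugu}. The overall strategy mirrors the argument used in the well-founded case (Corollary~\ref{compo}), with $\sim$ replaced throughout by the contextual equivalence $\approx$.

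Concretely, first I would apply Theorem~\ref{jam} three times, to $x$, to $y$, and to $x+y$, obtaining
\[
x \approx *\gamma(x), \qquad y \approx *\gamma(y), \qquad x+y \approx *\gamma(x+y).
\]
Next, I would invoke the fact that $\approx$ is a congruence with respect to sum (which is built into Definition~\ref{conte} of contextual equivalence, and was observed in the lemma following it) to deduce $x+y \approx *\gamma(x) + *\gamma(y)$ from the first two equivalences. Combined with the third, transitivity of $\approx$ yields
\[
*\gamma(x+y) \approx *\gamma(x) + *\gamma(y).
\]
Then Proposition~\ref{csum} gives $*\gamma(x) + *\gamma(y) \approx *(\gamma(x) \oplus \gamma(y))$, and one more application of transitivity produces
\[
*\gamma(x+y) \approx *(\gamma(x) \oplus \gamma(y)).
\]
Finally, since both sides are canonical hypergames, Lemma~\ref{ugu} allows us to pass from $\approx$ to equality of the underlying generalized Grundy numbers, giving $\gamma(x+y) = \gamma(x) \oplus \gamma(y)$.

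There is no real obstacle, since every step amounts to citing a named result from the preceding development. The only point requiring a bit of care is the use of congruence of $\approx$ with respect to sum: by Lemma~\ref{dd} the quantification over additive contexts reduces to contexts of the shape $[\ ]+z$, so $x \approx *\gamma(x)$ directly gives $x + z \Updownarrow *\gamma(x) + z$ for every $z$, and in particular (taking $z = y$ and then $z = *\gamma(y)$, and using transitivity of $\Updownarrow$ inside a further context) one obtains $x+y \approx *\gamma(x) + *\gamma(y)$, which is exactly the congruence step needed above.
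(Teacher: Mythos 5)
Your proposal is correct and follows essentially the same route as the paper: the paper derives this proposition directly from Theorem~\ref{jam} and Proposition~\ref{csum} (together with the congruence property of $\approx$ and the separation Lemma~\ref{ugu}), exactly mirroring the proof of Corollary~\ref{compo} in the well-founded case with $\sim$ replaced by $\approx$. Your explicit spelling-out of the congruence step and the final appeal to Lemma~\ref{ugu} only makes visible what the paper leaves implicit.
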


\begin{thm}[Full Abstraction]\label{fad}
For all $x,y \in \mathcal{J}$,
\[ \gamma(x)=\gamma(y) \ \Longleftrightarrow\ x\approx y \ .\]
\end{thm}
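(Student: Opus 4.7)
The plan is to derive the equivalence directly from the two deep results already established in the previous pages: Theorem~\ref{jam}, which states $x \approx *\gamma(x)$ for every impartial hypergame $x$, and Lemma~\ref{ugu}, which distinguishes canonical hypergames with different generalized Grundy numbers up to contextual equivalence. Together these say that $*\gamma(\cdot)$ is a semantic representative map whose fibres coincide exactly with $\approx$-classes; the theorem is then essentially a ``sandwich'' argument using the equivalence relation properties of $\approx$.

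First I would note that $\approx$ is an equivalence relation. Reflexivity and symmetry are immediate from its definition in terms of the symmetric relation $\Updownarrow$ on all additive contexts, and transitivity follows because $\Updownarrow$ itself is transitive (it is pointwise equality of the ``has a non-losing strategy for P'' predicates, for $P \in \{L,R,I,II\}$). With this in hand, the forward direction is trivial: if $\gamma(x) = \gamma(y)$ then $*\gamma(x)$ and $*\gamma(y)$ are literally the same canonical hypergame, so by two applications of Theorem~\ref{jam} and transitivity of $\approx$ we chain
\[
x \ \approx\ *\gamma(x) \ =\ *\gamma(y) \ \approx\ y.
\]

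For the converse direction, assume $x \approx y$. Again by Theorem~\ref{jam} we have $x \approx *\gamma(x)$ and $y \approx *\gamma(y)$; combining these with the hypothesis using symmetry and transitivity of $\approx$ yields $*\gamma(x) \approx *\gamma(y)$. Since both sides are canonical hypergames, Lemma~\ref{ugu} forces $\gamma(x) = \gamma(y)$, completing the proof.

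I do not foresee any real obstacle here: the content of the theorem has been absorbed into Theorem~\ref{jam} (the ``soundness'' half, which required the $\Phi$-bisimulation construction) and Lemma~\ref{ugu} (the ``separation'' half, via explicit distinguishing contexts). The proof at this stage is just a three-line equational manipulation, structurally identical to the corresponding Conway-games statement proved earlier for $g$ and $\sim$; the only thing worth flagging explicitly is the transitivity of $\approx$, which should be stated once before the two implications.
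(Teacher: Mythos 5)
Your proposal is correct and follows essentially the same route as the paper's own proof: both directions are obtained by chaining Theorem~\ref{jam} with transitivity of $\approx$, and the converse direction concludes with Lemma~\ref{ugu} to separate distinct canonical hypergames. Your explicit remark that transitivity of $\approx$ should be recorded (via transitivity of $\Updownarrow$) is a harmless refinement of what the paper leaves implicit.
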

\begin{proof}
 $(\Rightarrow)$ If  $\gamma(x)=\gamma(y)$, then $*\gamma(x) \approx *\gamma(y)$. Thus, using Theorem~\ref{jam}, we have
$x \approx *\gamma(x) \approx *\gamma(y) \approx y$.
\\ $(\Leftarrow)$  If $x\approx y$, by Theorem~\ref{jam}, we have  $*\gamma(x)\approx x\approx y \approx *\gamma(y)$, thus  
$*\gamma(x)\approx *\gamma(y)$. Hence, by Lemma~\ref{ugu},  $*\gamma(x)= *\gamma(y)$. 
\end{proof}

\subsubsection*{A famous motivating example: Traffic Jams.} 
For finite hypergames, the computation of the   function $\gamma$ of Definition~\ref{geng} is effective.
      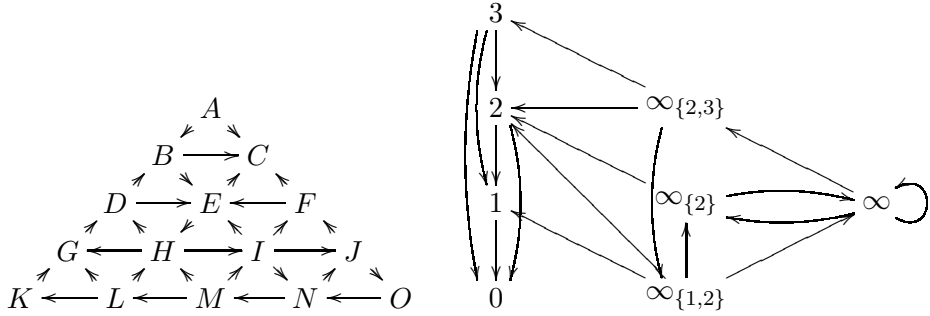
\begin{figure}
   \[
  \xymatrixrowsep={0.001pc} 
\xymatrixcolsep={0.0000001pc}
 \xymatrix@!=1.5pc{
 &&&&&&&&&& 
 3  \ar@/_/@<-1.2ex>[dddddd]  \ar@/_/@<-0.3ex>[dddd]   \ar[dd]  && &&  &&&&
 \\
 &&&&&&&&&&
  && && &&&&
 \\
 & & &  & A  \ar[ld]  \ar[rd]& & & & &
 &
   2   \ar@/^/@<0.7ex>[dddd]   \ar[dd] && && \infty_{\{2,3\}}  \ar[llll]  \ar[lllluu]  \ar@/_/@<-1.5ex>[dddd]   &&&&
  \\ 
 & & & B  \ar[rr] \ar[rd]& & C & & & &
 &
  && &&  &&&&
 \\
 & & D \ar[ru] \ar[rr] & & E \ar[ru] \ar[ld]& & F \ar[ll] \ar[lu]& & &
 &
   1 \ar[dd] && &&  \ar[uullll] \infty_{\{2\}}  \ar@/^/@<-0.3ex>[rrrr]      &&&&  \infty\ar@(dr,ur)[] \ar[uullll]  \ar@/^/@<0.3ex>[llll]  
 \\
 & G \ar[ru]& & H \ar[ll] \ar[lu] \ar[rr]& & I \ar[lu]\ar[ru]\ar[rr]\ar[rd]&& J \ar[lu]\ar[rd]& &
 &
  && &&  &&&&
 \\
 K \ar[ru]& & L \ar[ll]\ar[lu]\ar[ru]&& M\ar[ll]\ar[lu]\ar[ru] && N \ar[ll]\ar[ru] && O\ar[ll] & \hspace*{1.5cm}
 &
  0  && && \infty_{\{1,2\}} \ar[uurrrr]  \ar[uu]  \ar[lllluu]  \ar[lllluuuu]   &&&&
 } 
 \]
   \caption{A traffic jam game graph and the corresponding marked hypergame.}
   \label{gra} 
    \end{figure}
    Following \cite{Con76}, 
 consider the  concrete game,  corresponding to  the game graph  in the left-hand part of Fig.~\ref{gra}. 
Think of the graph as the map of a fictitious country, where nodes correspond to  towns, and edges represent
motorways between them. The initial position of the game corresponds to the town where  a vehicle is initially placed.
Each player has to move such vehicle to a next town along the motorway. If this is not possible, then the player loses.
The hypergame corresponding to the game graph in the left-hand part of Fig.~\ref{gra} appears in the righ-thand part, together with 
the marking given by $\gamma$. Positions C,D,K collapse to 0, positions A,E,G collapse to 1, positions B,F,H collapse to 2, position L corresponds to 3,
positions N,O collapse to $\infty$, position I corresponds to $\infty_{\{ 1,2\}}$, position J corresponds to $\infty_{\{ 2\}}$, and position M corresponds to
$\infty_{\{ 2,3\}}$.
 By Lemma~\ref{lc}, we can immediately check  which player has a non-losing strategy in any original position.
 
 Now, let us consider a version of the above game with more than one vehicle, and where at each step the current player chooses a vehicle to move, 
 assuming that each town is big enough to accommodate all vehicles at once.  This game corresponds to the sum of the games
 with single vehicles. In order to compute non-losing strategies for the sum game, one  can use the {generalized Nim sum},
 as defined in Definition~\ref{gns}.
 If, for example,  we have vehicles at positions H and I in Fig.~\ref{gra}, then the game is winning for player I, since
H corresponds to  2 in the hypergame and I to  $\infty_{\{1,2\}}$, and 
$2\oplus  \infty_{\{1,2\}} =\infty_{\{ 2\oplus 1, 2\oplus 2\}} = \infty_{\{3,0\}}$. While a game with vehicles in I and J is a draw, since 
J corresponds to $\infty_{\{ 2\}}$ and 
$\infty_{\{1,2\}} \oplus
 \infty_{\{2\}} =\infty$. The game where infinitely many vehicles are parked in non-0 positions  is a draw.


\subsubsection*{Efficient characterizations of the contextual equivalence.}
Using the above results, one can check  that, in the case of impartial hypergames, we can simplify the class of contexts in the definition of the contextual equivalence $\approx$,
by considering only \emph{well-founded canonical hypergames}:

\begin{cor} \label{corv} Let $x,y \in  \mathcal{J}$, then
\[ x\approx y \ \Longleftrightarrow \  \forall \! *\!\alpha \mbox{ well-founded canonical game. } x+*\alpha \Updownarrow y+ *\alpha \ .\]
\end{cor}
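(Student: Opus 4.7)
The plan is to prove the two implications separately, with the forward direction being immediate and the backward direction going through the generalized Grundy semantics.

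The forward direction is a direct specialization of Definition~\ref{conte}: contexts of the shape $[\ ]+*\alpha$ with $*\alpha$ a well-founded canonical game are a subclass of the additive contexts, so if $x\approx y$ then in particular $x+*\alpha \Updownarrow y+*\alpha$ for all such $*\alpha$.

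For the converse, I would route through the Full Abstraction Theorem~\ref{fad}: to obtain $x\approx y$ it suffices to show $\gamma(x)=\gamma(y)$. First, use Theorem~\ref{jam} together with the fact that $\approx$ is a congruence refining $\Updownarrow$ to translate the assumption $x+*\alpha \Updownarrow y+*\alpha$ into
\[ *\gamma(x) + *\alpha \;\Updownarrow\; *\gamma(y) + *\alpha \]
for every well-founded canonical game $*\alpha$ (the case $*\alpha = 0$ in particular gives $*\gamma(x) \Updownarrow *\gamma(y)$). Then I would argue by contraposition using Lemma~\ref{ugu}: if $\gamma(x)\neq \gamma(y)$, Lemma~\ref{ugu} supplies a context $D[\ ]=[\ ]+z$ with $D[*\gamma(x)]\not\Updownarrow D[*\gamma(y)]$, so producing such a $z$ which is a well-founded canonical game yields the desired contradiction.

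The main step, and the only one requiring care, is therefore verifying that the distinguishing contexts constructed in the proof of Lemma~\ref{ugu} can always be taken to be of the form $[\ ]+*\alpha$ with $\alpha$ an ordinal Grundy number (with the trivial context realized by $\alpha=0$). Re-reading the case split of that lemma confirms this: in the ordinal/ordinal case one uses an ordinal Nim game; in the $\infty_H$/$\infty_K$ case the distinguishing summand is $*\alpha$ for some $\alpha\in H\triangle K$ (and $\alpha=0$ collapses to the trivial context); in the mixed ordinal/$\infty_K$ case one uses either the trivial context or $*\alpha$. No $*\infty_K$ itself ever appears as a distinguishing summand, so the contexts used are exactly well-founded canonical games, completing the reduction and therefore the proof.
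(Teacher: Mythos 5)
Your proposal is correct and follows essentially the same route as the paper: the forward direction is immediate, and the converse goes by contraposition through Full Abstraction (Theorem~\ref{fad}), Theorem~\ref{jam} and congruence of $\approx$, reducing to a case analysis on $\gamma(x),\gamma(y)$. The paper leaves that case analysis implicit (``easily shown''), whereas you make it explicit by observing that the distinguishing contexts in the proof of Lemma~\ref{ugu} are always of the form $[\ ]+*\alpha$ with $\alpha$ an ordinal --- a correct and worthwhile verification, but not a different argument.
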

\begin{proof}
$(\Rightarrow)$ Immediate.
\\  $(\Leftarrow)$ By Theorem~\ref{fad},  it is sufficient to show that, if $\gamma(x) \neq \gamma(y)$, then there exists $\alpha\in \mathit{Ord}$ such that $x+ *\alpha\not\Updownarrow  y+ *\alpha$.
To this aim, by Theorem~\ref{jam} and the fact that  $\approx$ is a congruence, it is sufficient to show that $*\gamma(x) + *\alpha \not \Updownarrow *\gamma(y) + *\alpha$,
for some $\alpha\in \mathit{Ord}$. This can be easily shown by case analysis on $\gamma(x), \gamma(y)$, using the hypothesis $\gamma(x)\neq \gamma(y)$.
\end{proof}

The following definition allows us to identify ``well-behaved'' hypergames, and to formulate an alternative efficient
 characterization of the contextual equivalence on impartial hypergames:

\begin{defi}\label{tren}
Let $x$ be a hypergame. We define $x\Downarrow $ iff $x-x \Updownarrow 0$.
\end{defi}

Notice  that the above definition is given  for \emph{all} hypergames. In  the impartial case, $x-x$ coincides with $x+x$, and being well-behaved amounts to having Grundy number in $\mathit{Ord}$. 
 Conway's games are clearly well-behaved. The following characterization of $\approx$ on impartial hypergames holds:

\begin{prop}\label{corcon}
Let $x,y \in \mathcal{J}$, then 
\[ x\approx y \ \Longleftrightarrow\ (x\Downarrow \ \wedge\ y\Downarrow \ \wedge \ x+y \Updownarrow 0) \ \vee \ 
(x\not \Downarrow \ \wedge\ y\not\Downarrow \ \wedge\ \forall *\alpha. \ x+*\alpha \Updownarrow y+*\alpha ) \ .\]
\end{prop}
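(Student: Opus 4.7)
The strategy is to translate everything to the generalized Grundy semantics $\gamma$ and reduce both directions to arithmetic on generalized Grundy numbers. The key preliminary is that, on impartial hypergames, $x\Downarrow$ is equivalent to $\gamma(x)\in\mathit{Ord}$. Indeed, since $-x = x$ up to hyperbisimilarity for impartial $x$, we have $x-x = x+x \approx *(\gamma(x)\oplus\gamma(x))$ by Theorem~\ref{jam} and Proposition~\ref{csum}. If $\gamma(x)=\alpha\in\mathit{Ord}$ then $\alpha\oplus\alpha = 0$, so $x+x\approx 0$ and hence $x+x\Updownarrow 0$; if $\gamma(x)=\infty_K$, then by Definition~\ref{gns} we have $\infty_K\oplus\infty_K = \infty$, so $x+x\approx *\infty$, and by Lemma~\ref{lc} the game $*\infty$ has non-losing strategies for both players while $0$ admits none for I, so $x+x\not\Updownarrow 0$.

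For $(\Rightarrow)$, assume $x\approx y$. By Theorem~\ref{fad}, $\gamma(x)=\gamma(y)$. If this common value lies in $\mathit{Ord}$, then $x\Downarrow$ and $y\Downarrow$ by the preliminary, and moreover $x+y \approx *(\gamma(x)\oplus\gamma(y)) = *0 = 0$, hence $x+y\Updownarrow 0$. If instead $\gamma(x)=\gamma(y)=\infty_K$, then $x\not\Downarrow$ and $y\not\Downarrow$, and the condition $\forall *\alpha.\ x+*\alpha\Updownarrow y+*\alpha$ is immediate from Lemma~\ref{dd} applied to $\approx$. For $(\Leftarrow)$ Case~1, where $\gamma(x)=\alpha$ and $\gamma(y)=\beta$ are ordinals and $x+y\Updownarrow 0$, Theorem~\ref{jam} together with Proposition~\ref{csum} and congruence of $\approx$ give $x+y\approx *(\alpha\oplus\beta)$, hence $*(\alpha\oplus\beta)\Updownarrow 0$; by Lemma~\ref{lc}, this forces $\alpha\oplus\beta = 0$ (otherwise $*(\alpha\oplus\beta)$ would be a win for I, while $0$ admits no non-losing I-strategy), so $\alpha=\beta$ and thus $x\approx y$ by Theorem~\ref{fad}.

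The main substance lies in $(\Leftarrow)$ Case~2: suppose $x\not\Downarrow$, $y\not\Downarrow$, and $\forall *\alpha.\ x+*\alpha\Updownarrow y+*\alpha$. Then by the preliminary $\gamma(x)=\infty_K$ and $\gamma(y)=\infty_H$; I aim to show $K=H$. Suppose toward contradiction that some $\alpha$ belongs to exactly one of $K, H$; without loss of generality $\alpha\in K\setminus H$. Using Theorem~\ref{jam}, congruence of $\approx$, and Proposition~\ref{csum}, we obtain $x+*\alpha\approx *\infty_{K'}$ with $K' = \{k\oplus\alpha \mid k\in K\}$, and similarly $y+*\alpha\approx *\infty_{H'}$ with $H' = \{h\oplus\alpha \mid h\in H\}$. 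Since $\alpha\in K$, we have $0 = \alpha\oplus\alpha\in K'$; conversely $0\notin H'$, for otherwise $h\oplus\alpha = 0$ for some $h\in H$ would give $h=\alpha$, contradicting $\alpha\notin H$. By Lemma~\ref{lc}, $*\infty_{K'}$ is a win for I (and hence admits no non-losing II-strategy), while $*\infty_{H'}$ has non-losing strategies for both players, so $*\infty_{K'}\not\Updownarrow *\infty_{H'}$. Since $\approx$ refines $\Updownarrow$, this contradicts $x+*\alpha\Updownarrow y+*\alpha$. Hence $K=H$, so $\gamma(x)=\gamma(y)$, and $x\approx y$ by Theorem~\ref{fad}. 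The main obstacle is fluent bookkeeping of the translation between $x,y$ and their canonical forms $*\gamma(x), *\gamma(y)$ across $\approx$; this is handled uniformly through Theorem~\ref{jam} and congruence of $\approx$, with the edge case $\alpha = 0$ in Case~2 causing no trouble since then $K' = K$ and $H' = H$.
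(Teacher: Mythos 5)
Your proof is correct, and its overall skeleton --- establish that $x\Downarrow$ iff $\gamma(x)\in\mathit{Ord}$, split on whether the common Grundy value is an ordinal or an $\infty_K$, and shuttle between $x,y$ and their canonical forms via Theorem~\ref{jam}, Proposition~\ref{csum} and full abstraction (Theorem~\ref{fad}) --- matches the paper's. The differences are in the middle steps. For the well-founded case the paper argues through Conway's order-theoretic machinery: $x\approx y$ iff $*\gamma(x)\sim *\gamma(y)$ (Theorem~\ref{aps}) iff $*\gamma(x)+*\gamma(y)\sim 0$ (Proposition~\ref{unoh}(iii)) iff $x+y\Updownarrow 0$; you instead settle it by Nim-sum arithmetic, reducing to $\alpha\oplus\beta=0$ iff $\alpha=\beta$ and reading off the verdict from Lemma~\ref{lc}. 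Both are sound; the paper's route leans on the partizan theory of $\sim$, yours stays entirely inside the impartial Grundy calculus. For the non-wellfounded case the paper simply cites Corollary~\ref{corv}, whose own proof is left as ``case analysis''; you inline the one sub-case actually needed, exhibiting the separating context $[\ ]+*\alpha$ for $\alpha\in K\setminus H$ via the shift $K'=\{k\oplus\alpha\mid k\in K\}$ and the observation that $0\in K'$ but $0\notin H'$ flips the Lemma~\ref{lc} classification. This makes your version somewhat more self-contained than the paper's at that point, at the cost of partially re-proving Lemma~\ref{ugu}. One cosmetic remark: in the $(\Rightarrow)$ direction for the $\infty$ case, the condition $\forall *\alpha.\ x+*\alpha\Updownarrow y+*\alpha$ follows already from the definition of $\approx$ (Definition~\ref{conte}); the appeal to Lemma~\ref{dd} is unnecessary, though harmless.
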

\begin{proof}
Since $x\Downarrow$ iff $\gamma(x)\in\mathit{Ord}$, it is easy to check that, if  $x\approx y$, then either $x\Downarrow \ \wedge\ y\Downarrow $ or $x\not \Downarrow \ \wedge\ y\not\Downarrow $. Let $\gamma(x), \gamma(y)\in\mathit{Ord}$.
Then $x\approx y$ iff $*\gamma(x)\approx *\gamma(y)$, by Theorem~\ref{jam}, iff
$*\gamma(x)\sim *\gamma(y)$, by Theorem~\ref{aps},  iff $*\gamma(x)+ *\gamma(y)\sim 0$, by Proposition~\ref{unoh}(iii), iff
$*\gamma(x)+ *\gamma(y)\Updownarrow 0$, since $*\gamma(x), *\gamma(y)$ are Conway's games,
iff $x+y\Updownarrow 0$, by Theorem~\ref{jam} and congruence of $\approx$.
If both $x\not \Downarrow $ and $ y\not\Downarrow $, then the thesis follows from Corollary~\ref{corv}.
\end{proof}

According to the above characterization,
$\approx$ discriminates between well-behaved and non-wellbehaved hypergames. 
Checking whether two well-behaved hypergames $x,y$ are equivalent is particularly efficient, since a single game, $x+y$, has to be considered. For 
non-wellbehaved  
 hypergames, it is sufficient to test $x,y$ in well-founded canonical contexts.
Efficient characterizations of the contextual equivalence for the whole class of hypergames, based on Definition~\ref{tren}, are explored in~\cite{HLR11}.

\section{Comparison with Related Work and Directions for Future Work}\label{final}
\subsubsection*{Loopy games.} 
The theory  of general mixed loopy games, where infinite plays can be either winning for L or for R, or draws, appears very complex.
For instance, already for the case of fixed impartial games (where no draws are admitted), determinacy fails if the Axiom of Choice is 
assumed (see \emph{e.g.} \cite{Jech03}, Lemma 33.1). Notice that some encoding is necessary to cast in Conway's setting  such a result.
Namely, the straightforward representation of  the infinite game on $\omega^{\omega}$  would collapse to $\infty$, \emph{i.e.}
 the self-singleton.

In \cite{BCG82}, Chapter 11, 
a $\gtrsim$ relation is introduced for fixed loopy games, which  is proved to be transitive. It  allows to  approximate  the behavior of a loopy game, possibly with
finite games. However, this technique works only if certain fixpoints exist. Such theory has been later further developed and revisited in  other works, see \emph{e.g.} \cite{San02,San02a}.

Our theory allows to deal with the class of games where infinite plays are draws, namely free games in \cite{BCG82} terminology, in a quite general and comprehensive way.
In~\cite{HLR11a}, we use algebraic and coalgebraic methods to account  for more general classes of games, such as mixed loopy games.

\subsubsection*{Games and automata.}
The notion of hypergame that we have investigated in this paper is related to 
 the notion of infinite game considered in the automata theoretic approach, originating in work
of Church, B\"uchi, McNaughton and Rabin (see \emph{e.g.} \cite{Tho02}).
In this approach, games are defined by the  graphs of positions.  L and R have  different positions, in general, but 
 L is always taken as first player. Only games with infinite  plays are usually considered there. 
 These games are fixed, \emph{i.e.} no draws are admitted.
Winning strategies are connected with automata, and also the problem of a (efficient) computation of such strategies
is considered. Recently,
non-losing strategies have been considered also in this setting,   \emph{e.g.}  in the context of model checking
for the $\mu$-calculus, see \cite{GLLS07}. We plan to pursue this approach. Notice that also in this case some encoding
is necessary for representing such games as Conway's games to avoid the collapse into the self-singleton $\infty$.

\subsubsection*{Games for semantics of logics and programming languages.} 
Game Semantics was introduced in the early 90's in the construction of the first fully complete model of Classical Multiplicative Linear Logic \cite{AJ94},
 and  of the first syntax-independent fully abstract 
model of PCF, by Abramsky-Jagadeesan-Malacaria, Hyland-Ong, and Nickau, independently.
  Game Semantics has been used for modeling 
a variety of programming languages and logical systems, and more recently
 for  applications in computer-assisted verification and program analysis, \cite{AGMO03}.
In Game Semantics, 2-player games are considered, which can be encoded as Conway games, despite the different presentation. 
For more details see \cite{AJ94}.  The key difference between the Game Semantics approach and our approach lies in 
the crucial definition of Conway sum, which does not necessarily imply that projections in component games are correct plays,
\emph{i.e.} strictly alternating. On the contrary, all the operations on games, \emph{e.g.} tensor product, linear implication, satisfy
this condition. 
Furthermore, in Game Semantics, infinite plays are
always considered as winning for one of the two players, as in the  case of  fixed games. A general framework based on coalgebras,  encompassing both
Conway's games and games used in Game Semantics, as well as other games, is presented in~\cite{HLR11a}. Such unifying framework allows us to
capture the common nature of games arising in different settings.

\subsubsection*{Traced categories of games.} In \cite{Joy77}, Joyal showed how Conway (finite) games and winning strategies can be endowed with
 a structure of a traced category. This  provides an alternate account of the contextual equivalence $\approx$; namely, existence of a morphism between the corresponding games. When hypergames and non-losing strategies are considered, 
  Joyal's
 categorical construction apparently does not work, since we lose closure under composition (this is related to the fact that our relation $\mge$ is 
 not transitive). In~\cite{HLR11}, generalizations of Joyal's category to hypergames are investigated, based on the new notion \emph{balanced} non-losing 
 strategy. Interestingly, when this construction is extended to mixed loopy games (see~\cite{HLR11a}), the categorical equivalence coincides with the loopy
 equivalence of~\cite{BCG82}. However, a categorical construction capturing the contextual equivalence on hypergames is still missing.
 
\subsubsection*{Games and coalgebras.} In  \cite{BM96}, a simple coalgebraic notion of game is introduced and utilized. It is folklore that 
 bisimilarity can be  defined  as a 2-player game, where one player tries to prove
bisimilarity, while  the other tries to disprove it,  see \emph{e.g.} 
\cite{BM96}. After some encoding, this game turns out to be a fixed game in the sense of \cite{BCG82}, where infinite plays are
winning for the 
player who tries to prove bisimilarity.

Notice that all our notions and methods could be given on graphs rather than graphs up-to bisimilarity. The former approach  can be convenient in applications, because computing the minimal graph with respect to bisimilarity could be complex, however the latter 
approach is conceptually more perspicuous. 

\subsubsection*{Conumbers.} Conway's numbers \cite{Con76} amount  to   Conway's games $x$ such that no member of $X^L$ is 
$\mge$ any member of $X^R$, and all positions of a number are numbers. Thus, once we have defined hypergames and the relations 
$\mge$, $\nmge$,
 we can define the subclass of
\emph{conumbers},
 together with suitable operations extending those on numbers. It would be interesting to investigate the properties
 of such a class of hypergames. An intriguing point is whether it is possible to define a partial order, since, as seen in this paper,
  the relation $\mge$  is 
 not transitive on hypergames.

\subsubsection*{Compound games.} 
In this paper, we have considered the (disjunctive) sum for building compound games. However, there are several different  ways of combining games, which are analyzed in \cite{Con76}, Chapter 14, for the case of finite games. It would be interesting to investigate such theory of compound games in the setting of hypergames. Similarly
for the mis\`ere situation, where the winner is the player who \emph{does not} perform the last move.

\subsubsection*{Game equivalences.} 
Equivalences on games and hypergames have been extensively investigated in this paper. The contextual
equivalence has been introduced and studied.  In particular,
Proposition~\ref{corcon} gives an alternative efficient characterization of the contextual equivalence for impartial hypergames.
For general partizan hypergames, where we do not have an analogue of Grundy semantics, alternative efficient characterizations of contextual equivalence would
be even more useful. This issue is investigated  in~\cite{HLR11}.

  \subsubsection*{Towards canonical forms for partizan games.}
 The theory of impartial games, as shown in Section~\ref{img} is quite nice. In particular, impartial games admit canonical forms given by generalized Grundy numbers. A question
which naturally arises is about canonical forms for general partizan games. 
In \cite{Con76}, Chapter 10, Conway studies  \emph{canonical forms} of general partizan games, and provides a  technique for
reducing a game to its canonical form, which works for finite  games, \emph{i.e.} games with only finitely many positions. This technique consists in simplifying 
a game, by 
eliminating all \emph{dominated} and \emph{reversible} positions. This  provides canonical forms of Conway's games, which
one can show  to coincide with Grundy numbers, in the case of impartial games. The extension of the above technique to infinite Conway's  games and more generally
to hypergames appears to be  problematic. 
A naive application of the same procedure to impartial hypergames fails, even in the case
of  hypergames representable by a finite graph and with finite Grundy number. We leave it as
an open problem to investigate a  generalization of the simplification procedure for   hypergames.


\begin{thebibliography}{DDDD88}    

\bibitem[AGMO03]{AGMO03} S. Abramsky, D.R. Ghica, A.S. Murawski and C.-H.L. Ong. Applying Game Semantics to Compositional Software Modeling and Verifications,  Proc. of  \emph{TACAS 2004}, Springer LNCS {\bf 2988}, 2004, 421--435.
                                      
                                      
\bibitem[AJ94]{AJ94} S. Abramsky, R. Jagadesaan. Games and Full Completeness for Multiplicative Linear logic,  Journal of Symbolic Logic
 {\bf 59}, 1994, 543--574.                                      
                                      
\bibitem[Acz88]{Acz88}
 P. Aczel.  \emph{Non-wellfounded sets}, CSLI Lecture Notes {\bf 14},
 Stanford 1988.
 
 \bibitem[BM96]{BM96} J. Barwise, L. Moss. Vicious Circles, CSLI Lecture Notes {\bf 60}, Stanford 1996.
 
 \bibitem[BCG82]{BCG82} E. Berlekamp, J. Conway, R. Guy. Winning Ways, Academic Press, 1982.

\bibitem[Con76]{Con76} J.H. Conway. On Numbers and Games, second edition, A K Peters Ltd,   2001 (first edition by Academic Press, 1976). 

\bibitem[FH83]{FH83} M. Forti, F. Honsell. Set-theory with 
free construction principles, \emph{Ann. Scuola Norm. Sup. Pisa}, 
Cl. Sci. (4){\bf 10}, 1983, 493--522.

\bibitem[FR01]{FR01} A. Fraenkel, O. Rahat. Infinite cyclic impartial games, \emph{Theoretical Computer Science},  {\bf 252}, 2001, 13--22. 


\bibitem[GLLS07]{GLLS07} O. Grumberg, M. Lange, M. Leucker, S. Shoham. When Not Losing Is Better than Winning: Abstraction and Refinement
for the Full $\mu$-calculus, Information and Computation {\bf 205(8)}, 2007, 1130--1148.

\bibitem[Gru39]{Gru39} P.M. Grundy. Mathematics and games, \emph{Eureka}, 
{\bf 2}, 1939, 6--8.


\bibitem[HL09]{HL09} F. Honsell, M. Lenisa. Conway Games, coalgebraically, Proc. of \emph{CALCO'09},  Springer LNCS {\bf 5728}, 2009, 300--316.

\bibitem[HLR11]{HLR11} F. Honsell, M. Lenisa, R. Redamalla. Equivalences and Congruences on Infinite Conway's Games, submitted for publication, 2011.

\bibitem[HLR11a]{HLR11a} F. Honsell, M. Lenisa, R. Redamalla. A General Framework for Games, submitted for publication, 2011.

\bibitem[Jech03]{Jech03} T. Jech. Set Theory, third edition, Springer, 2003.

\bibitem[Joy77]{Joy77}  A. Joyal. Remarques sur la Theorie des Jeux
a deux personnes, Gazette des sciences mathematiques du 
Quebec  {\bf 1(4)}, 1977 (English translation by R. Houston, 2003).
  

\bibitem[San02]{San02} L. Santocanale, Free $\mu$-lattices, \emph{J. Pure Appl. Algebra} {\bf 168}, 2002, 227--264.

\bibitem[San02a]{San02a} L. Santocanale,  $\mu$-bicomplete categories and parity games, \emph{Theor. Inform. Appl.} {\bf 36(2)}, 2002, 195--227.

\bibitem[Smi66]{Smi66} C.A.B. Smith. Graphs and composite games,   \emph{J. Combin. Th.}
{\bf 1}, 1966, 51--81.

\bibitem[Spra35]{Spra35} R.P. Sprague. \"Uber mathematische Kampfspiele,  \emph{Tohoku Math. J.}
{\bf 41}, 1935-6, 438--444.


\bibitem[Tho02]{Tho02} W. Thomas. Infinite games and verification, Proc. of  \emph{CAV'02}, Springer LNCS {\bf 2404}, 2002, 58--64. 



\end{thebibliography}
\end{document}